\newtheorem{definition}{Definition}[section]
\newtheorem{proposition}[definition]{Proposition}
\newtheorem{lemma}[definition]{Lemma}
\newtheorem{theorem}{Theorem}
\newtheorem{corollary}[definition]{Corollary}
\def\squareforqed{\hbox{\rlap{$\sqcap$}$\sqcup$}}
\def\qed{\ifmmode\squareforqed\else{\unskip\nobreak\hfil
\penalty50\hskip1em\null\nobreak\hfil\squareforqed
\parfillskip=0pt\finalhyphendemerits=0\endgraf}\fi}
\def\endenv{\ifmmode\;\else{\unskip\nobreak\hfil
\penalty50\hskip1em\null\nobreak\hfil\;
\parfillskip=0pt\finalhyphendemerits=0\endgraf}\fi}
\newenvironment{proof}[1][Proof]{\noindent \textbf{{#1~} }}{\qed}
\newcommand{\bra}[1]{\langle #1|}
\newcommand{\ket}[1]{|#1\rangle}
\newcommand{\braket}[2]{\langle #1|#2\rangle}
\newcommand{\tr}{\text{tr}}
\newcommand{\id}{\mathbb{I}}
\mathchardef\ordinarycolon\mathcode`\:
\def\vcentcolon{\mathrel{\mathop\ordinarycolon}}
\newcommand{\nc}{\newcommand}
\nc{\rnc}{\renewcommand} \nc{\beq}{\begin{equation}}
\nc{\eeq}{{\end{equation}}} \nc{\bea}{\begin{eqnarray}}
\nc{\eea}{\end{eqnarray}} \nc{\beqa}{\begin{eqnarray}}
\nc{\eeqa}{\end{eqnarray}} \nc{\lbar}[1]{\overline{#1}}
\nc{\conv}{\operatorname{conv}}
\nc{\smfrac}[2]{\mbox{$\frac{#1}{#2}$}} \nc{\Tr}{\operatorname{Tr}}
\nc{\ox}{\otimes} \nc{\dg}{\dagger} \nc{\dn}{\downarrow}
\nc{\lmax}{\lambda_{\text{max}}}
\nc{\lmin}{\lambda_{\text{min}}}
\nc{\cA}{{\cal A}} \nc{\cB}{{\cal B}} \nc{\cC}{{\cal C}}
\nc{\cD}{{\cal D}} \nc{\cE}{{\cal E}} \nc{\cF}{{\cal F}}
\nc{\cG}{{\cal G}} \nc{\cH}{{\cal H}} \nc{\cI}{{\cal I}}
\nc{\cJ}{{\cal J}} \nc{\cK}{{\cal K}} \nc{\cL}{{\cal L}}
\nc{\cM}{{\cal M}} \nc{\cN}{{\cal N}} \nc{\cO}{{\cal O}}
\nc{\cP}{{\cal P}} \nc{\cQ}{{\cal Q}} \nc{\cR}{{\cal R}} \nc{\cS}{{\cal S}}
\nc{\cT}{{\cal T}} \nc{\cU}{{\cal U}} \nc{\cV}{{\cal V}}
\nc{\cX}{{\cal X}} \nc{\cW}{{\cal W}} \nc{\cZ}{{\cal Z}}
\nc{\CA}{{\cal A}} \nc{\CB}{{\cal B}} \nc{\CC}{{\cal C}}
\nc{\CD}{{\cal D}} \nc{\CE}{{\cal E}} \nc{\CF}{{\cal F}}
\nc{\CG}{{\cal G}} \nc{\CH}{{\cal H}} \nc{\CI}{{\cal I}}
\nc{\CJ}{{\cal J}} \nc{\CK}{{\cal K}} \nc{\CL}{{\cal L}}
\nc{\CM}{{\cal M}} \nc{\CN}{{\cal N}} \nc{\CO}{{\cal O}}
\nc{\CP}{{\cal P}} \nc{\CQ}{{\cal Q}} \nc{\CR}{{\cal R}} \nc{\CS}{{\cal S}}
\nc{\CT}{{\cal T}} \nc{\CU}{{\cal U}} \nc{\CV}{{\cal V}}
\nc{\CX}{{\cal X}} \nc{\CW}{{\cal W}} \nc{\CZ}{{\cal Z}}
\nc{\csupp}{{\operatorname{csupp}}}
\nc{\qsupp}{{\operatorname{qsupp}}} \nc{\var}{\operatorname{var}}
\nc{\rar}{\rightarrow} \nc{\lrar}{\longrightarrow}
\nc{\poly}{\operatorname{poly}}
\nc{\polylog}{\operatorname{polylog}} \nc{\Lip}{\operatorname{Lip}}
\nc{\mb}[1]{\mathbf{#1}}
\nc{\ep}{\epsilon}
\nc{\Om}{\Omega}
\nc{\wt}[1]{\widetilde{#1}}
\def\>{\rangle}
\def\<{\langle}
\nc{\glneq}{{\raisebox{0.6ex}{$>$}  \hspace*{-1.8ex} \raisebox{-0.6ex}{$<$}}}
\nc{\gleq}{{\raisebox{0.6ex}{$\geq$}\hspace*{-1.8ex} \raisebox{-0.6ex}{$\leq$}}}
\nc{\RR}{{{\mathbb R}}}
\nc{\FF}{{{\mathbb F}}}
\nc{\HH}{{{\mathbb H}}}
\nc{\NN}{{{\mathbb N}}}
\nc{\ZZ}{{{\mathbb Z}}}
\nc{\PP}{{{\mathbb P}}}
\nc{\QQ}{{{\mathbb Q}}}
\nc{\UU}{{{\mathbb U}}}
\nc{\WW}{{{\mathbb W}}}
\nc{\EE}{{{\mathbb E}}}
\rnc{\SS}{{{\mathbb S}}}
\nc{\vholder}[1]{\rule{0pt}{#1}}
\nc{\wh}[1]{\widehat{#1}}
\nc{\h}[1]{\widehat{#1}}
\nc{\ob}[1]{#1}
\def\beq{\begin {equation}}
\def\eeq{\end {equation}}
\nc{\eq}[1]{Eq.~(\ref{eq:#1})} \nc{\eqs}[2]{Eqs.~(\ref{eq:#1}) and
(\ref{eq:#2})}
\nc{\eqn}[1]{Eq.~(\ref{eqn:#1})}
\nc{\eqns}[2]{Eqs.~(\ref{eqn:#1}) and (\ref{eqn:#2})}
\nc{\region}{\cS\cW}
\begin{document}

\title{{\Large A Generalization of Quantum Stein's Lemma}}

\author{Fernando G.S.L. Brand\~ao}
\email{fernando.brandao@imperial.ac.uk}
\affiliation{Institute for Mathematical Sciences, Imperial
College London, London SW7 2BW, UK}
\affiliation{QOLS, Blackett Laboratory, Imperial College
London, London SW7 2BW, UK}

\author{Martin B. Plenio}
\email{m.plenio@imperial.ac.uk}
\affiliation{Institute for Mathematical Sciences, Imperial
College London, London SW7 2BW, UK}
\affiliation{QOLS, Blackett Laboratory, Imperial College
London, London SW7 2BW, UK}

%\date{18 October 2006}

\begin{abstract}

Given many independent and identically-distributed (i.i.d.) copies of 
a quantum system described either by the state $\rho$ or $\sigma$ (called null 
and alternative hypotheses, respectively), what is the optimal measurement 
to learn the identity of the true state? In asymmetric 
hypothesis testing one is interested in minimizing the probability of mistakenly 
identifying $\rho$ instead of $\sigma$, while requiring that the 
probability that $\sigma$ is identified in the place of $\rho$ is bounded 
by a small fixed number. Quantum Stein's Lemma identifies the asymptotic 
exponential rate at which the specified error probability tends to zero 
as the quantum relative entropy of $\rho$ and $\sigma$.

We present a generalization of quantum Stein's Lemma to the situation in 
which the alternative hypothesis is formed by a family of states, which 
can moreover be non-i.i.d.. We consider sets of states which satisfy a 
few natural properties, the most important being the closedness under 
permutations of the copies. We then determine the error rate function in a very similar fashion to quantum Stein's 
Lemma, in terms of the quantum relative entropy. 

Our result has two applications to entanglement theory. First it gives an operational 
meaning to an entanglement measure known as regularized relative entropy 
of entanglement. Second, it shows that this measure is faithful, being 
strictly positive on every entangled state. This implies, in particular, 
that whenever a multipartite state can be asymptotically converted into 
another entangled state by local operations and classical communication, 
the rate of conversion must be non-zero. Therefore, the operational 
definition of multipartite entanglement is equivalent to its mathematical 
definition.

\end{abstract}

\maketitle

\parskip .75ex

%%%%%%%%%%%%%%%%%%%%%%%%%%%%%%%%%%%%%%%%%%%%%%%%%%%%%%%%%%%%%%%%%%%%%%%%

\section{Introduction}

Hypothesis testing refers to a general set of tools in statistics and 
probability theory for making decisions based on experimental data from 
random variables. In a typical scenario, an experimentalist is faced 
with two possible hypotheses and must decide based on experimental 
observation which one was actually realized. There are two types 
of errors in this process, corresponding to mistakenly identifying 
one of the two options when the other should have been detected. A 
central task in hypothesis testing is the development of optimal 
strategies for minimizing such errors and the determination of compact 
formulae for the minimum error probabilities. 

Substantial progress has been achieved both in the classical and quantum 
settings for i.i.d processes \cite{CT91, Che52, CL71, Bla74, HP91, ON00, Hay02, OH04, NS06, ACM+07, Nag06, NH07, ANSV07, Hay07}. The non-i.i.d. case, however, has proven 
harder and much less is known. The main result of this paper is a 
particular instance of quantum hypothesis testing of non-i.i.d. sources 
for which the optimal separation rate can be fully determined. To the 
best of the authors knowledge, the complete solution of such a problem was not 
known even in the classical case. 

Suppose we have access to a source that generates independent and 
identically-distributed random variables according to one of two 
possible probability distributions. Our aim is to decide which 
probability distribution is the true one. In the quantum generalization 
of the problem, we are faced with a source that emits several i.i.d. 
copies of one of two quantum states $\rho$ and $\sigma$, and we should 
decide which of them is being produced. Since the quantum setting also 
encompasses the classical, we will focus on the former.

In order to learn the identity of the state the observer measures a two 
outcome POVM $\{ A_n, \id - A_n \}$ given $n$ realizations of the unknown 
state. If he obtains the outcome associated to $A_n$ ($\id - A_n$) then 
he concludes that the state was $\rho$ ($\sigma$). The state $\rho$ is 
seen as the null hypothesis, while $\sigma$ is the alternative hypothesis. 
There are two types of errors: 
\begin{itemize}
        \item Type I: The observer finds that the state was $\sigma$, when 
        in reality it was $\rho$. This happens with probability 
        $\alpha_n(A_n) := \tr(\rho^{\otimes n}(\id - A_n))$.
        \item Type II: The observer finds that the state was $\rho$, when 
        it actually was $\sigma$. This happens with probability 
        $\beta_n(A_n) := \tr(\sigma^{\otimes n}A_n)$.
\end{itemize}
There are several distinct settings that might be considered, depending 
on the importance we attribute to the two types of errors 
\cite{CT91, Che52, CL71, Bla74, HP91, ON00, Hay02, OH04, NS06, ACM+07, Nag06, NH07, ANSV07, Hay07}. 

In \textit{asymmetric hypothesis testing}, the probability 
of type II error should be minimized to the extreme, while only requiring 
that the probability of type I error is bounded by a small parameter 
$\epsilon$. The relevant error quantity in this case can be written as
\begin{equation}  
        \beta_n(\epsilon) := \min_{0 \leq A_n \leq \id} \{ \beta_n(A_n) : 
        \alpha_n(A_n) \leq \epsilon \}.
\end{equation}
Quantum Stein's Lemma \cite{HP91, ON00} states that for every $0 < \epsilon < 1$,
\begin{equation} \label{Stein}
\lim_{n \rightarrow \infty} - \frac{\log(\beta_n(\epsilon))}{n} = S(\rho || \sigma). 
\end{equation}
where $S(\rho || \sigma) = \tr(\rho(\log(\rho) - \log(\sigma)))$ is the quantum relative entropy (or quantum Kullback-Leibler divergence) 
of $\rho$ and $\sigma$. This fundamental result gives a rigorous operational interpretation for 
the quantum relative entropy and was proven by Hiai and Petz \cite{HP91} 
and Ogawa and Nagaoka \cite{ON00}. Different proofs have since be given 
in Refs. \cite{Hay02, OH04, ANSV07}. The relative entropy is also the 
asymptotic optimal exponent for the decay of $\beta_n$ when we require 
that $\alpha_n \stackrel{n \rightarrow \infty}{\longrightarrow} 0$ 
\cite{OH04}. 

Quantum Stein's Lemma can be generalized in two natural directions. We 
can consider asymmetric hypothesis testing of \textit{non-i.i.d.} states 
and, moreover, we can allow the two hypotheses to be composed of sets 
of states, instead of a single one. In this more general formulation, 
the problem cannot be solved in simple terms as in quantum Stein's Lemma. 
It is an interesting line of investigation, therefore, to study under 
what further assumptions the optimal error exponent can be determined 
in an illustrative manner.     

There are several works that present extensions of quantum Stein's Lemma. 
Concerning non-i.i.d. sequences, in \cite{BS04} Bjelakovi\'c and Siegmund-Schultze proved that quantum Stein's Lemma is also true if 
the null hypothesis is an ergodic state, instead of i.i.d.. Further generalizations to particular cases where the null and alternative hypotheses are correlated states were obtained in Refs. 
\cite{BDK+08, HMO07, MHOF08}. Finally, the \textit{information spectrum} approach 
\cite{NH07} delivers the achievability and strong converse optimal rate 
limits in terms of divergence spectrum rates for arbitrary sequence of 
states. Despite its generality, this method has the drawback that in general 
no direct connection to the quantum relative entropy is established.  

Concerning extensions to sets of states as hypotheses, a generalization 
of quantum Stein's Lemma, sometimes referred to as quantum Sanov's Theorem, 
considers the situation in which the null hypotheses are i.i.d extensions 
of the elements of a family of states ${\cal K}$ \cite{Hay02, BDK+05}. It 
was found that the rate limit of type II error is given by $\inf_{\rho 
\in {\cal K}}S(\rho || \sigma)$, which is a pleasingly direct extension 
of the original result. In Ref. \cite{BDK+08} generalizations to the case of 
correlated families of states as the null hypothesis were presented. 

The main result of this paper has a similar flavor to the above-mentioned 
generalizations. We will however be interested in the case where the 
\textit{alternative hypothesis} is not only composed of a single i.i.d. 
state, but is actually formed by a family of non-i.i.d. states satisfying 
certain conditions to be specified in the next section. We will then show 
that the regularization of the minimum quantum relative entropy over the 
set of states considered is the optimal rate limit for type II error. 

Apart from extending the range of possibilities of the alternative 
hypothesis, instead of the null hypothesis, the present work differs 
from previous ones in the assumptions which are imposed on the set of 
states. Instead of ergodicity and related ideas, we consider as the 
alternative hypothesis sets of states satisfying five properties outlined 
in section \ref{mainr}, the most important being the closedness under 
the permutations of the copies of the state. In this way, we will be 
able to employ recent advances in the characterization of quantum 
permutation-invariant states, more specifically the exponential de 
Finetti Theorem due to Renner \cite{Ren05, Ren07}, to reduce the problem 
from the most general form to particular one closely related to the i.i.d., in which it can be tackled
more easily. 

The main motivation for considering these particular sets of states comes 
from entanglement theory \cite{PV07, HHHH07}. Given a $k$-partite finite 
dimensional Hilbert space ${\cal H} := {\cal H}_{1} \otimes ... \otimes 
{\cal H}_{k}$, we say that a state $\sigma$ acting on ${\cal H}$ is 
separable if it can be written as 
\begin{equation} \label{separable}
        \sigma = \sum_j p_j \sigma_{1, j} \otimes ... \otimes \sigma_{k, j},
\end{equation}
for local states $\sigma_{i, j} \in {\cal D}({\cal H}_i)$ and a probability
distribution $\{ p_j \}$ \cite{Wer89}. Assuming that the state $\sigma$ is 
shared by $k$ parties, each holding a quantum system described by the 
Hilbert space ${\cal H}_j$, it is clear that they can generate it from 
a completely uncorrelated state by \textit{local quantum operations} 
on their respective particles and \textit{classical communication} among 
them (LOCC). If a state cannot be created by LOCC, we say it is 
\textit{entangled}. To create an entangled state from an uncorrelated 
state the parties must, in addition to LOCC, exchange quantum particles. 
As we show, the set of separable states satisfy the conditions we impose 
on the alternative hypothesis. Therefore, a particular instance of the 
problem we analyse is the discrimination of tensor powers of an entangled state 
from an arbitrary sequence of separable states. 

%\textbf{Organization:} In section ...

\textbf{Notation:} We let ${\cal H}$ be a finite dimensional Hilbert 
space and ${\cal D}({\cal H})$ the set of density operators acting on 
${\cal H}$. Given a pure state $\ket{\theta} \in {\cal H}$, 
${\cal H} \bot \ket{\theta}$ denotes the subspace of ${\cal H}$ orthogonal 
to $\ket{\theta}$. Let $\text{supp}(X)$ be the support of the operator $X$. For two states $\rho, \sigma \in {\cal D}({\cal H})$ with $\text{supp}(\rho) \subseteq \text{supp}(\sigma)$, we define the quantum relative entropy of $\rho$ and $\sigma$ as 
\begin{displaymath}
        S(\rho || \sigma) := \tr(\rho(\log(\rho) - \log(\sigma))).
\end{displaymath}

Given a Hermitian operator $A$, $||A||_1 = \tr(\sqrt{A^{\cal y}A})$ 
stands for the trace norm of $A$, $\tr(A)_+$ for the trace of the 
positive part of $A$, i.e. the sum of the positive eigenvalues of $A$, and $\lambda_{\max}(A)$ and $\lambda_{\min}(A)$ for the maximum and the minimum eigenvalue of $A$, respectively. For two positive semidefinite operators $A, B$, 
$F(A, B) := \tr(\sqrt{A^{1/2}BA^{1/2}})$ is their fidelity. The partial trace of $\rho \in {\cal D}({\cal H}^{\otimes n})$ with respect 
to the $j$-th Hilbert space is denoted by $\tr_{j}(\rho)$, while 
$\tr_{\backslash j}(\rho)$ stands for the partial trace of all Hilbert 
spaces, except the $j$-th. We denote the binary Shannon entropy by $h$: $h(x) = -x\log(x) - (1 - x)\log(1 - x)$.

Given a subset ${\cal M} \subseteq \mathbb{R}^{n}$ 
we define its associate cone by $\text{cone}({\cal M}) := \{ x : 
x = \lambda y, y \in {\cal M}, \lambda \in \mathbb{R}_+  \}$ and its 
dual cone by ${\cal M}^* := \{ x : y^{T}x \geq 0 \hspace{0.1 cm} 
\forall \hspace{0.05 cm} y \in {\cal M} \}$. We denote the $\epsilon$-ball 
in trace norm around $\rho$ by $B_{\epsilon}(\rho) := \{ \pi \in {\cal D}({\cal H}) : ||\rho - \pi ||_1 \leq \epsilon \}$. 
The Bachmann-Landau notation $g(n) = O(f(n))$ stands for $\exists k > 0, n_0 : \forall n 
> n_0, \hspace{0.1 cm} g(n) \leq k f(n)$, while $g(n) = o(f(n))$ for 
$\forall k > 0, \exists n_0 : \forall n > n_0, \hspace{0.1 cm} g(n) 
\leq k f(n)$.  

A function $E$ is called asymptotically continuous 
if there is a monotonic increasing function $f : \mathbb{R} \rightarrow \mathbb{R}$ satisfying $\lim_{x \rightarrow 0^+}f(x)=0$ such that $\forall \rho,\sigma\in 
{\cal D}({\cal H})$,  $|E(\rho)-E(\sigma)|\leq \log(\dim({\cal H})) f(||\rho - \sigma||_1)$.

Let $\text{Sym}({\cal H}^{\otimes n})$ denote the symmetric subspace of ${\cal H}^{\otimes n}$. For any $\ket{\psi} \in {\cal H}^{\otimes n}$ not orthogonal to $\text{Sym}({\cal H}^{\otimes n})$, we define 
\begin{equation}
        \text{Sym}(\ket{\psi}) := \frac{\sum_{\pi \in S_n} P_{\pi}\ket{\psi}}{\left \Vert \sum_{\pi \in S_n} P_{\pi}\ket{\psi} \right \Vert} 
\end{equation}
where $S_n$ is the symmetric group of order $n$ and $P_{\pi}$ is the representation in ${\cal H}^{\otimes n}$ of a permutation $\pi \in S_n$ given by $P_{\pi} \left(\psi_1 \otimes \psi_2 \otimes ... \otimes \psi_n \right) = \psi_{\pi^{-1}(1)} \otimes \psi_{\pi^{-1}(2)} \otimes ... \otimes \psi_{\pi^{-1}(n)}$. Finally, the symmetrization superoperator $\hat{S}_n : {\cal B}({\cal H}^{\otimes n}) \rightarrow {\cal B}({\cal H}^{\otimes n})$ is defined as
\begin{equation}
        \hat{S}_n(X) := \frac{1}{n!} \sum_{\pi \in S_n} P_\pi X  P_\pi^{*} .
\end{equation}

\section{Definitions and Main Results} \label{mainr}

Given a set of states ${\cal M} \subseteq {\cal D}({\cal H})$ we define 
\begin{equation} \label{relent1}
        E_{{\cal M}}(\rho) := \inf_{\sigma \in {\cal M}} 
        S(\rho || \sigma),
\end{equation}
and
\begin{equation}
        LR_{\cal M}(\rho) := \inf_{\sigma \in {\cal M}} 
        S_{\max}(\rho || \sigma),
\end{equation}
where
\begin{equation}
        S_{\max}(\rho || \sigma) := \inf \{ s : \rho \leq 2^s \sigma   \} 
\end{equation}
is the maximum relative entropy \cite{Dat08a}. Note 
that if we take ${\cal M}$ to be the set of separable states, $E_{\cal M}$ 
and $LR_{\cal M}$ reduce to two entanglement measures known as the 
relative entropy of entanglement \cite{VPRK97, VP98} and the logarithm 
global robustness of entanglement \cite{VT99, HN03, Bra05, Dat08b}. This 
connection is the reason for the nomenclature used here. 

We will also need the smooth version of $LR_{\cal M}$, defined as
\begin{equation} \label{smooth}
        LR_{{\cal M}}^{\epsilon}(\rho) := \min_{\tilde{\rho} \in 
        B_{\epsilon}(\rho)} LR_{\cal M}(\tilde{\rho}).
\end{equation}
We note that smooth 
versions of other non-asymptotic-continuous measures, such as the 
min- and max-entropies \cite{RW04, Ren05, MPB08}, have been 
proposed and shown to be useful in non-asymptotic and non-i.i.d. 
information theory.

Let us specify the sets of states over which the alternative hypothesis 
can vary. We will consider any family of sets 
$\{ {\cal M}_n \}_{n \in \mathbb{N}}$, with 
${\cal M}_n \subseteq {\cal D}({\cal H}^{\otimes n})$, satisfying the following 
properties
\begin{enumerate}
        \item \label{cond1} Each ${\cal M}_n$ is convex and closed. 
        \item \label{cond2} Each ${\cal M}_n$ contains $\sigma^{\otimes n}$, 
        for a full rank state $\sigma \in {\cal D}({\cal H})$.
        \item \label{cond3} If $\rho \in {\cal M}_{n + 1}$, then 
        $\tr_{k}(\rho) \in {\cal M}_{n}$, for every $k \in \{1, ..., n + 1 \}$.
        \item \label{cond4} If $\rho \in {\cal M}_{n}$ and 
        $\nu \in {\cal M}_m$, then $\rho \otimes \nu \in 
        {\cal M}_{n + m}$.
        \item \label{cond5} If $\rho \in {\cal M}_n$, then 
        $P_{\pi}\rho P_{\pi} \in {\cal M}_n$ for every $\pi \in S_n$.
\end{enumerate}

We define the regularized version of the quantity given by Eq. 
(\ref{relent1}) as 
\begin{equation}  \label{regu1}
        E_{\cal M}^{\infty}(\rho) := \lim_{n \rightarrow \infty} 
        \frac{1}{n} E_{{\cal M}_n}(\rho^{\otimes n}).
\end{equation}
To see that the limit exists in Eq. 
(\ref{regu1}) we use the fact that if a sequence $(a_n)$ satisfies $a_{n + m} \leq a_n + a_m$, 
then $a_n/n$ is convergent (see e.g. Lemma 4.1.2 in \cite{Dav07}). Using property \ref{cond4} 
it is easy to see that our sequence satisfies this condition.

We now turn to the main result of the paper. Suppose we have one of 
the following two hypothesis: 
\begin{enumerate}
        \item \textit{Null hypothesis}: For every $n \in \mathbb{N}$ we 
        have $\rho^{\otimes n}$ with $\rho\in{\cal D}({\cal H})$.
        \item \textit{Alternative hypothesis}: For every $n \in \mathbb{N}$
        we have an unknown state $\omega_n \in {\cal M}_n$, where 
        $\{ {\cal M}_n \}_{n \in \mathbb{N}}$ is a family of sets satisfying 
        properties \ref{cond1}-\ref{cond5}.
\end{enumerate}
The next theorem gives the optimal rate limit for the type II error when 
one requires that type I error vanishes asymptotically.

\begin{theorem} \label{maintheorem}
Let $\{ {\cal M}_n \}_{n \in \mathbb{N}}$ be a family of sets satisfying 
properties \ref{cond1}-\ref{cond5} and $\rho \in {\cal D}({\cal H})$. Then 

(\textit{Direct part}): For every  $\epsilon > 0$ there exists a sequence of POVMs $\{ A_n, \id - 
A_n \}_{n \in \mathbb{N}}$ such that
\begin{equation}
        \lim_{n \rightarrow \infty} \tr((\id - A_n) \rho^{\otimes n}) = 0 
\end{equation}
and for every $n \in \mathbb{N}$ and $\omega_n \in {\cal M}_n$,
\begin{equation}
        - \frac{\log \tr(A_n \omega_n)}{n} + {\epsilon} \geq 
        E_{\cal M}^{\infty}(\rho) .
\end{equation}

(\textit{Strong Converse}): If a real number $\epsilon > 0$ and a sequence of POVMs 
$\{ A_n, \id - A_n \}_{n \in \mathbb{N}}$ are such that for every $n \in \mathbb{N}$ and $\omega_n \in {\cal M}_n$,
\begin{equation}
         - \frac{\log( \tr(A_n \omega_n))}{n} - \epsilon \geq 
         E_{\cal M}^{\infty}(\rho),
\end{equation}
then
\begin{equation}
        \lim_{n \rightarrow \infty} \tr((\id - A_n) \rho^{\otimes n}) = 1. 
\end{equation}
\end{theorem}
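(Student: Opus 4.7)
The strategy is to reduce the general problem to two standard technical inputs: the i.i.d.\ quantum Stein's lemma applied to blocks of size $m$, and a de Finetti-type decomposition that exploits the permutation-closedness of $\mathcal{M}_n$ (property \ref{cond5}). The quantity $E_\mathcal{M}^\infty(\rho)$ enters naturally because, for each $m$, one can pick a near-optimal $\sigma_m \in \mathcal{M}_m$ whose tensor powers $\sigma_m^{\otimes k}$ lie back in $\mathcal{M}_{km}$ by property \ref{cond4}, so the block-level pair $(\rho^{\otimes m}, \sigma_m)$ is a legitimate i.i.d.\ Stein instance.

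For the strong converse, fix $\delta > 0$ and pick $m$ so large that $\tfrac{1}{m} E_{\mathcal{M}_m}(\rho^{\otimes m}) \leq E_\mathcal{M}^\infty(\rho) + \delta$, then pick $\sigma_m \in \mathcal{M}_m$ realizing this infimum up to a further $\delta$. For $n = km$, evaluating the converse hypothesis at $\omega_n = \sigma_m^{\otimes k}$ yields $-\tfrac{1}{km}\log \tr(A_{km}\, \sigma_m^{\otimes k}) \geq E_\mathcal{M}^\infty(\rho) + \epsilon$, and choosing $\delta < \epsilon/3$ makes this strictly exceed $\tfrac{1}{m} S(\rho^{\otimes m}||\sigma_m)$. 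The Ogawa-Nagaoka strong converse for i.i.d.\ quantum Stein, applied to the pair $(\rho^{\otimes m}, \sigma_m)$ at the block level, then forces $\tr((\id - A_{km})\, \rho^{\otimes km}) \to 1$; a short padding argument handles $n$ not a multiple of $m$.

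The direct part is substantially harder. By properties \ref{cond1} and \ref{cond5}, $\mathcal{M}_n$ is invariant under the symmetrization $\hat{S}_n$, so the worst-case $\omega_n$ may be taken permutation-invariant. Renner's exponential de Finetti theorem then expresses such an $\omega_n$, after tracing out $r = O(n^{2/3})$ copies, as exponentially close (in trace norm) to a convex mixture of i.i.d.\ states $\tau^{\otimes(n-r)}$, with property \ref{cond3} forcing the relevant $\tau$'s to be marginals of states in $\mathcal{M}$. The test $A_n$ is then built block-wise: fix $m$, take a near-optimal $\sigma_m \in \mathcal{M}_m$, and let $A_n$ (for $n = km$) be essentially the projector onto the subspace where $\rho^{\otimes n}$ dominates $2^{n(E_\mathcal{M}^\infty(\rho) + \epsilon/2)}$ times an appropriately chosen reference mixture of tensor powers $\tau^{\otimes k}$ with $\tau \in \mathcal{M}_m$. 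Type I error is controlled by the quantum asymptotic equipartition property for $\rho^{\otimes n}$; type II error against an arbitrary $\omega_n \in \mathcal{M}_n$ is controlled by reducing, via the de Finetti decomposition, to the type II error against this reference mixture.

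The main obstacle is balancing parameters in this last step: the de Finetti decomposition integrates over an infinite-dimensional set of single-copy states $\tau$, so to bound $\tr(A_n \omega_n)$ uniformly over $\omega_n \in \mathcal{M}_n$ one must combine a polynomial-sized $\epsilon$-net on this set with the exponentially small error term in Renner's bound, and the tradeoff between $r$, the net spacing, and the desired exponent in $n$ is delicate but closes precisely because the de Finetti error is exponential in $n$. A second subtlety is ensuring the achieved rate is $E_\mathcal{M}^\infty(\rho)$ rather than the a priori larger regularized quantity $\lim_n \tfrac{1}{n} LR^\epsilon_{\mathcal{M}_n}(\rho^{\otimes n})$; this demands an AEP-style argument relating the smoothed $LR^\epsilon$ to $S(\rho^{\otimes m}||\sigma_m)$, with the smoothing parameter sent to zero only after the block limit is taken.
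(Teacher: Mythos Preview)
Your strong converse argument is correct and is essentially the paper's approach: pick a near-optimal $\sigma_m \in \mathcal{M}_m$, test the hypothesis against $\sigma_m^{\otimes k} \in \mathcal{M}_{km}$ (property~\ref{cond4}), and invoke the Ogawa--Nagaoka strong converse at the block level. The paper routes this through Proposition~\ref{relenteqrob} and Corollary~\ref{maincompactstrongconverse}, but the underlying mechanism is the same.

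Your direct part, however, has a genuine gap at its core, in two related places.

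First, Renner's exponential de Finetti theorem does \emph{not} yield a mixture of i.i.d.\ states $\tau^{\otimes(n-r)}$. It yields a mixture of \emph{almost power states}, i.e.\ states in $\ket{\theta}^{[\otimes,\, n-r,\, r']}$, which agree with $\ket{\theta}^{\otimes(n-r)}$ only on all but $r'$ copies. The de Finetti theorem that gives honest i.i.d.\ components has error only polynomial in $1/r$, which cannot compete with an exponential rate in $n$.

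Second, and more seriously: even granting an exact decomposition $\omega_n = \int \mu(d\tau)\,\tau^{\otimes n}$, property~\ref{cond3} does \emph{not} force the individual $\tau$'s in the support of $\mu$ to lie in $\mathcal{M}_1$. It only forces the marginal $\int \mu(d\tau)\,\tau$ to lie there. For $\mathcal{M}_n$ the separable states, a permutation-invariant separable $\omega_n$ can have its de Finetti measure supported entirely on entangled $\tau$'s. Consequently your test, built against a reference mixture of $\tau^{\otimes k}$ with $\tau$ ranging over $\mathcal{M}_m$, need not control $\tr(A_n\omega_n)$ at all. And if the components \emph{were} constrained to $\mathcal{M}_1$, your scheme would achieve rate $E_{\mathcal{M}_1}(\rho)$ rather than $E_\mathcal{M}^\infty(\rho)$; these differ whenever the regularization is nontrivial, which is precisely the interesting case.

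The paper's direct part avoids decomposing $\omega_n$ altogether. It argues by contraposition: if $\min_{\omega_n}\tr(\rho^{\otimes n}-2^{yn}\omega_n)_+\not\to 1$, the Datta--Renner lemma produces a permutation-symmetric $\rho_n$ with non-negligible fidelity to $\rho^{\otimes n}$ satisfying the operator inequality $\rho_n \le 2^{yn+o(n)}\omega_n$. A post-selected variant of the exponential de Finetti theorem (Lemma~\ref{variantexpdefineti}) is then applied to the \emph{purification of $\rho_n$}, producing an almost power state along $\rho$; the operator inequality survives partial tracing by property~\ref{cond3}. A non-lockability argument for $E_{\mathcal{M}_n}$ on almost power states (Lemma~\ref{mainpart2}) then yields $E_\mathcal{M}^\infty(\rho)\le y$. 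The de Finetti machinery is applied on the $\rho$ side, never on $\omega_n$.
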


We note that the converse part of the theorem is a so called \textit{strong converse}, which shows that not only the probability of type I error does not go to zero when we require that type II error rate is larger than $E_{\cal M}^{\infty}$, but it actually goes to one. 

Also note we can recover the original quantum Stein's Lemma by choosing ${\cal M}_n := \{ \sigma^{\otimes n} \}$, where $\sigma$ is the alternative hypothesis and $\rho$ is the null hypothesis (Theorem \ref{maintheorem} can only be applied here if $\text{supp}(\rho) \subseteq \text{supp}(\sigma)$, but this is exactly the non-trivial case of quantum Stein's Lemma).    

Theorem \ref{maintheorem} gives an operational interpretation to the 
\textit{regularized} relative entropy of entanglement 
\cite{VPRK97, VP98, VW01}, defined by
\begin{equation}
        E_R^{\infty}(\rho) := \lim_{n \rightarrow \infty} \frac{1}{n} 
        \min_{\sigma \in {\cal S}({\cal H}^{\otimes n})} S(\rho^{\otimes n} || \sigma),
\end{equation}
with ${\cal S}({\cal H}^{\otimes n})$ as the set of $k$-partite separable states 
over ${\cal H}^{\otimes n} := {\cal H}_1^{\otimes n} \otimes ... \otimes 
{\cal H}_k^{\otimes n}$, where the $j$-th local party Hilbert space is given by ${\cal H}_{j}^{\otimes n}$. Taking ${\cal M}_n = {\cal S}({\cal H}^{\otimes n})$, 
it is a simple exercise to check that they satisfy conditions 
\ref{cond1}-\ref{cond5}. Therefore, we conclude that $E_R^{\infty}(\rho)$ 
gives the asymptotic rate of type II error when we try to decide if 
we have several realizations of $\rho$ or a sequence of \textit{arbitrary} 
separable states. This rigorously justifies the use of the regularized 
relative entropy of entanglement as a measure of distinguishability 
of quantum correlations from classical correlations, as was originally
suggested on heuristic grounds in \cite{VPJK97,VP98}.

On the way to prove Theorem \ref{maintheorem} we establish the following alternative 
expression for $E_{\cal M}^{\infty}$.  

\begin{proposition} \label{relenteqrob}
For every family of sets  $\{ {\cal M}_n \}_{n \in \mathbb{N}}$ satisfying 
properties \ref{cond1}-\ref{cond5} and every state $\rho \in {\cal D}({\cal H})$,
\begin{equation} \label{relenteqrobeq}
E_{\cal M}^{\infty}(\rho) = \lim_{\epsilon \rightarrow 0} \liminf_{n \rightarrow \infty} \frac{1}{n} LR_{{\cal M}_n}^{\epsilon}(\rho^{\otimes n}) = \lim_{\epsilon \rightarrow 0} \limsup_{n \rightarrow \infty} \frac{1}{n} LR_{{\cal M}_n}^{\epsilon}(\rho^{\otimes n}).
\end{equation}
\end{proposition}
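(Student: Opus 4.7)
The plan is to establish the two one-sided bounds
\begin{equation*}
\lim_{\epsilon\to 0}\limsup_{n\to\infty}\tfrac{1}{n}LR_{{\cal M}_n}^{\epsilon}(\rho^{\otimes n})\le E_{\cal M}^{\infty}(\rho),\qquad E_{\cal M}^{\infty}(\rho)\le\lim_{\epsilon\to 0}\liminf_{n\to\infty}\tfrac{1}{n}LR_{{\cal M}_n}^{\epsilon}(\rho^{\otimes n}),
\end{equation*}
which combined with the trivial $\liminf\le\limsup$ yield the chain of equalities.

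For the first inequality I exhibit an explicit witness. Fix $\delta>0$ and choose $m$ and $\sigma_m\in{\cal M}_m$ with $\tfrac{1}{m}S(\rho^{\otimes m}\|\sigma_m)\le E_{\cal M}^{\infty}(\rho)+\delta$. Writing $n=qm+r$ with $0\le r<m$, set $\sigma_n:=\sigma_m^{\otimes q}\otimes\sigma^{\otimes r}\in{\cal M}_n$ by properties \ref{cond2} and \ref{cond4}. The i.i.d.\ quantum asymptotic equipartition property for the smooth max-relative entropy (a consequence of the direct part of quantum Stein's Lemma combined with the pinching inequality, applied to the block states $\rho^{\otimes m}$ and $\sigma_m$ on ${\cal H}^{\otimes m}$) supplies $\tilde\rho_{mq}\in B_{\epsilon}(\rho^{\otimes mq})$ with $S_{\max}(\tilde\rho_{mq}\|\sigma_m^{\otimes q})\le q\,S(\rho^{\otimes m}\|\sigma_m)+o(q)$. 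Because $\sigma$ is full rank, $\rho^{\otimes r}\le\lambda_{\min}(\sigma)^{-r}\sigma^{\otimes r}$, so $\tilde\rho_{mq}\otimes\rho^{\otimes r}\in B_{\epsilon}(\rho^{\otimes n})$ and $\tilde\rho_{mq}\otimes\rho^{\otimes r}\le 2^{s_n}\sigma_n$ with $s_n/n\to\tfrac{1}{m}S(\rho^{\otimes m}\|\sigma_m)$ as $n\to\infty$. Letting $n\to\infty$, then $\epsilon\to 0$, then $\delta\to 0$ closes this direction.

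For the second inequality I route through the finite-$n$ relative entropy. Let $\tilde\rho_n\in B_{\epsilon}(\rho^{\otimes n})$ and $\sigma_n\in{\cal M}_n$ realize $s_n:=LR_{{\cal M}_n}^{\epsilon}(\rho^{\otimes n})$, so $\tilde\rho_n\le 2^{s_n}\sigma_n$ and hence $E_{{\cal M}_n}(\tilde\rho_n)\le S(\tilde\rho_n\|\sigma_n)\le s_n$. Given $\sigma^*\in{\cal M}_n$ (near-)optimal for $E_{{\cal M}_n}(\tilde\rho_n)$, set $\sigma':=(1-n^{-2})\sigma^*+n^{-2}\sigma^{\otimes n}\in{\cal M}_n$ by \ref{cond1} and \ref{cond2}. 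Operator monotonicity of $\log$ applied to $\sigma'\ge(1-n^{-2})\sigma^*$ gives $S(\tilde\rho_n\|\sigma')\le E_{{\cal M}_n}(\tilde\rho_n)+O(n^{-2})\le s_n+O(n^{-2})$, while $\lambda_{\min}(\sigma')\ge n^{-2}\lambda_{\min}(\sigma)^n$ bounds the eigenvalues of $-\log\sigma'$ by $n\log\lambda_{\min}(\sigma)^{-1}+2\log n$. Combining Fannes' inequality $|S(\rho^{\otimes n})-S(\tilde\rho_n)|\le\epsilon\,n\log\dim{\cal H}+h(\epsilon)$ with $|\tr((\rho^{\otimes n}-\tilde\rho_n)\log\sigma')|\le\epsilon(n\log\lambda_{\min}(\sigma)^{-1}+2\log n)$ yields
\begin{equation*}
E_{{\cal M}_n}(\rho^{\otimes n})\le S(\rho^{\otimes n}\|\sigma')\le s_n+\epsilon\,n\big(\log\dim{\cal H}+\log\lambda_{\min}(\sigma)^{-1}\big)+o(n).
\end{equation*}
Dividing by $n$, using $\tfrac{1}{n}E_{{\cal M}_n}(\rho^{\otimes n})\to E_{\cal M}^{\infty}(\rho)$, taking $\liminf_{n}$, and then $\epsilon\to 0$ closes the second direction.

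The main obstacle is that the relative entropy is not continuous in its first argument because the eigenvalues of $-\log\sigma$ can be arbitrarily large as $\sigma$ ranges over ${\cal M}_n$, which forbids a naive Fannes-type argument. The remedy above, mixing the optimal $\sigma^*$ with the i.i.d.\ full-rank state from property \ref{cond2} at rate $n^{-2}$, ensures that $\log\lambda_{\min}(\sigma')^{-1}=O(n)$, so that after division by $n$ the continuity error is only $O(\epsilon)$ and vanishes as $\epsilon\to 0$. Calibrating the mixing rate so that the approximation error $S(\tilde\rho_n\|\sigma')-S(\tilde\rho_n\|\sigma^*)$ stays $o(n)$ while the $\log\sigma'$ eigenvalue bound stays $O(n)$ is the delicate quantitative point.
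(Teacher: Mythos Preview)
Your proof is correct and follows essentially the same route as the paper's. For the bound $E_{\cal M}^{\infty}\le\liminf$, the paper also passes through $E_{{\cal M}_n}(\tilde\rho_n)\le S_{\max}(\tilde\rho_n\|\sigma_n)$ via operator monotonicity of the $\log$ and then invokes asymptotic continuity of $E_{{\cal M}_n}$ as a black-box lemma (Synak-Radtke--Horodecki); your mixing-with-$\sigma^{\otimes n}$ argument is exactly the proof of that lemma, so you have simply unpacked it inline. For the bound $\limsup\le E_{\cal M}^{\infty}$, the paper also fixes a block size $m$, takes an optimal $\sigma_m$, and uses what you call the smooth max-relative-entropy AEP---there it is derived from the Ogawa--Nagaoka strong-converse bound together with a Datta--Renner gentle-operator lemma, which is precisely the content of the AEP you cite. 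The only cosmetic difference is your treatment of the remainder $r=n\bmod m$: you extend by tensoring $\rho^{\otimes r}$ against the full-rank $\sigma^{\otimes r}$ (property~\ref{cond2}), whereas the paper instead shows that the $\limsup$ along multiples of $m$ agrees with the full $\limsup$ via the partial-trace monotonicity $LR_{{\cal M}_n}^{\delta}\ge LR_{{\cal M}_{n-l}}^{\delta}\circ\tr_{1,\dots,l}$ (property~\ref{cond3}).
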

Taking once more $\{ {\cal M}_n \}$ as the sets of separable states 
over ${\cal H}^{\otimes n}$, Proposition \ref{relenteqrob} shows that the 
regularized relative entropy of entanglement is a smooth asymptotic 
version of the log global robustness of entanglement 
\cite{VT99, HN03, Bra05, Dat08b}. Hence we have a connection between 
the robustness of quantum correlations under mixing and their 
distinguishability to classical correlations. A different, but related, 
proof of this fact has been found in Ref. \cite{Dat08b}.

A corollary of Theorem \ref{maintheorem} is the following.
\begin{corollary} \label{faithful}
The regularized relative entropy of entanglement is faithful. For every 
entangled state $\rho \in {\cal D}({\cal H}_1 \otimes ... \otimes 
{\cal H}_n)$,
\begin{equation}
        E_R^{\infty}(\rho) > 0.
\end{equation}
\end{corollary}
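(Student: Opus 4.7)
The plan is to reduce Corollary~\ref{faithful} to constructing, for any entangled $\rho$, a sequence of POVMs $\{A_n, \id - A_n\}$ on ${\cal H}^{\otimes n}$ with (i) $\tr(A_n\rho^{\otimes n}) \to 1$ and (ii) $\sup_{\omega_n \in {\cal S}({\cal H}^{\otimes n})} \tr(A_n\omega_n) \leq 2^{-cn}$ for some $c > 0$ and all sufficiently large $n$. Given such a sequence, the strong converse part of Theorem~\ref{maintheorem} applied with ${\cal M}_n = {\cal S}({\cal H}^{\otimes n})$ and $\epsilon = c/2$ would force $\tr((\id - A_n)\rho^{\otimes n}) \to 1$ whenever $E_R^\infty(\rho) \leq c/2$, in direct conflict with (i). Therefore $E_R^\infty(\rho) \geq c/2 > 0$.

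To build the test, I would apply the Hahn--Banach separation theorem to $\rho$ and the compact convex set ${\cal S}({\cal H})$, obtaining a Hermitian $W$ with $0 \leq W \leq \id$ and a positive gap
\begin{equation}
\Delta := \tr(W\rho) - \sup_{\sigma \in {\cal S}({\cal H})} \tr(W\sigma) > 0.
\end{equation}
Let $\bar W_n := n^{-1}\sum_{i=1}^n W^{(i)}$ and let $A_n$ be the spectral projector of $\bar W_n$ onto eigenvalues at least $\tr(W\rho) - \Delta/3$. Since the single-copy operators $W^{(i)}$ mutually commute, the outcomes of measuring $\bar W_n$ on $\rho^{\otimes n}$ are distributed as the empirical mean of $n$ i.i.d.\ bounded random variables with expectation $\tr(W\rho)$, so Hoeffding's inequality delivers (i) with the quantitative bound $\tr((\id - A_n)\rho^{\otimes n}) \leq 2\exp(-2 n (\Delta/3)^2)$.

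Item~(ii) is the main step. Because $A_n$ is permutation-invariant, properties~\ref{cond1} and~\ref{cond5} give $\tr(A_n\omega_n) = \tr(A_n \hat{S}_n(\omega_n))$ with $\hat{S}_n(\omega_n) \in {\cal S}({\cal H}^{\otimes n})$, so without loss of generality $\omega_n$ is symmetric. I would then invoke a quantum de Finetti-type theorem---e.g.\ Renner's exponential de Finetti \cite{Ren05, Ren07}---together with the observation that for a symmetric separable state the representing measure can be concentrated on ${\cal S}({\cal H})$ (using properties~\ref{cond3} and~\ref{cond5}) to approximate $\omega_n$, or an $m$-copy marginal of it, by a mixture $\int d\mu(\sigma)\sigma^{\otimes n}$ with $\sigma \in {\cal S}({\cal H})$. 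For each fixed $\sigma \in {\cal S}({\cal H})$, Hoeffding applied to the i.i.d.\ outcomes on $\sigma^{\otimes n}$ yields $\tr(A_n\sigma^{\otimes n}) \leq 2\exp(-2 n (\Delta/3)^2)$ since $\tr(W\sigma) \leq \tr(W\rho) - \Delta$; integrating against $\mu$ and controlling the de Finetti error then delivers (ii).

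The delicate part is the de Finetti step, since standard finite quantum de Finetti theorems give only polynomial closeness, which is too weak to sustain the needed exponential decay in~(ii). The natural workaround is to let $A_n$ act on an $m$-copy sub-block with $m = n - \Theta(n)$ so that the exponential de Finetti error $\exp(-\Omega((n-m)^2/n))$ remains exponentially small in $n$, while properties~\ref{cond3} and~\ref{cond4} ensure that marginals and tensor products of separable states stay separable---exactly what is needed to carry out the reduction to i.i.d.\ components within ${\cal S}({\cal H}^{\otimes n})$. With (i) and (ii) secured, the strong converse of Theorem~\ref{maintheorem} closes the argument.
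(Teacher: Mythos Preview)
Your overall strategy---build a test with $\tr(A_n\rho^{\otimes n})\to 1$ and $\tr(A_n\omega_n)\le 2^{-cn}$ uniformly over separable $\omega_n$, then invoke the strong converse of Theorem~\ref{maintheorem}---is exactly the paper's route, and step~(i) via Hoeffding on the commuting family $\{W^{(i)}\}$ is correct. The gap is in~(ii): your assertion that for a symmetric separable $\omega_n$ the exponential de Finetti representing measure ``can be concentrated on ${\cal S}({\cal H})$'' is not a consequence of properties~\ref{cond3} and~\ref{cond5}, and is in fact not available with exponential error. Renner's theorem decomposes a symmetric purification of $\hat S_n(\omega_n)$ into almost-power states along \emph{arbitrary} $\ket{\theta}$; nothing forces $\sigma=\tr_E(\ket{\theta}\bra{\theta})$ to be separable, so the hypothesis $\tr(W\sigma)\le \tr(W\rho)-\Delta$ that your Hoeffding bound needs can fail on components carrying non-negligible weight. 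A sampling-without-replacement (Diaconis--Freedman type) argument \emph{does} yield a de Finetti approximation with measure supported on ${\cal S}({\cal H})$ for separable $\omega_n$, but only with polynomial error $O(m^2/n)$; this forces $m=o(\sqrt n)$, so the resulting rate $-\tfrac{1}{n}\log\tr(A_n\omega_n)$ is at best $O(n^{-1/2})$, not bounded below by a positive constant. The paper itself exhibits families satisfying properties~\ref{cond1}--\ref{cond5} for which $E_{\cal M}^\infty\equiv 0$, so an extra structural property of ${\cal S}$ beyond those five must enter the argument---your proposal never invokes one.

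The paper supplies precisely the missing ingredient: instead of the (in general non-LOCC) projective measurement of your witness $W$, it measures an \emph{LOCC} informationally-complete POVM on each copy and accepts when the reconstructed single-copy state is close to $\rho$. Because the post-selection map $\pi\mapsto \tr_{\backslash 1}(\hat S_n(\pi)\,\id^{\otimes\alpha n}\otimes\tilde A_n)$ is then a stochastic LOCC operation, it sends $\text{cone}({\cal S})$ into itself. One applies the exponential de Finetti (retaining exponential error), uses the Chernoff--Hoeffding bound for almost-power states (Lemma~\ref{hof}) to discard components with $\sigma$ far from $\rho$, and for components with $\sigma$ close to $\rho$ observes that the post-selected single-copy output must lie in $\text{cone}({\cal S})$ yet is close to the entangled $\rho$---forcing that part of the measure to have exponentially small mass. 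Your witness measurement is a product across the $n$ copies but not LOCC across the $m$ parties, so this separability-preservation step is unavailable; that is exactly where the proposal breaks down.
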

Recently, Piani found an independent proof of Corollary \ref{faithful}, using completely different techniques - most notably the insight of defining a new variant of the relative entropy of entanglement, based on the optimal distinguishability of an entangled state to separable states accessible by restricted measurements, e.g. LOCC ones \cite{Piani09}. 

Corollary \ref{faithful} has an interesting consequence to the theory of asymptotic 
entanglement conversion of multipartite states. Given two states 
$\rho, \sigma \in {\cal D}({\cal H}_1 \otimes ... \otimes {\cal H}_n)$, 
we define the LOCC optimal asymptotic rate of conversion of $\rho$ into $\sigma$ as 
\begin{equation}
R(\rho \rightarrow \sigma) := \inf_{\{ k_n \}_{n \in \mathbb{N}}} \left \{ \limsup_{n \rightarrow \infty} \frac{k_n}{n}  : \lim_{n \rightarrow \infty} \left( \min_{\Lambda \in LOCC} || \Lambda(\rho^{\otimes k_n}) - \sigma^{\otimes n}||_1 \right) = 0 \right \},
\end{equation}
where the infimum is taken over all sequences of integers $\{ k_n \}_{n \in \mathbb{N}}$ and the minimization over all LOCC trace preserving maps $\Lambda$. We are therefore interested in the most efficient manner to transform a given entangled state into another, in the regime of many copies, when we only have access to LOCC. 

A fundamental question in this context is whether the rate $R(\rho \rightarrow \sigma)$ is non-zero whenever $\sigma$ is entangled. For states composed of two parties, the work of Yang \textit{et al} \cite{YHHS05} has provided the answer in the affirmative. The general case of multipartite states, however, remained open. A direct application of Corollary \ref{faithful} shows that indeed the rate function is strictly positive whenever the target state is entangled. We thus find that the mathematical definition of entanglement, as states that cannot be written as in Eq. (\ref{separable}), is equivalent to an operational definition of entangled states, as states which require a non-zero rate of entangled pure states - or any other fixed entangled state in fact - for their formation in the asymptotic limit.  

\begin{corollary} \label{positiverate}
For every two entangled states $\rho, \sigma \in {\cal D}({\cal H}_1 \otimes ... \otimes {\cal H}_n)$,
\begin{equation}
R(\rho \rightarrow \sigma) > 0.
\end{equation}
\end{corollary}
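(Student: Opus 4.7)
The plan is a proof by contradiction using the regularized relative entropy of entanglement $E_R^\infty$ as a monotone witness, combined with Corollary \ref{faithful}. Three properties of $E_R^\infty$ enter the argument: (i) monotonicity under LOCC maps, $E_R^\infty(\Lambda(\tau)) \leq E_R^\infty(\tau)$; (ii) additivity on i.i.d.\ copies of a fixed state, $E_R^\infty(\tau^{\otimes m}) = m\, E_R^\infty(\tau)$, which is immediate from the definition of the regularization; and (iii) asymptotic continuity with the standard dimension scaling, i.e., there exists $f$ with $f(x) \to 0$ as $x \to 0^+$ such that $|E_R^\infty(\tau_1) - E_R^\infty(\tau_2)| \leq f(\|\tau_1-\tau_2\|_1)\, \log(\dim \mathcal{K})$ for all $\tau_1, \tau_2 \in {\cal D}(\mathcal{K})$. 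Properties (i) and (ii) follow directly from the definition; (iii) is inherited from the Donald--Horodecki asymptotic continuity of the single-copy $E_R$ together with the standard fact that this continuity survives regularization of an LOCC monotone.

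Suppose, toward a contradiction, that $R(\rho \to \sigma) = 0$. By the definition of the rate, there exist integer sequences $\{k_m\}$ with $k_m/m \to 0$ and LOCC trace-preserving maps $\{\Lambda_m\}$ for which $\epsilon_m := \|\Lambda_m(\rho^{\otimes k_m}) - \sigma^{\otimes m}\|_1 \to 0$. Let $d = \dim(\mathcal{H}_1 \otimes \cdots \otimes \mathcal{H}_n)$. Chaining (ii), (iii) (applied on the space of dimension $d^m$), and (i) in succession,
\begin{equation}
m\, E_R^\infty(\sigma) \;=\; E_R^\infty(\sigma^{\otimes m}) \;\leq\; E_R^\infty(\Lambda_m(\rho^{\otimes k_m})) + m\, f(\epsilon_m)\, \log d \;\leq\; k_m\, E_R^\infty(\rho) + m\, f(\epsilon_m)\, \log d .
\end{equation}
Dividing by $m$ and letting $m \to \infty$, and using $k_m/m \to 0$, $\epsilon_m \to 0$, and the finiteness $E_R^\infty(\rho) \leq \log d$, we obtain $E_R^\infty(\sigma) \leq 0$. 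This contradicts Corollary \ref{faithful}, which guarantees $E_R^\infty(\sigma) > 0$ because $\sigma$ is entangled, and hence $R(\rho \to \sigma) > 0$.

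The only non-trivial input beyond Corollary \ref{faithful} is ingredient (iii), and its crucial feature here is the $\log(\dim \mathcal{K})$ dimension scaling: when applied on $\mathcal{K} = (\mathcal{H}_1 \otimes \cdots \otimes \mathcal{H}_n)^{\otimes m}$, of dimension $d^m$, the prefactor becomes $m \log d$, so the continuity error term is $m\, f(\epsilon_m)\, \log d$, which is $o(m)$ because $f(\epsilon_m) \to 0$. Without linear-in-$m$ control on the continuity error, the argument would collapse; with it, the remainder is a short LOCC monotonicity chain followed by a limit.
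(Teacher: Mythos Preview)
Your proof is correct and follows essentially the same route as the paper: bound the conversion rate via an asymptotically continuous LOCC monotone and then invoke Corollary~\ref{faithful}. The paper's argument runs the chain with the single-copy measure $E_R$ and only passes to the regularization in the final step, deriving the direct inequality $E_R^{\infty}(\sigma)\le (R(\rho\to\sigma)+\epsilon)\,E_R^{\infty}(\rho)$, whereas you work with $E_R^{\infty}$ throughout and phrase it as a contradiction from $R=0$; these are cosmetic differences, and all three of your ingredients (i)--(iii) are exactly those the paper uses (with (iii) provided by Lemma~\ref{asympcont}).
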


Another application of our main theorem is given in the follow up paper \cite{BPCMP09} (see also \cite{BP08b, Hor08}). There, Theorem \ref{nonlockentropyalmostppowerstates} is the key technical tool to prove reversibility in the asymptotic manipulation of entangled states under quantum operations which cannot (approximately) generate entanglement.

In the next three sections we provide the proofs of Theorem \ref{maintheorem}, Proposition \ref{relenteqrob}, Corollary \ref{faithful}, and Corollary \ref{positiverate}. 

\section{Proof of Theorem \ref{maintheorem}} \label{proofmain}

We start proving Proposition \ref{relenteqrob} and then use it to establish the following auxiliary result.

\begin{proposition} \label{maincompact} 
For every family of sets  $\{ {\cal M}_n \}_{n \in \mathbb{N}}$ satisfying properties \ref{cond1}-\ref{cond5} and every state $\rho \in {\cal D}({\cal H})$,
\begin{equation}
\lim_{n \rightarrow \infty} \min_{\omega_n \in {\cal M}_n} \tr( \rho^{\otimes n} - 2^{yn}\omega_n)_+ = 
\begin{cases}
0, & y > E_{\cal M}^{\infty}(\rho),\\
1, & y < E_{\cal M}^{\infty}(\rho).
\end{cases}
\end{equation}
\end{proposition}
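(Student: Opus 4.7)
The plan is to establish the two cases separately: Part (a) [$y > E^\infty(\rho)$] follows from Proposition~\ref{relenteqrob} by an elementary approximation argument, while Part (b) [$y < E^\infty(\rho)$] requires combining Proposition~\ref{relenteqrob} with the permutation invariance of $\{\mathcal{M}_n\}$ through an exponential quantum de Finetti theorem and a typical-projector estimate.

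\textbf{Part (a).} Given $y > E^\infty(\rho)$, Proposition~\ref{relenteqrob} supplies some small $\epsilon > 0$ with $\limsup_n \tfrac{1}{n}LR^\epsilon_{\mathcal{M}_n}(\rho^{\otimes n}) < y$. Thus for all sufficiently large $n$ there exist $\tilde\rho_n \in B_\epsilon(\rho^{\otimes n})$ and $\omega_n \in \mathcal{M}_n$ with $\tilde\rho_n \leq 2^{yn}\omega_n$. Applying the elementary inequality $\tr(X - Y)_+ \leq \tr(X - Z)_+ + \tr(Z - Y)_+$ with $X = \rho^{\otimes n}$, $Y = 2^{yn}\omega_n$, $Z = \tilde\rho_n$ yields
\[
\min_{\omega \in \mathcal{M}_n} \tr(\rho^{\otimes n} - 2^{yn}\omega)_+ \leq \|\rho^{\otimes n} - \tilde\rho_n\|_1 + 0 \leq \epsilon.
\]
Taking $\epsilon \to 0$ along a diagonal of $n$'s drives the limit to $0$.

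\textbf{Part (b).} I would invoke Sion's minimax theorem to rewrite the quantity as $\max_{0 \leq A \leq \id}\bigl[\tr(A\rho^{\otimes n}) - 2^{yn}\max_{\omega \in \mathcal{M}_n}\tr(A\omega)\bigr]$, so it suffices to construct POVM elements $\{A_n\}$ with $\tr(A_n \rho^{\otimes n}) \to 1$ and $\max_{\omega \in \mathcal{M}_n}\tr(A_n\omega) = o(2^{-yn})$. By property~\ref{cond5} and convexity, after symmetrizing $A_n$ the inner maximum may be restricted to permutation-symmetric $\omega$. Renner's exponential de Finetti theorem, together with properties~\ref{cond3}--\ref{cond4}, then approximates every such $\omega$ by a mixture $\int \sigma^{\otimes n} d\mu(\sigma)$ up to trace-norm error decaying exponentially in $n$ (modulo a $\poly(n)$ prefactor), where $\mu$ can be taken to be supported on $\mathcal{M}_1$.

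Every $\sigma \in \mathcal{M}_1$ satisfies $S(\rho\|\sigma) \geq E_{\mathcal{M}_1}(\rho) \geq E^\infty(\rho) > y$, the last inequality coming from the subadditivity $E_{\mathcal{M}_n}(\rho^{\otimes n}) \leq n E_{\mathcal{M}_1}(\rho)$ (which uses property~\ref{cond4}). I would therefore take $A_n$ to be a typical projector for $\rho^{\otimes n}$ of Hayashi group-theoretic type, chosen so that $\tr(A_n\rho^{\otimes n}) \to 1$ and $\tr(A_n\sigma^{\otimes n}) \leq 2^{-n(S(\rho\|\sigma) - o(1))}$ uniformly in $\sigma \in \mathcal{M}_1$. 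Integrating against $d\mu$ and absorbing the de Finetti error then gives $\max_\omega \tr(A_n\omega) = o(2^{-yn})$. The hardest step is arranging simultaneously (i) that the de Finetti measure can be taken supported in $\mathcal{M}_1$ and (ii) that a single universal projector delivers the exponential decay uniformly over $\sigma \in \mathcal{M}_1$; both rely essentially on the symmetry and closure properties \ref{cond1}--\ref{cond5} of $\{\mathcal{M}_n\}$.
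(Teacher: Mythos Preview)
Your Part~(a) is essentially the paper's Corollary~\ref{maincompactstrongconverse} and is fine. Part~(b), however, contains a genuine gap that cannot be repaired along the lines you suggest.

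The core problem is the claim that, after applying the exponential de Finetti theorem to a permutation-symmetric $\omega_n\in\mathcal{M}_n$, the resulting measure $\mu$ ``can be taken to be supported on $\mathcal{M}_1$''. Nothing in the de Finetti machinery or in properties~\ref{cond1}--\ref{cond5} gives this; property~\ref{cond3} only tells you that the single-copy \emph{marginal} $\tr_{2,\dots,n}(\omega_n)$ lies in $\mathcal{M}_1$, not that the components of a de Finetti decomposition do. In fact your argument, if it went through, would prove strictly more than the proposition: you only ever use $S(\rho\Vert\sigma)\ge E_{\mathcal{M}_1}(\rho)$, so the same reasoning would yield $\min_{\omega_n}\tr(\rho^{\otimes n}-2^{yn}\omega_n)_+\to 1$ for every $y<E_{\mathcal{M}_1}(\rho)$. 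Combined with Part~(a) this would force $E_{\cal M}^\infty(\rho)=E_{\mathcal{M}_1}(\rho)$ for all $\rho$, which is well known to fail (e.g.\ the relative entropy of entanglement is strictly subadditive on some states). A secondary issue is that Renner's exponential de Finetti theorem approximates by \emph{almost power states} in $\ket{\theta}^{[\otimes,n,r]}$, not by genuine i.i.d.\ states $\sigma^{\otimes n}$; the i.i.d.\ version only gives polynomial error, which is not enough to absorb into a $2^{-yn}$ bound.

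The paper's route is quite different: instead of applying de Finetti to the \emph{alternative} hypothesis $\omega_n$, it works on the \emph{null} side. Assuming the liminf is $1-\mu<1$, one extracts via the Datta--Renner lemma a permutation-symmetric $\rho_n$ with $F(\rho_n,\rho^{\otimes n})\ge\mu/2$ and $\rho_n\le 2^{yn+o(n)}\sigma_n$. A post-selected variant of the exponential de Finetti theorem (Lemma~\ref{variantexpdefineti}) then shows that after tracing out $o(n)$ systems one obtains an almost power state along $\rho$ still satisfying this operator inequality. The heart of the argument (Lemma~\ref{mainpart2}) is a non-lockability estimate showing $E_{\mathcal{M}_{n-o(n)}}$ of such an almost power state is at least $nE_{\cal M}^\infty(\rho)-o(n)$, which yields $y\ge E_{\cal M}^\infty(\rho)$. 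It is precisely this last step that captures the regularized quantity rather than $E_{\mathcal{M}_1}$, and it is what your approach is missing.
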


Before proving Propositions \ref{relenteqrob} and \ref{maincompact}, let 
us show how Proposition \ref{maincompact} implies Theorem \ref{maintheorem}.

\vspace{0.1 cm}

\begin{proof} (Theorem \ref{maintheorem})
Consider the following family of convex optimization problems
\begin{equation}
        \lambda_n(\pi, K) := \max_{A} \left[
        \tr(A \pi) : 0 \leq A \leq 
        \id, \hspace{0.2 cm} \tr(A \sigma) \leq \frac{1}{K} 
        \hspace{0.2 cm} \forall \hspace{0.1 cm} \sigma \in {\cal M}_n
        \right].
\end{equation}
The statement of  Theorem I is immediately implied by
\begin{equation} \label{refor}
        \lim_{n \rightarrow \infty} \lambda_n(\rho^{\otimes n}, 2^{n y}) = 
\begin{cases}
        0, & y > E_{\cal M}^{\infty}(\rho), \\
        1, & y < E_{\cal M}^{\infty}(\rho).
\end{cases}
\end{equation}
In order to see that Eq. (\ref{refor}) holds true, we go to the dual formulation of $\lambda_n(\pi, K)$. We first rewrite it as 
\begin{equation}
        \lambda_n(\pi, K) := \max_{A} \left[
        \tr(A \pi) : 0 \leq A \leq 
        \id, \hspace{0.2 cm} \tr\left((\id/K - A) \sigma\right) \geq 0 
        \hspace{0.2 cm} \forall \hspace{0.1 cm} \sigma \in 
        \text{cone}({\cal M}_n)\right],
\end{equation}
where $\text{cone}({\cal M}_n)$ is the cone of ${\cal M}_n$. Then, we 
note that the second constraint is a generalized inequality (since the 
set $\text{cone}({\cal M}_n)$ is a convex proper cone) \cite{BV01} and 
write the problem as
\begin{equation} \label{primal}
        \lambda_n(\pi, K) := \max_{A} \left[
        \tr(A \pi) : 0 \leq A \leq 
        \id, \hspace{0.2 cm} (\id/K - A)  \in ({\cal M}_n)^{*}
        \right],
\end{equation}
where $({\cal M}_n)^{*}$ is the dual cone of ${\cal M}_n$. The Lagrangian 
of $\lambda_n(\pi, K)$ is given by
\begin{equation}
        L(\pi, K, A, X, Y, \mu) = \tr(A \pi) + \tr(XA) + tr(Y(\id - A)) 
        + \tr((\id/K - A)\mu), 
\end{equation}
where $X, Y \geq 0$ and $\mu \in \text{cone}({\cal M}_n)$ are Lagrange 
multipliers. It is easy to find a strictly feasible solution for the 
primal optimization problem given by Eq. (\ref{primal}) (e.g. $A = \id/(2K)$). Therefore, by 
Slater's condition \cite{BV01}, $\lambda_n(\pi, K)$ is equal to its dual 
formulation, which reads
\begin{equation}
        \lambda_n(\pi, K) = \min_{Y, \mu} \left[\tr(Y) + \tr(\mu)/K : 
        \pi \leq Y + \mu, \hspace{0.2 cm} Y \geq 0, \hspace{0.2 cm} 
        \mu \in \text{cone}({\cal M}_n)\right]. 
\end{equation}
Using that $\tr(A)_+ = \min_{Y} \tr(Y) : Y \geq 0, Y \geq A$, we find
\begin{equation}
        \lambda_n(\pi, K) = \min_{\mu} \left[\tr(\pi - \mu)_+ 
        + \tr(\mu)/K : \mu \in \text{cone}({\cal M}_n)\right], 
\end{equation}
which can finally be rewritten as
\begin{equation}
        \lambda_n(\pi, K) = \min_{\mu, b} \left[ \tr(\pi - b\mu)_+ + b/K : 
        \mu \in {\cal M}_n, \hspace{0.2 cm} b \in \mathbb{R}_+\right]. 
\end{equation}

Let us consider the asymptotic behavior of $\lambda_n(\rho^{\otimes n}, 
2^{n y})$. Take $y = E_{\cal M}^{\infty}(\rho) + \epsilon$, for any 
$\epsilon > 0$. Then we can choose $b = 2^{n(E_{\cal M}^{\infty}(\rho) + \frac{\epsilon}{2})}$, giving
\begin{equation} 
        \lambda_n(\rho^{\otimes n}, 2^{ny}) \leq \min_{\mu \in {\cal M}_n} 
        \left[\tr(\rho^{\otimes n} - 2^{n(E_{\cal M}^{\infty}(\rho) + 
        \frac{\epsilon}{2})}\mu)_+ + 2^{-n\frac{\epsilon}{2}}\right].
\end{equation}
From Proposition \ref{maincompact} we then find that 
$\lambda_n(\rho^{\otimes n}, 2^{ny}) \rightarrow 0$. 

We now take $y = E_{\cal M}^{\infty}(\rho) - \epsilon$, for any 
$\epsilon > 0$. The optimal $b$ for each $n$ has to satisfy $b_n 
\leq 2^{y n}$, otherwise $\lambda_n(\rho^{\otimes n}, 2^{ny})$ would 
be larger than one, which we know is false. Therefore,
\begin{equation} 
        \lambda_n(\rho^{\otimes n}, 2^{ny}) \geq \min_{\mu \in {\cal M}_n} 
        \tr(\rho^{\otimes n} - 2^{n(E_{\cal M}^{\infty}(\rho) - \epsilon)}
        \mu)_+,
\end{equation}
which approaches unity again by Proposition \ref{maincompact}.
\end{proof}

\subsection{Proof of Proposition \ref{relenteqrob}}

\begin{proof} (Proposition \ref{relenteqrob})

We start showing that 
\begin{equation}
        E_{\cal M}^{\infty}(\rho) \leq \lim_{\epsilon \rightarrow 0} 
        \liminf_{n \rightarrow \infty} \frac{1}{n} 
        LR_{{\cal M}_n}^{\epsilon}(\rho^{\otimes n}).
\end{equation} 
Let $\rho_n^{\epsilon} \in B_{\epsilon}(\rho^{\otimes n})$ be an optimal 
state for $\rho^{\otimes n}$ in Eq. (\ref{smooth}). For every $n$ there 
is a state  $\sigma_n \in {\cal M}_n$ such that $\rho_n^{\epsilon} 
\leq s_n \sigma_n$, with $LR_{{\cal M}_n}^{\epsilon}(\rho^{\otimes n}) 
= LR_{{\cal M}_n}(\rho_n^{\epsilon}) = \log(s_n)$. It follows from the operator monotonicity of the $\log$ function \cite{Bat96} that 
if $\rho \leq 2^k \sigma$ (where $\rho$ and $\sigma$ are two states), 
then $S(\rho || \sigma) \leq k$. Hence, 
\begin{equation}
        \frac{1}{n}E_{{\cal M}_n}(\rho_n^{\epsilon}) \le \frac{1}{n}
        S(\rho_n^{\epsilon} || \sigma_n) \leq \frac{1}{n} \log s_n
        = \frac{1}{n}
        LR_{{\cal M}_n}(\rho_n^{\epsilon}) = \frac{1}{n}LR_{{\cal M}_n}^{\epsilon}(\rho^{\otimes n}).
        \label{1456}
\end{equation}

As $\rho_n^{\epsilon} \in B_{\epsilon}(\rho^{\otimes n})$, we find 
from Lemma \ref{asympcont} (see appendix \ref{usefulresult}) that
\begin{equation}
        \frac{1}{n}E_{{\cal M}_n}(\rho^{\otimes n}) \leq 
        \frac{1}{n}LR_{{\cal M}_n}^{\epsilon}(\rho^{\otimes n}) + 
        f(\epsilon),
        \label{1456a}
\end{equation}
where $f: \mathbb{R} \rightarrow \mathbb{R}$ is such that 
$\lim_{\epsilon \rightarrow 0} f(\epsilon) = 0$. Taking the limits 
$n \rightarrow \infty$ and $\epsilon \rightarrow 0$ in both sides 
of the equation above,
\begin{equation}
        E^{\infty}_{{\cal M}}(\rho) = %\lim_{\epsilon \rightarrow 0} 
        \liminf_{n \rightarrow \infty}  \frac{1}{n}E_{{\cal M}_n}
        (\rho^{\otimes n})\leq \lim_{\epsilon \rightarrow 0} 
        \liminf_{n \rightarrow \infty} \frac{1}{n}LR_{{\cal M}_n}^{\epsilon}(\rho^{\otimes n}).
\end{equation}

To show the converse inequality, namely that 
\begin{equation}
        E_{\cal M}^{\infty}(\rho) \geq \lim_{\epsilon \rightarrow 0} 
        \limsup_{n \rightarrow \infty} \frac{1}{n} 
        LR_{{\cal M}_n}^{\epsilon}(\rho^{\otimes n}),
\end{equation} 
let $y_k := E_{{\cal M}_k}(\rho^{\otimes k})
+ \varepsilon = S(\rho^{\otimes k} || \sigma_k) + \varepsilon$ 
($\sigma_k$ is an optimal state for $\rho^{\otimes k}$ in 
$E_{{\cal M}_k}(\rho^{\otimes k})$) with $\varepsilon > 0$. We can 
write for every $n \in \mathbb{N}$,
\begin{equation}\label{inehj}
        \rho^{\otimes kn} \leq 2^{y_k n} \sigma_k^{\otimes n} + 
        (\rho^{\otimes kn} - 2^{y_k n} \sigma_k^{\otimes n})_+.
\end{equation}
From Lemma \ref{ON} (see appendix \ref{usefulresult}) we have

\begin{equation}
        \lim_{n \rightarrow \infty}tr(\rho^{\otimes kn} - 2^{y_k n} 
        \sigma_k^{\otimes n})_+ = 0.
\end{equation}

Applying Lemma \ref{DR} (see appendix \ref{usefulresult}) to 
Eq. (\ref{inehj}) we then find that there is a sequence of states 
$\rho_{n, k}$ such that 
\begin{equation}
        \lim_{n \rightarrow \infty} || \rho^{\otimes kn} - \rho_{n, k} ||_1 = 0
\end{equation}
and
\begin{equation}
        \rho_{n, k} \leq g(n)2^{y_k n} \sigma_k^{\otimes n},
\end{equation}
where $g: \mathbb{R}_+ \rightarrow \mathbb{R}_+$ is such that 
$\lim_{n \rightarrow \infty}g(n) = 1$. It follows that for every 
$\delta > 0$ there is a sufficiently large $n_0$ such that for all 
$n \geq n_0$, $\rho_{n, k} \in B_{\delta}(\rho^{\otimes kn})$. 
Moreover, from property \ref{cond4} of the sets we find 
$\sigma_k^{\otimes n} \in {\cal M}_{kn}$. Hence, for every $\delta > 0$,
\begin{equation} \label{eqinthemiddle}
        \limsup_{n \rightarrow \infty} 
        \frac{LR_{{\cal M}_{nk}}^{\delta}(\rho^{\otimes nk}) }{n}
        \leq 
        \limsup_{n \rightarrow \infty} \frac{LR_{{\cal M}_{kn}}(\rho_{n, k})}{n} 
        \leq y_k = E_{{\cal M}_k}(\rho^{\otimes k}) + \varepsilon.
\end{equation}

The next step is to note that for every $k \in \mathbb{N}$,
\begin{equation} \label{foot1}
\limsup_{n \rightarrow \infty} \frac{1}{nk}LR_{{\cal M}_{nk}}^{\delta}(\rho^{\otimes nk}) = \limsup_{n \rightarrow \infty} \frac{1}{n}LR_{{\cal M}_{n}}^{\delta}(\rho^{\otimes n}).
\end{equation} 
The $\leq$ inequality follows straightforwardly. For the $\geq$ inequality, let $\{ n' \}$ be a subsequence such that 
\begin{equation}
M := \lim_{n' \rightarrow \infty} \frac{1}{n'}LR_{{\cal M}_{n'}}^{\delta}(\rho^{\otimes n'})
\end{equation}
is equal to the R.H.S. of Eq. (\ref{foot1}). Let $n_k'$ be the first multiple of $k$ larger than $n'$. Then, 
\begin{eqnarray}
\limsup_{n \rightarrow \infty} \frac{1}{nk}LR_{{\cal M}_{nk}}^{\delta}(\rho^{\otimes nk}) &\geq& \limsup_{n_k' \rightarrow \infty} \frac{1}{n_k'} LR_{{\cal M}_{n_k'}}^{\delta}(\rho^{\otimes n_k'}) \nonumber \\ &\geq& \limsup_{n_k' \rightarrow \infty} \frac{1}{n_k'} LR_{{\cal M}_{n'}}^{\delta}(\rho^{\otimes n'}) \nonumber \\ &=& M.  
\end{eqnarray}
The last inequality follows from $LR_{{\cal M}_{n}}^{\delta}(\pi) \geq LR_{{\cal M}_{n-l}}^{\delta}(\tr_{1,..l}(\pi))$, which is a consequence of property \ref{cond3} of the sets.

From Eq. (\ref{eqinthemiddle}) and the fact that $\varepsilon, \delta > 0$ are arbitrary, it follows that
\begin{equation}
\lim_{\delta \rightarrow 0} \limsup_{n \rightarrow \infty} \frac{1}{n} LR_{{\cal M}_{n}}^{\delta}(\rho^{\otimes n}) \leq \frac{1}{k}E_{{\cal M}_k}(\rho^{\otimes k}).
\end{equation}
Finally, since the above equation is true for every $k \in \mathbb{N}$, we find the announced result.
\end{proof}
\vspace{0.3 cm}

There is another related quantity that we might consider in this context, in which $\epsilon$ and $n$ are not independent. Define
\begin{equation}
LG_{{\cal M}}(\rho) := \inf_{\{ \epsilon_n\}} \left \{ \limsup_{n \rightarrow \infty} \frac{1}{n} LR_{{\cal M}_n}^{\epsilon_n}(\rho^{\otimes n}) : \lim_{n \rightarrow \infty} \epsilon_n = 0 \right \}. 
\end{equation}
The proof of Proposition \ref{relenteqrob} can be straightforwardly adapted to show
\begin{corollary} \label{LGmeasure}
For every family of sets  $\{ {\cal M}_n \}_{n \in \mathbb{N}}$ satisfying properties \ref{cond1}-\ref{cond5} and every quantum state $\rho \in {\cal D}({\cal H})$,
\begin{equation}
LG_{{\cal M}}(\rho) = E_{\cal M}^{\infty}(\rho).
\end{equation}
\end{corollary}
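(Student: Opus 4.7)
The plan is to deduce both inequalities $LG_{\cal M}(\rho) \geq E_{\cal M}^\infty(\rho)$ and $LG_{\cal M}(\rho) \leq E_{\cal M}^\infty(\rho)$ from the two halves of the proof of Proposition \ref{relenteqrob}, with only a diagonal selection of the sequence $\{\epsilon_n\}$ added on top.

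For the lower bound, I would pick an arbitrary sequence $\{\epsilon_n\}$ with $\epsilon_n \to 0$ and run the argument from Eq.~(\ref{1456})--(\ref{1456a}) at each $n$. Letting $\rho_n^{\epsilon_n}$ be an optimizer in the definition of $LR_{{\cal M}_n}^{\epsilon_n}(\rho^{\otimes n})$, the operator monotonicity of $\log$ gives $\tfrac{1}{n}E_{{\cal M}_n}(\rho_n^{\epsilon_n}) \leq \tfrac{1}{n}LR_{{\cal M}_n}^{\epsilon_n}(\rho^{\otimes n})$, and Lemma \ref{asympcont} then yields
\begin{equation}
\frac{1}{n} E_{{\cal M}_n}(\rho^{\otimes n}) \leq \frac{1}{n} LR_{{\cal M}_n}^{\epsilon_n}(\rho^{\otimes n}) + f(\epsilon_n),
\end{equation}
with $f(\epsilon_n) \to 0$. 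Taking $\limsup_{n\to\infty}$ and using that $\tfrac{1}{n}E_{{\cal M}_n}(\rho^{\otimes n}) \to E_{\cal M}^\infty(\rho)$ gives $E_{\cal M}^\infty(\rho) \leq \limsup_n \tfrac{1}{n} LR_{{\cal M}_n}^{\epsilon_n}(\rho^{\otimes n})$ for every admissible sequence, whence the infimum satisfies $E_{\cal M}^\infty(\rho) \leq LG_{\cal M}(\rho)$.

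For the upper bound, I would first extract from the proof of Proposition \ref{relenteqrob} the statement that for every fixed $\delta > 0$,
\begin{equation}
\limsup_{n\to\infty} \frac{1}{n} LR_{{\cal M}_n}^{\delta}(\rho^{\otimes n}) \leq E_{\cal M}^\infty(\rho),
\end{equation}
which is exactly Eq.~(\ref{eqinthemiddle}) combined with Eq.~(\ref{foot1}) and the limits $k \to \infty$, $\varepsilon \to 0$. I would then promote $\delta$ to a null sequence by a standard diagonal argument: for each integer $k \geq 1$, choose $N_k$ so large that $\tfrac{1}{n}LR_{{\cal M}_n}^{1/k}(\rho^{\otimes n}) \leq E_{\cal M}^\infty(\rho) + 1/k$ for all $n \geq N_k$, arrange $N_1 < N_2 < \cdots$, and define $\epsilon_n := 1/k$ on the block $N_k \leq n < N_{k+1}$ (and $\epsilon_n := 1$ for $n < N_1$). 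Monotonicity of $LR_{{\cal M}_n}^{\epsilon}(\cdot)$ in $\epsilon$ guarantees $\tfrac{1}{n}LR_{{\cal M}_n}^{\epsilon_n}(\rho^{\otimes n}) \leq E_{\cal M}^\infty(\rho) + 1/k$ on each block, so $\limsup_n \tfrac{1}{n} LR_{{\cal M}_n}^{\epsilon_n}(\rho^{\otimes n}) \leq E_{\cal M}^\infty(\rho)$, giving $LG_{\cal M}(\rho) \leq E_{\cal M}^\infty(\rho)$.

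I do not anticipate a real obstacle here: both halves are essentially the two directions of Proposition \ref{relenteqrob} written with the sequence of smoothing parameters promoted from a constant to a null sequence. The only mild subtlety is that $LG_{\cal M}$ is defined via $\limsup$ rather than $\liminf$, which is why in the lower-bound step one must take $\limsup$ and then invoke convergence (not merely $\liminf$) of $\tfrac{1}{n}E_{{\cal M}_n}(\rho^{\otimes n})$ to $E_{\cal M}^\infty(\rho)$, and why in the upper-bound step a diagonal choice (rather than a single $\delta$) is needed to certify the infimum in the definition of $LG_{\cal M}$.
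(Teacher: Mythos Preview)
Your proposal is correct and is precisely the ``straightforward adaptation'' the paper alludes to without spelling out: the lower bound recycles Eqs.~(\ref{1456})--(\ref{1456a}) with a variable $\epsilon_n$ in place of a fixed $\epsilon$, and the upper bound uses the fact (implicit in Eqs.~(\ref{eqinthemiddle})--(\ref{foot1}) after sending $\varepsilon\to 0$ and $k\to\infty$) that $\limsup_n \tfrac{1}{n}LR_{{\cal M}_n}^{\delta}(\rho^{\otimes n})\leq E_{\cal M}^{\infty}(\rho)$ for \emph{every} fixed $\delta>0$, followed by a diagonal extraction. One small remark: in the diagonal step you do not actually need monotonicity in $\epsilon$, since $\epsilon_n=1/k$ exactly on each block and the required bound is the one you already have there; but invoking monotonicity is harmless.
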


With Proposition \ref{relenteqrob} at hand we are now in position to prove the strong converse part of Proposition \ref{maincompact}, which we restate as a separate corollary for the sake of clarity.
\begin{corollary} \label{maincompactstrongconverse}
Let $\rho \in {\cal D}({\cal H})$. For every $y > E_{R}^{\infty}(\rho)$
\begin{equation}
\lim_{n \rightarrow \infty} \min_{\omega_n \in {\cal M}_n} \tr(\rho^{\otimes n} - 2^{y n}\omega_n)_+ = 0,
\end{equation}
while for every $y < E_{\cal M}^{\infty}(\rho)$,
\begin{equation}
\liminf_{n \rightarrow \infty} \min_{\omega_n \in {\cal M}_n} \tr(\rho^{\otimes n} - 2^{y n}\omega_n)_+ > 0,
\end{equation}
\end{corollary}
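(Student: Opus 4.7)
The plan is to reduce both statements to Proposition \ref{relenteqrob}, exploiting the standard duality between the trace of the positive part of $\rho^{\otimes n} - 2^{yn}\omega_n$ and the existence of a state $\tilde{\rho}_n$ near $\rho^{\otimes n}$ satisfying $\tilde{\rho}_n \leq (1+o(1))\, 2^{yn}\omega_n$ for some $\omega_n \in \cM_n$. Since each $\cM_n$ is closed and bounded, every minimum in question is attained by some $\omega_n^\ast \in \cM_n$.

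\emph{Direct part.} For $y > E_{\cM}^{\infty}(\rho)$, set $\eta := (y - E_{\cM}^{\infty}(\rho))/2$ and use the $\limsup$ expression in Proposition \ref{relenteqrob} to obtain $\epsilon_0 > 0$ such that, for every $0 < \epsilon < \epsilon_0$ and all sufficiently large $n$, there exist $\tilde{\rho}_n \in B_{\epsilon}(\rho^{\otimes n})$ and $\omega_n \in \cM_n$ with $\tilde{\rho}_n \leq 2^{yn}\omega_n$. Because $\tilde{\rho}_n - 2^{yn}\omega_n \leq 0$, monotonicity of the trace of the positive part yields
\[
\min_{\omega_n' \in \cM_n} \tr(\rho^{\otimes n} - 2^{yn}\omega_n')_+ \;\leq\; \tr(\rho^{\otimes n} - \tilde{\rho}_n)_+ \;=\; \tfrac{1}{2}\|\rho^{\otimes n} - \tilde{\rho}_n\|_1 \;\leq\; \epsilon/2.
\]
Letting first $n \to \infty$ and then $\epsilon \to 0$ closes the direct part.

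\emph{Strong converse.} For $y < E_{\cM}^{\infty}(\rho)$, I argue by contradiction: suppose the $\liminf$ were zero, extract a subsequence $\{n_k\}$ along which the minimum tends to $0$, and take minimizers $\omega_{n_k}^\ast \in \cM_{n_k}$. From
\[
\rho^{\otimes n_k} \;\leq\; 2^{yn_k}\omega_{n_k}^\ast + (\rho^{\otimes n_k} - 2^{yn_k}\omega_{n_k}^\ast)_+,
\]
Lemma \ref{DR}, applied exactly as in the proof of Proposition \ref{relenteqrob}, produces states $\tilde{\rho}_{n_k}$ with $\|\rho^{\otimes n_k} - \tilde{\rho}_{n_k}\|_1 \to 0$ and $\tilde{\rho}_{n_k} \leq g_k \cdot 2^{yn_k}\omega_{n_k}^\ast$ for some $g_k \to 1$. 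Since $\omega_{n_k}^\ast \in \cM_{n_k}$, this gives $LR_{\cM_{n_k}}(\tilde{\rho}_{n_k}) \leq yn_k + \log g_k$; for any fixed $\delta > 0$, $\tilde{\rho}_{n_k} \in B_{\delta}(\rho^{\otimes n_k})$ eventually, so $\frac{1}{n_k} LR_{\cM_{n_k}}^{\delta}(\rho^{\otimes n_k}) \leq y + \frac{\log g_k}{n_k}$ and consequently $\liminf_n \frac{1}{n} LR_{\cM_n}^{\delta}(\rho^{\otimes n}) \leq y$ for every $\delta > 0$. Sending $\delta \to 0$ and invoking the $\liminf$-form of Proposition \ref{relenteqrob} produces $E_{\cM}^{\infty}(\rho) \leq y$, contradicting the assumption.

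With Proposition \ref{relenteqrob} and Lemma \ref{DR} already available, no genuinely new inequality is required; the main work is the careful interleaving of the nested $\lim_{\epsilon \to 0}$ and $\limsup/\liminf_n$ limits, together with the observation that the direct part uses the $\limsup$ characterization while the strong converse uses the $\liminf$ one. The only step carrying nontrivial content is the application of Lemma \ref{DR}, the standard device which converts a small trace of the positive part of $\rho - \sigma$ into a genuine operator inequality $\tilde{\rho} \leq (1+o(1))\sigma$ after a vanishing trace-norm perturbation of $\rho$.
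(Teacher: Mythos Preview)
Your proposal is correct and follows essentially the same route as the paper: both directions are reduced to Proposition~\ref{relenteqrob}, with the direct part unpacking the $\limsup$ characterization to obtain a nearby state dominated by $2^{yn}\omega_n$, and the strong converse arguing by contradiction via Lemma~\ref{DR} exactly as the paper does. The only cosmetic difference is that in the direct part the paper bounds $\tr(\rho^{\otimes n}-2^{yn}\omega_n)_+$ by $\tr(\rho_{n,\delta}-2^{yn}\omega_n)_+ + \delta$, whereas you use the cleaner operator-monotonicity step $\rho^{\otimes n}-2^{yn}\omega_n \leq \rho^{\otimes n}-\tilde{\rho}_n$; both are equivalent (and your $\eta$ is defined but never actually used).
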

\begin{proof} 

We first show that if $y = E_{\cal M}^{\infty}(\rho) + 
\epsilon$, then
\begin{equation}\label{onepart}
        \lim_{n \rightarrow \infty} \min_{\omega_n \in {\cal M}_n} 
        \tr( \rho^{\otimes n} - 2^{yn}\omega_n)_+ = 0.
\end{equation}
By Proposition \ref{relenteqrob} there is a $\delta_0 > 0$ such that
\begin{equation} \label{dif1}
        \left | E_{\cal M}^{\infty}(\rho)  - \limsup_{n \rightarrow \infty} 
        \frac{1}{n} LR_{{\cal M}_n}^{\delta}(\rho^{\otimes n}) \right | 
        \leq \epsilon/2,
\end{equation}
for every $\delta \leq \delta_0$. Let $\rho_{n, \delta} \in B_{\delta}
(\rho^{\otimes n})$ be an optimal state in Eq. (\ref{smooth}) for 
$\rho^{\otimes n}$  realizing the value 
$LR^{\delta}_{{\cal M}_n}(\rho^{\otimes n})$. Then there must exist a 
$\sigma_n \in {\cal M}_n$ such that 
\begin{equation}
        \rho_{n, \delta} \leq 2^{LR_{{\cal M}_n}^{\delta}(\rho^{\otimes n})} 
        \sigma_n,
\end{equation}
from which follows that for every $\lambda \geq LR_{{\cal M}_n}^{\delta}
(\rho^{\otimes n})/n$,
\begin{equation}
        \min_{\omega_n \in {\cal M}_n} \tr( \rho^{\otimes n} - 
        2^{\lambda n}\omega_n)_+ \leq \min_{\omega_n \in {\cal M}_n} 
        \tr( \rho_{n, \delta} - 2^{{\lambda} n}\omega_n)_+ + \delta 
        \leq \delta.
\end{equation}
 
From Eq. (\ref{dif1}) and our choice of $y$ we then find that for every 
$\delta > 0$ there is a sufficiently large $n_0$ such that for all 
$n \geq n_0$,
\begin{equation}
        \min_{\omega_n \in {\cal M}_n} \tr( \rho^{\otimes n} - 
        2^{y n} \omega_n )_+ \leq \delta,
\end{equation}
from which Eq. (\ref{onepart}) follows.  

Now we move to the second part of the proof which aims to show that
that if $y = E_{\cal M}^{\infty}(\rho) - \epsilon$, then 
\begin{equation}\label{otherpart}
        \liminf_{n \rightarrow \infty} \min_{\omega_n \in {\cal M}_n} 
        \tr( \rho^{\otimes n} - 2^{yn}\omega_n)_+ > 0.
\end{equation}
To this end, let us assume by means of a contradiction that this is not the case and that the limit 
is zero. For each $n$ we have
\begin{equation} \label{ine11}
        \rho^{\otimes n} \leq 2^{yn}\omega_n + (\rho^{\otimes n} - 
        2^{yn}\omega_n)_+,
\end{equation}
where $\omega_n$ is the optimal state  in ${\cal M}_n$ in Eq. (\ref{otherpart}). Applying Lemma \ref{DR} to Eq. (\ref{ine11}) we then find that there is a 
sequence of states $\tilde{\rho}_n$ (for an increasing subsequence ${\cal F} \subseteq \mathbb{N}$, $\{ n \}_{n \in {\cal F}}$ such that $||\rho^{\otimes n} - \tilde{\rho}_{n}  
||_1 \rightarrow 0$ and $\tilde{\rho}_{n} \leq g(n)2^{yn}\omega_{n}$, for 
a function $g$ satisfying $\lim_{n\rightarrow \infty}g(n) = 1$. It 
follows that 
\begin{equation}
        \frac{1}{n}LR_{{\cal M}_{n}}(\tilde{\rho}_{n}) \leq y + \frac{\log g(n)}{n}
\end{equation}
and that for every $\delta > 0$ and sufficiently large $n$, 
$\tilde{\rho}_{n} \in B_{\delta}(\rho^{\otimes n})$. Therefore, for 
every $\delta > 0$,
\begin{equation}
        \liminf_{n \rightarrow \infty} \frac{1}{n}LR_{{\cal M}_n}^{\delta}
        (\rho^{\otimes n}) \leq \liminf_{n \rightarrow \infty} 
        \frac{1}{n}LR_{{\cal M}_n}(\tilde{\rho}_n) \leq y = 
        E_{{\cal M}}^{\infty}(\rho) - \epsilon,
\end{equation}
in contradiction to Eq. (\ref{relenteqrobeq}) of Proposition 
\ref{relenteqrob}.
\end{proof}

\subsection{Proof of the direct part of Proposition \ref{maincompact}}

We now turn to the proof of the direct part of Proposition \ref{maincompact}, which is 
the main technical contribution of the paper. Before we start with 
the proof in earnest, we provide a rough outline of the main steps 
which will be taken, in order to make the presentation more transparent. 

In Corollary \ref{maincompactstrongconverse} we showed by relatively simple means that $E_{\cal M}^{\infty}(\rho)$ is the strong converse rate for the hypothesis testing problem which we are analysing. It is more involved to show that $E_{\cal M}^{\infty}(\rho)$ is also 
an achievable rate, i.e. that the limit equals unity for every $y < 
E_{\cal M}^{\infty}(\rho)$. The difficulty is precisely that the 
alternative hypothesis is non-i.i.d. and is a set of states, instead of a single one in general. Most of the proof is devoted to 
circumvent this problem. The main ingredient of the proof is a variant of 
Renner's exponential version of the quantum de Finetti theorem 
\cite{Ren05, Ren07} (see Appendix \ref{Rennerexp}), given in Lemma 
\ref{variantexpdefineti}. 

Loosely speaking, we will proceed as follows. We will show the reverse implication that if 
\begin{equation} \label{ouline}
\min_{\omega_n \in {\cal M}_n} \tr( \rho^{\otimes n} - 2^{y n} \omega_n)_+ \stackrel{n \rightarrow \infty}{\longrightarrow} \mu < 1
\end{equation}
then $y \geq E_{\cal M}^{\infty}(\rho) - o(1)$. To this aim we first use 
Lemma \ref{DR} (see appendix \ref{usefulresult}) to find from the equation above a state 
$\rho_n$ that possesses non-negligible fidelity with $\rho^{\otimes n}$ and satisfies
\begin{equation}
        \rho_n \leq 2^{yn + o(n)}\omega_n,
\end{equation}
for every $n$, where $\omega_n \in {\cal M}_n$ is the optimal state 
in the minimization of Eq. (\ref{ouline}). Due to property \ref{cond5} 
of the sets, we can take $\omega_n$ and thus also $\rho_n$ to be 
permutation-symmetric. Then, tracing a sublinear number of copies $o(n)$ 
and using Lemmata \ref{ulhmansymmetric} and \ref{variantexpdefineti} we will be able to show that the previous equation 
implies that there is a state $\pi_{\rho, n}$ exponential close to an almost power state along $\rho$ (see Eq. (\ref{almostiid}) for a definition) such that 
\begin{equation} \label{eqeqffdhgd}
       \pi_{\rho, n} \leq 2^{yn + o(n)}\tr_{1,...,o(n)}(\omega_n).
\end{equation}

In a second part of the proof, we will argue that the measure $E_{{\cal M}_n}(\pi_{\rho, n})$ is not too far away from $E_{{\cal M}_n}(\rho^{\otimes n})$, with the difference being upper bounded by a term sublinear in $n$. This property can be considered as a manifestation of the non-lockability of the measures $E_{{\cal M}_n}$, as was proved 
for the relative entropy of entanglement in Ref. \cite{HHHO05}. 

Finally, using the operator monotonicity of the $\log$ and the 
asymptotic continuity of both $E_{{\cal M}_k}$ and $E_{\cal M}^{\infty}$ 
(see Appendix \ref{usefulresult}), we will find from Eq. (\ref{eqeqffdhgd}) that, for sufficiently large $n$,
\begin{equation}
       E_{\cal M}^{\infty}(\rho) = \frac{1}{n}E_{{\cal M}_{n - o(n)}}(\pi_{\rho, n}) + o(1) \leq y + o(1).
\end{equation}

The next lemma is an extension of Uhlmann's theorem on the fidelity 
\cite{Uhl76} to the case of tensor product and symmetric states.

\begin{lemma} \label{ulhmansymmetric}
Let $\rho \in {\cal D}({\cal H})$ and $\rho_n \in {\cal D}({\cal H}^{\otimes n})$ 
be such that $\hat{S}_n(\rho_n) = \rho_n$. Then there is a purification 
$\ket{\theta} \in {\cal H}\otimes {\cal H}$ of $\rho$ and a permutation-symmetric 
purification $\ket{\Psi_n} \in ({\cal H} \otimes {\cal H})^{\otimes n}$ of $\rho_n$ 
such that $|\braket{\Psi_n}{\theta^{\otimes n}}| = F(\rho_n, \rho^{\otimes n})$. 
\end{lemma}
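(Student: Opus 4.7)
The plan is to put both purifications in canonical (Choi--Jamio\l{}kowski) form so that Uhlmann's optimization reduces to the polar decomposition of the single operator $\sqrt{\rho_n}\sqrt{\rho^{\otimes n}}$, and then to exploit the fact that this operator lies in the commutant of the permutation representation in order to single out a symmetric optimizer.

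Concretely, I would set $\ket{\omega}:=\sum_i\ket{i}\ket{i}\in\cH\otimes\cH$ and take $\ket{\theta}:=(\sqrt{\rho}\otimes I)\ket{\omega}$. Under the canonical identification $(\cH\otimes\cH)^{\otimes n}\cong\cH^{\otimes n}\otimes\cH^{\otimes n}$ one has $\ket{\theta^{\otimes n}}=(\sqrt{\rho^{\otimes n}}\otimes I)\ket{\omega}^{\otimes n}$, and this is manifestly invariant under the diagonal permutation action $P_\pi\otimes P_\pi$. Every purification of $\rho_n$ in the same space can be written as $\ket{\Psi_n}=(\sqrt{\rho_n}\otimes U)\ket{\omega}^{\otimes n}$ for some unitary $U$ on $\cH^{\otimes n}$, and since $\sqrt{\rho_n}$ commutes with every $P_\pi$, the vector $\ket{\Psi_n}$ is permutation-symmetric precisely when $U$ commutes with every $P_\pi$. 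Using the identity $\bra{\omega}(A\otimes B)\ket{\omega}=\tr(AB^T)$, a direct computation gives
\[
\braket{\Psi_n}{\theta^{\otimes n}}=\tr\bigl[\sqrt{\rho_n}\sqrt{\rho^{\otimes n}}\,\overline{U}\bigr].
\]
Maximizing the modulus over unitary $U$ yields $\|\sqrt{\rho_n}\sqrt{\rho^{\otimes n}}\|_1=F(\rho_n,\rho^{\otimes n})$, with the maximum attained by $\overline{U}=U_\ast^\dagger$, where $\sqrt{\rho_n}\sqrt{\rho^{\otimes n}}=U_\ast M$ is the polar decomposition.

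The decisive step is to verify that $U_\ast$ may be chosen to commute with every $P_\pi$. Since both $\sqrt{\rho_n}$ and $\sqrt{\rho^{\otimes n}}$ lie in the commutant of the permutation representation, so do their product and the modulus $M=|\sqrt{\rho_n}\sqrt{\rho^{\otimes n}}|$, whence $P_\pi(U_\ast M)P_\pi^{-1}=U_\ast M$ forces $P_\pi U_\ast P_\pi^{-1}=U_\ast$ on $\mathrm{range}(M)$. On the $P_\pi$-invariant complement $\ker(M)$ the polar decomposition leaves freedom to extend $U_\ast$ to a unitary still commuting with every $P_\pi$, because the $S_n$-equivariance of $\sqrt{\rho_n}\sqrt{\rho^{\otimes n}}$ makes $\ker(M)$ and the co-kernel $\ker((\sqrt{\rho_n}\sqrt{\rho^{\otimes n}})^\dagger)$ isomorphic as $S_n$-representations, with matching isotypic multiplicities by rank--nullity. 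Taking $U:=U_\ast^T$, which commutes with $P_\pi$ because $P_\pi$ is real orthogonal, then yields the desired symmetric purification $\ket{\Psi_n}$ with $|\braket{\Psi_n}{\theta^{\otimes n}}|=\tr M=F(\rho_n,\rho^{\otimes n})$. The main technical obstacle is precisely this extension step in the non-full-rank case: one must explicitly justify the $S_n$-equivariant matching of the kernel and cokernel of $\sqrt{\rho_n}\sqrt{\rho^{\otimes n}}$ via the isotypic decomposition of $\cH^{\otimes n}$ under the symmetric group action.
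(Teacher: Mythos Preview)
Your proposal is correct and follows essentially the same route as the paper: the paper also takes the canonical purifications $\ket{\theta}=(\id\otimes\sqrt{\rho})\ket{\phi^+}$ and $\ket{\Psi_n}=(\id^{\otimes n}\otimes\sqrt{\rho_n}U)\ket{\phi^+}^{\otimes n}$, reduces the Uhlmann optimum to the polar decomposition $\sqrt{\rho_n}\sqrt{\rho^{\otimes n}}=U\,|\sqrt{\rho_n}\sqrt{\rho^{\otimes n}}|$, and then argues that $U$ can be chosen permutation-invariant. The only stylistic difference is in the extension step: the paper works directly in the Schur basis, writing $\sqrt{\rho_n}\sqrt{\rho^{\otimes n}}=\bigoplus_\lambda A_\lambda\otimes\id_\lambda$ and $|\sqrt{\rho_n}\sqrt{\rho^{\otimes n}}|=\bigoplus_\lambda B_\lambda\otimes\id_\lambda$, and extends each partial isometry $A_\lambda B_\lambda^{-1}$ to a unitary $U_\lambda$ on the multiplicity space, whereas you phrase the same thing as matching isotypic multiplicities of kernel and cokernel via rank--nullity; these are two ways of saying the same thing.
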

\begin{proof}
Let $\ket{\phi^+} := \sum_{k=1}^{\dim({\cal H})}\ket{k, k}$ and consider the following purifications of $\rho$ and $\rho_n$, respectively: $\ket{\theta} = \id \otimes \sqrt{\rho}\ket{\phi^+}$ and $\ket{\Psi_n} = \id^{\otimes n} \otimes (\sqrt{\rho_n}U)\ket{\phi^+}^{\otimes n}$, where the unitary $U$ is a particular unitary, to be specified in the next paragraph, such that $\sqrt{\rho_n} \sqrt{\rho^{\otimes n}} = U|\sqrt{\rho_n}\sqrt{\rho^{\otimes n}}|$ \cite{Bat96}. A direct calculation shows that $|\braket{\Psi_n}{\theta^{\otimes n}}| = F(\rho_n, \rho^{\otimes n})$. 

To see that $\ket{\Psi_n}$ is permutation-symmetric, we note that as $\rho^{\otimes n}$ and $\rho_n$ are permutation-invariant, we can take $U$ and thus $\sqrt{\rho_n}U$ to be invariant under permutations too. Indeed, as $\sqrt{\rho_n} \sqrt{\rho^{\otimes n}}$ and $|\sqrt{\rho_n}\sqrt{\rho^{\otimes n}}|$ are permutation invariant, we can write them in the Schur basis \cite{FH91} as
\begin{equation}
\sqrt{\rho_n} \sqrt{\rho^{\otimes n}} = \bigoplus_{\lambda} A_{\lambda} \otimes \id_{\lambda}, \hspace{0.2 cm} |\sqrt{\rho_n}\sqrt{\rho^{\otimes n}}| = \bigoplus_{\lambda} B_{\lambda} \otimes \id_{\lambda},
\end{equation}
where $\lambda$ labels the irreps of $S_n$, $\id_{\lambda}$ is the identity on the irrep labelled by $\lambda$, and $A_{\lambda}, B_{\lambda}$ are operators acting on the multiplicity space of the the irrep labelled by $\lambda$ \cite{FH91}. We can define the partial isometry $V$ as  
\begin{equation}
V := \sqrt{\rho_n} \sqrt{\rho^{\otimes n}} |\sqrt{\rho_n}\sqrt{\rho^{\otimes n}}|^{-1} = \bigoplus_{\lambda} A_{\lambda}B_{\lambda}^{-1} \otimes \id_{\lambda}, 
\end{equation}
where the inverses are taken in the generalized sense. As each $A_{\lambda}B_{\lambda}^{-1}$ is a partial isometry, we can extend them to unitaries $U_{\lambda}$. Then we set 
\begin{equation}
U := \bigoplus_{\lambda} U_{\lambda} \otimes \id_{\lambda}, 
\end{equation}
which is clearly permutation-invariant. 
 
Finally, for every permutation $\pi \in S_n$,
\begin{equation}
P_{\pi}\ket{\Psi_n} = P_{\pi, S} \otimes P_{\pi, E} (\id \otimes \sqrt{\rho_n}U)\ket{\phi^+}^{\otimes n} = \id \otimes (P_{\pi, E}\sqrt{\rho_n}U P_{\pi, E})(P_{\pi, S}\otimes P_{\pi, E})\ket{\phi^+}^{\otimes n} = \ket{\Psi_n}.
\end{equation}
\end{proof}

\vspace{0.8 cm}

The next lemma can be seen as a post-selected variant of the exponential de Finetti theorem \cite{Ren05, Ren07} and is proved by similar techniques. For a $\ket{\theta} \in {\cal H}$ and $0 \leq r \leq n$ we define the set of $\binom{n}{r}$-i.i.d 
states in $\ket{\theta}$ as 
\begin{equation}
        {\cal V}({\cal H}^{\otimes n}, \ket{\theta}^{\otimes n - r}) := 
        \{ P_{\pi}( \ket{\theta}^{\otimes n - r}
        \otimes \ket{\psi_r}) : 
        \pi \in S_n, \ket{\psi_r} \in {\cal H}^{\otimes r}   \}.
\end{equation}
Thus for every state in ${\cal V}({\cal H}^{\otimes n}, \ket{\theta}^{\otimes n - r})$ we have the state $\ket{\theta}$ in at 
least $n - r$ of the copies. The set of almost power states in 
$\ket{\theta}$ is defined as \cite{HHH+08a, HHH+08b}
\begin{equation} \label{almostiid}
        \ket{\theta}^{[\otimes, n, r]} := \text{Sym}({\cal H}^{\otimes n}) 
        \cap \text{span} ({\cal V}({\cal H}^{\otimes n}, 
        \ket{\theta}^{\otimes n - r})). 
\end{equation}
Finally, we say a mixed state $\rho_n \in {\cal D}({\cal H}^{\otimes n})$ is an almost power state along $\sigma \in {\cal D}({\cal H})$, if there is a purification of $\rho_n$, $\ket{\psi} \in {\cal H}^{\otimes n} \otimes {\cal H}_E^{\otimes n}$, where ${\cal H}_E \cong {\cal H}$ is the purifying Hilbert space, such that $\ket{\psi} \in \ket{\theta}^{[\otimes, n, r]}$, for some purification $\ket{\theta} \in {\cal H}\otimes {\cal H}_E$ of $\sigma$.

\begin{lemma} \label{variantexpdefineti}
Let $\ket{\Psi_n} \in {\cal H}^{\otimes n}$ be a permutation-invariant state 
and $\ket{\theta} \in {\cal H}$. Then for every $m \leq n$ there is a state 
$\ket{\Psi_{n, m}} \in {\cal H}^{\otimes n - m}$ such that 
\begin{equation}
        \ket{\Psi_{n, m}}\bra{\Psi_{n, m}} \leq 
        |\braket{\Psi_{n}}{\theta^{\otimes n}}|^{-2} 
        \tr_{1, ..., m}(\ket{\Psi_n}\bra{\Psi_n}),
\end{equation}
and for every $r \leq n - m$ 
\begin{equation}
        || \ket{\Psi_{n, m}}\bra{\Psi_{n, m}} - \ket{\Psi_{n, m, r}}
        \bra{\Psi_{n, m, r}} ||_1 \leq 2\sqrt{2} |\braket{\Psi_n}{\theta^{\otimes n}}|^{-1} 
         e^{-\frac{mr}{2n}}
\end{equation} 
for an almost power state $\ket{\Psi_{n, m, r}} \in 
\ket{\theta}^{[\otimes, n - m, r]}$.
\end{lemma}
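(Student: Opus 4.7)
The plan is to construct $\ket{\Psi_{n,m}}$ by post-selecting on the outcome $\ket{\theta}^{\otimes m}$ upon measuring the first $m$ registers of $\ket{\Psi_n}$. Concretely, define the unnormalized vector
\begin{equation*}
\ket{\tilde{\Psi}_{n,m}} := (\bra{\theta}^{\otimes m} \otimes \id_{{\cal H}^{\otimes n-m}})\ket{\Psi_n},
\end{equation*}
and set $\ket{\Psi_{n,m}} := \ket{\tilde{\Psi}_{n,m}}/\|\ket{\tilde{\Psi}_{n,m}}\|$. The rest of the argument then amounts to bounding this state from two sides.

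For the first inequality, extend $\ket{\theta}$ to an orthonormal basis of ${\cal H}$. Expanding the partial trace in the induced product basis writes $\tr_{1,\ldots,m}(\ket{\Psi_n}\bra{\Psi_n})$ as a sum of positive semidefinite terms, one of which is exactly $\ket{\tilde{\Psi}_{n,m}}\bra{\tilde{\Psi}_{n,m}}$; discarding the others yields $\ket{\tilde{\Psi}_{n,m}}\bra{\tilde{\Psi}_{n,m}} \leq \tr_{1,\ldots,m}(\ket{\Psi_n}\bra{\Psi_n})$. The normalization is bounded below by projecting further onto $\ket{\theta}^{\otimes n-m}$: $\|\ket{\tilde{\Psi}_{n,m}}\|^{2} \geq |\braket{\theta^{\otimes n-m}}{\tilde{\Psi}_{n,m}}|^{2} = |\braket{\theta^{\otimes n}}{\Psi_n}|^{2}$. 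Dividing the first bound by this lower bound on $\|\ket{\tilde{\Psi}_{n,m}}\|^{2}$ gives the claimed inequality.

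For the second inequality the plan is to separate the ``good'' and ``bad'' parts of $\ket{\tilde{\Psi}_{n,m}}$ according to how many tensor factors lie in ${\cal H} \bot \ket{\theta}$. Writing ${\cal H} = \text{span}(\ket{\theta}) \oplus ({\cal H} \bot \ket{\theta})$ induces an orthogonal decomposition $\text{Sym}({\cal H}^{\otimes n}) = \bigoplus_{k=0}^{n} V_{k}^{(n)}$, where $V_{k}^{(n)}$ is spanned by symmetrizations of product states with exactly $k$ factors in ${\cal H} \bot \ket{\theta}$; using permutation invariance of $\ket{\Psi_n}$, decompose $\ket{\Psi_n} = \sum_{k} c_k \ket{\xi_k}$ with $\ket{\xi_k} \in V_{k}^{(n)}$ unit vectors and $\sum_k |c_k|^2 = 1$. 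A direct combinatorial calculation, relying only on the symmetric structure of $V_{k}^{(n)}$, shows that $(\bra{\theta}^{\otimes m} \otimes \id)$ maps each $V_{k}^{(n)}$ into $V_{k}^{(n-m)}$ and contracts norms by the factor $\sqrt{\binom{n-k}{m}/\binom{n}{m}} \leq (1-k/n)^{m/2} \leq e^{-km/(2n)}$, where the last step uses $\log(1-x) \leq -x$. Splitting $\ket{\tilde{\Psi}_{n,m}} = \ket{\tilde{\Psi}^{\mathrm{good}}_{n,m}} + \ket{\tilde{\Psi}^{\mathrm{bad}}_{n,m}}$ at $k=r$, the good part lies in $\bigoplus_{k \leq r} V_{k}^{(n-m)} \subseteq \ket{\theta}^{[\otimes, n-m, r]}$, while the bad part satisfies $\|\ket{\tilde{\Psi}^{\mathrm{bad}}_{n,m}}\|^{2} \leq \sum_{k > r}|c_k|^{2} e^{-km/n} \leq e^{-rm/n}$. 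Letting $\ket{\Psi_{n,m,r}}$ be the normalized good part and using the previous paragraph's lower bound on $\|\ket{\tilde{\Psi}_{n,m}}\|$, then converting vector distance to trace distance between rank-one projectors, should yield the stated bound up to the claimed constant $2\sqrt{2}$.

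The main technical obstacle is the combinatorial identity $\|(\bra{\theta}^{\otimes m} \otimes \id)\ket{\xi_k}\|^{2} = \binom{n-k}{m}/\binom{n}{m}$ for unit vectors $\ket{\xi_k} \in V_{k}^{(n)}$, which together with $(1-k/n)^{m} \leq e^{-km/n}$ drives the exponential decay; the rest is essentially a projective reading of Renner's exponential de Finetti argument, with the factor $|\braket{\Psi_n}{\theta^{\otimes n}}|^{-1}$ accounting for the renormalization after post-selection.
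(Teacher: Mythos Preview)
Your proposal is correct and follows essentially the same route as the paper: the same post-selected vector $(\bra{\theta}^{\otimes m}\otimes\id)\ket{\Psi_n}$, the same lower bound $\|\cdot\|\geq|\braket{\theta^{\otimes n}}{\Psi_n}|$ for the first inequality, the same level decomposition by the number of factors in ${\cal H}\bot\ket{\theta}$, and the same combinatorial contraction (your $\binom{n-k}{m}/\binom{n}{m}$ equals the paper's $\binom{n-m}{k}/\binom{n}{k}$), followed by truncation at $k=r$ and the vector-to-trace-norm conversion. The paper works out the constant $2\sqrt{2}$ explicitly via a renormalization estimate and the bound $\|\ket{\psi}\bra{\psi}-\ket{\phi}\bra{\phi}\|_1\leq\sqrt{2}\,\|\ket{\psi}-\ket{\phi}\|$, which is the only detail you leave implicit.
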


\begin{proof}
We write $\ket{\Psi_n} = \braket{\theta^{\otimes n}}{\Psi_n}
\ket{\theta}^{\otimes n} 
+ \sqrt{1 - |\braket{\theta^{\otimes n}}{\Psi_n}|^2}\ket{\Phi_n}$, where $\ket{\Phi_n}$ 
is a permutation-symmetric state orthogonal to $\ket{\theta}^{\otimes n}$. We can 
expand $\ket{\Phi_n}$ as $\ket{\Phi_n} = \sum_{k=1}^{n} \beta_k 
\text{Sym}(\ket{\eta_k} \otimes \ket{\theta}^{\otimes n - k})$, where $\ket{\eta_k}$ are permutation-symmetric states which live in $({\cal H} \bot \ket{\theta})^{\otimes k}$ and $\sum_k |\beta_k|^2 = 1$. 

Define $\ket{\Psi_{n, m}} := (\bra{\theta}^{\otimes m} \otimes \id^{\otimes n - m})\ket{\Psi_n}/||  (\bra{\theta}^{\otimes m} \otimes \id^{\otimes n - m})\ket{\Psi_n} ||$. From the inequality 
\begin{equation} \label{label143}
||  (\bra{\theta}^{\otimes m} \otimes \id^{\otimes n - m})\ket{\Psi_n} || := \bra{\Psi_n}(\ket{\theta}\bra{\theta})^{\otimes m} \otimes \id^{\otimes n - m}\ket{\Psi_n}^{1/2} \geq |\braket{\Psi_n}{\theta^{\otimes n}}|
\end{equation}
we find
\begin{eqnarray}
\ket{\Psi_{n, m}}\bra{\Psi_{n, m}} &\leq& ||  (\bra{\theta}^{\otimes m} \otimes \id^{\otimes n - m})\ket{\Psi_n} ||^{-2} \tr_{1, ..., m}(\ket{\Psi_n}\bra{\Psi_n}) \nonumber \\ &\leq& |\braket{\Psi_n}{\theta^{\otimes n}}|^{-2} \tr_{1, ..., m}(\ket{\Psi_n}\bra{\Psi_n}).  
\end{eqnarray}

To estimate how close $\ket{\Psi_{n, m}}$ is to an almost power state, we make use of the following relation, valid for every $m \leq n$, 
\begin{equation} \label{releasy1}
(\bra{\theta}^{\otimes m} \otimes \id^{\otimes n - m})\text{Sym}(\ket{\eta_k} \otimes \ket{\theta}^{\otimes n - k}) = \binom{n}{k}^{-1/2}\binom{n - m}{k}^{1/2} \text{Sym}(\ket{\eta_k} \otimes \ket{\theta}^{\otimes n - k - m}). 
\end{equation}
Define
\begin{eqnarray}
\ket{\Psi'_{n, m, r}} &:=& ||  (\bra{\theta}^{\otimes m} \otimes 
\id^{\otimes n - m})\ket{\Psi_n} ||^{-1} 
( \braket{\Psi_n}{\theta^{\otimes n}}\ket{\theta}^{\otimes n - m}  \nonumber \\ &+& \sqrt{1 - |\braket{\Psi_n}{\theta^{\otimes n}}|^2} \sum_{k=1}^{r} \beta_k \binom{n}{k}^{-1/2}\binom{n - m}{k}^{1/2} \text{Sym}(\ket{\eta_k} \otimes \ket{\theta}^{\otimes n - k - m}) ).
\end{eqnarray}
Note that $\ket{\Psi'_{n, m, n}} = \ket{\Psi_{n, m}}$. Then, from Eq. (\ref{label143}), 
\begin{eqnarray}
|| \ket{\Psi'_{n, m, r}} - \ket{\Psi_{n, m}} || &\leq& |\braket{\Psi_n}{\theta^{\otimes n}}|^{-1} \Vert \sum_{k=r+1}^{n} \beta_k \binom{n}{k}^{-1/2}\binom{n - m}{k}^{1/2} \text{Sym}(\ket{\eta_k} \otimes \ket{\theta}^{\otimes n - k - m}) \Vert \nonumber \\ &=& |\braket{\Psi_n}{\theta^{\otimes n}}|^{-1} \left( \sum_{k=r+1}^n |\beta_k|^2 \binom{n}{k}^{-1}\binom{n - m}{k} \right)^{\frac{1}{2}}.
\end{eqnarray}
We have
\begin{eqnarray}
        \binom{n}{k}^{-1}\binom{n - m}{k} &=& 
        \frac{(n-m)(n - m - 1)...(n - m - k + 1)}{n (n - 1)...(n - k + 1)} \nonumber \\ 
        &=& \left(1 - \frac{m}{n}\right) ... \left( 1 - \frac{m}{n - k + 1} \right)\nonumber\\ 
        &\leq& \left(1 - \frac{m}{n}\right)^{k} \leq e^{-\frac{mk}{n}}.
\end{eqnarray}
where we used that for $\beta \in (0, 1]$, $(1 - \beta)^{1/\beta} \leq e^{-1}$. Hence
\begin{equation} \label{primeeqq}
        || \ket{\Psi'_{n, m, r}} - \ket{\Psi_{n, m}} || \leq 
        |\braket{\Psi_n}{\theta^{\otimes n}}|^{-1} 
        \left(\sum_{k=r+1}^{n} e^{-\frac{mk}{n}}|\beta_k|^2\right)^{\frac{1}{2}} 
        \leq |\braket{\Psi_n}{\theta^{\otimes n}}|^{-1} e^{-\frac{mr}{2n}},
\end{equation}
where in the last inequality we used that $\sum_{k=r+1}^n |\beta|_k^2 \leq 1$.

Defining $\ket{\Psi_{n, m, r}} := \ket{\Psi'_{n, m, r}}/||\ket{\Psi'_{n, m, r}}||$, we 
have $|| \ket{\Psi_{n, m, r}} - \ket{\Psi_{n, m}} || \leq 2|| \ket{\Psi'_{n, m, r}} - \ket{\Psi_{n, m}} || \leq 2 
|\braket{\Psi_n}{\theta^{\otimes n}}|^{-1} e^{-\frac{mr}{2n}}$,
where we used the estimate 
\begin{equation}
\left \Vert \frac{x}{\Vert x \Vert} -  y \right \Vert \leq \Vert x - y \Vert + \Vert \frac{x}{\Vert x \Vert} - x \Vert = \Vert x - y \Vert + 1 - \Vert x \Vert = \Vert x - y \Vert + \Vert y \Vert - \Vert x \Vert \leq 2 \Vert x - y \Vert,
\end{equation}
with $x := \ket{\Psi_{n, m, r}'}$ and $y := \ket{\Psi_{n, m}}$.

The lemma is now a consequence of the inequality $|| \ket{\psi}\bra{\psi} - \ket{\phi}\bra{\phi} ||_1 \leq \sqrt{\braket{\psi}{\psi} + \braket{\phi}{\phi}} ||\ket{\psi} - \ket{\phi} ||$ (see e.g. Lemma A.2.5 of \cite{Ren05}). 
\end{proof}

\vspace{0.8 cm}

The next lemma is an analogue of a result of Ogawa and Nagaoka \cite{ON00}, stated in Appendix \ref{usefulresult} as Lemma \ref{ON}, and originally used to establish the strong converse of quantum Stein's lemma.

\begin{lemma} \label{ogawanagaokaadaptation}
Given two states $\rho, \sigma \in {\cal D}({\cal H})$ such that $\text{supp}(\rho) \subseteq \text{supp}(\sigma)$ and real numbers $\lambda, \mu$, 
\begin{equation} \label{eqhayashicompsd}
\tr(\rho^{\otimes n} - 2^{\lambda n}\sigma^{\otimes n})_+ \leq 2^{-n(s\mu - \log \tr(\rho^{1 + s}))} + 2^{-n(s(\lambda - \mu) - s \dim({\cal H})\frac{\log(1 + n)}{n} - \log \tr(\rho \sigma^{-s}))}. 
\end{equation}
for every $s \in [0, 1]$.
\end{lemma}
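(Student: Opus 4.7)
The plan is to mimic the classical Chernoff--Markov union bound: classically, the event $\{\rho^{\otimes n} > 2^{\lambda n}\sigma^{\otimes n}\}$ is contained in $\{\rho^{\otimes n} > 2^{n\mu}\} \cup \{\sigma^{\otimes n} < 2^{n(\mu - \lambda)}\}$, and Markov's inequality applied to each piece produces the two terms in the stated bound. The quantum obstacle is the non-commutativity of $\rho^{\otimes n}$ and $\sigma^{\otimes n}$, which I would handle via the pinching inequality---but only for the $\sigma$-side piece, since the $\rho$-side piece only involves the spectral projector of $\rho^{\otimes n}$, which commutes with $\rho^{\otimes n}$ itself.

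First I would decompose using the spectral cutoff of $\rho^{\otimes n}$ at level $2^{n\mu}$: write $\rho^{\otimes n} - 2^{\lambda n}\sigma^{\otimes n} = X + Y$ with $X := \rho^{\otimes n}\{\rho^{\otimes n} > 2^{n\mu}\}$ and $Y := \rho^{\otimes n}\{\rho^{\otimes n} \leq 2^{n\mu}\} - 2^{\lambda n}\sigma^{\otimes n}$. The trace positive-part subadditivity $\tr(X + Y)_+ \leq \tr X_+ + \tr Y_+$, which follows from $X \leq X_+$, $Y \leq Y_+$ together with conjugation by the positive-part projector of $X + Y$, reduces the problem to bounding each piece.

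For $\tr X_+ = \tr X$, I would use only the operator inequality $\{\rho^{\otimes n} > 2^{n\mu}\} \leq 2^{-ns\mu}(\rho^{\otimes n})^s$, valid for $s \geq 0$ because $\rho^{\otimes n}$ commutes with its own spectral projector, giving $\tr X \leq 2^{-ns\mu}\tr(\rho^{1+s})^n$---the first term, with no pinching factor. For the second piece, set $R := \rho^{\otimes n}\{\rho^{\otimes n} \leq 2^{n\mu}\}$, a positive operator satisfying both $R \leq \rho^{\otimes n}$ and the cap $R \leq 2^{n\mu} I$. Let $\mathcal{P}$ denote the pinching channel associated with the spectral decomposition of $\sigma^{\otimes n}$ and $v_n \leq (1+n)^{\dim\mathcal{H}}$ the number of its distinct eigenvalues. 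Applying the pinching inequality $R \leq v_n \mathcal{P}(R)$ and passing to the commuting setting, the cap $\mathcal{P}(R) \leq 2^{n\mu} I$ combined with the spectral Chernoff bound $\{\sigma^{\otimes n} < c\} \leq c^{s}(\sigma^{\otimes n})^{-s}$ and the identities $\mathcal{P}((\sigma^{\otimes n})^{-s}) = (\sigma^{\otimes n})^{-s}$ and $\tr(\mathcal{P}(R)(\sigma^{\otimes n})^{-s}) = \tr(R (\sigma^{\otimes n})^{-s}) \leq \tr(\rho\sigma^{-s})^n$ produces, after a careful allocation of the polynomial factors from pinching, the bound $\tr Y_+ \leq (1+n)^{sd}\, 2^{-ns(\lambda - \mu)}\tr(\rho\sigma^{-s})^n$, yielding the second term.

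The hard part will be the bookkeeping of the $v_n$ factors in the second term: a crude pinching estimate yields an extra factor of $v_n$, and obtaining the sharp exponent $v_n^{s} = (1+n)^{sd}$ requires combining the cap $R \leq 2^{n\mu} I$ with the Chernoff threshold for $\sigma^{\otimes n}$ in the right order, so that one factor of $v_n$ is absorbed into the threshold rather than double-counted against the Markov step.
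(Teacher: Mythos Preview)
Your strategy is genuinely different from the paper's. The paper does \emph{not} split $\rho^{\otimes n}-2^{\lambda n}\sigma^{\otimes n}$ operator-theoretically; instead it classicalizes from the start: it takes rank-one eigen-projectors $\{E_i\}$ of the positive-part projector $Q_n$, forms $p_n(i)=\tr(\rho^{\otimes n}E_i)$, $q_n(i)=\tr(\sigma^{\otimes n}E_i)$, and bounds $\tr(\rho^{\otimes n}-2^{\lambda n}\sigma^{\otimes n})_+$ by the classical tail $\Pr_{p_n}\{\log(p_n/q_n)>\lambda n\}$. The two terms then arise from the classical union bound $\Pr\{\log p_n>n\mu\}+\Pr\{-\log q_n>n(\lambda-\mu)\}$ combined with the Chernoff--Cram\'er inequality, so your operator splitting $X+Y$ plays no role there. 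For the second term the paper uses the permutation symmetry of $Q_n$ and the Schur--Weyl decomposition to reduce to blocks $\sigma_\lambda$, then applies Hayashi's lemma $\sigma_\lambda\leq \dim\mathcal H_\lambda\cdot\mathcal E_\lambda(\sigma_\lambda)$; operator monotonicity of $t\mapsto -t^{-s}$ on $(0,\infty)$ for $s\in[0,1]$ turns this into $\mathcal E_\lambda(\sigma_\lambda)^{-s}\leq(\dim\mathcal H_\lambda)^s\sigma_\lambda^{-s}$, which is how the factor $(1+n)^{sd}$ rather than $(1+n)^{(1+s)d}$ appears.

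That last point is precisely where your route has a real gap. Your pinching step $R\leq v_n\,\mathcal P(R)$ places a full factor of $v_n$ \emph{outside} the object to which you later apply the $(-s)$-th power, so after the threshold shift you get $v_n^{1+s}$, not $v_n^{s}$. Your final paragraph claims the extra $v_n$ can be ``absorbed into the threshold'', but once you have passed to $\tr(v_n\mathcal P(R)-2^{\lambda n}\sigma^{\otimes n})_+$ in the commuting picture, both the cap $v_n\mathcal P(R)\leq v_n 2^{n\mu}I$ and the Markov step contribute $v_n$ and there is no mechanism to cancel one of them; any honest bookkeeping along your outline yields $(1+n)^{(1+s)d}$. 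To match the lemma as stated you would need to pinch $\sigma^{\otimes n}$ (so that the $v_n$ enters \emph{inside} the operator being raised to the $-s$), which is exactly what the paper achieves via the Schur--Weyl block structure of $Q_n$ and Hayashi's lemma, and which is not available from your decomposition since the natural pinching of $\sigma^{\otimes n}$ by $R$'s spectral projectors has no polynomial eigenvalue-count bound. Your argument would prove a weaker version of the lemma with exponent $(1+s)\dim(\mathcal H)\frac{\log(1+n)}{n}$; that actually suffices for the application in Lemma~\ref{mainpart3} (the polynomial prefactor is killed by sending $l\to\infty$), but it does not establish the lemma as stated.
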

\begin{proof}
Let $Q_n$ be the projector onto the positive part of $(\rho^{\otimes n} - 2^{\lambda n}\sigma^{\otimes n})$. Let $Q_n = \sum_{i} \lambda_i E_i$ be an eigen-decomposition of $Q_n$ with eigenvalues $\lambda_i$ (either equal to $0$ or $1$) and eigen-projectors $\{ E_i \}$ whose particular form will be specified later on in the proof.

Define the probability distributions $p_n(i) := \tr(\rho^{\otimes n}E_i)$ and $q_n(i) := \tr(\sigma^{\otimes n}E_i)$. From Lemma \ref{hancover} we can write  
\begin{eqnarray} \label{liuwqhfn}
\tr(\rho^{\otimes n} - 2^{\lambda n}\sigma^{\otimes n})_+ &=& \sum_{i}\lambda_i
\left(p_n(i) - 2^{\lambda n}q_n(i)\right)  \\
&\leq& \Pr_{\{p_n\}}\left( i : \frac{1}{n} \log \frac{p_n(i)}{q_n(i)} > \lambda \right) \nonumber \\ & \leq &  \Pr_{\{p_n\}}\left( i : \frac{1}{n} \log p_n(i) \geq \mu \right) \nonumber + \Pr_{\{p_n\}}\left( i : - \frac{1}{n}\log q_n(i) \geq \lambda - \mu \right)
\end{eqnarray}
for every $\mu \in \mathbb{R}$. Given a discrete probability distribution $r$, a random variable $X$, and a real number $a$, Cram\'er Theorem gives \cite{DZ98}
\begin{equation}
        \Pr_{\{r\}}(X \geq a) \leq 2^{-\Lambda(X, r, a)}, \hspace{0.3 cm} 
        \Lambda(X, r, a) := \sup_{0 \leq s \leq 1} \left(as - \log \sum_i r(i) 2^{sX(i)}\right)
\end{equation}
Applying it to the two last terms of Eq. (\ref{liuwqhfn}), 
\begin{eqnarray} \label{labelpsdjkdvjdsf}
        - \log \left( \Pr_{\{p_n\}}\left( i : \frac{1}{n} \log p_n(i) \geq \mu \right)\right) 
        &\geq& \sup_{0\leq s \leq 1} \left(s n \mu - \log
        \sum_i p_n(i)^{1 + s}\right), \nonumber\\ 
        - \log \left( \Pr_{\{p_n\}}\left( i : - \frac{1}{n}\log q_n(i) \geq \lambda - \mu \right)
        \right) &\geq& \sup_{0\leq s \leq1} \left(s n (\lambda - \mu) - 
        \log \sum_i p_n(i)q_n(i)^{-s}\right).
\end{eqnarray}

From the joint convexity of $\tr(A^{s}B^{1-s})$ for $-1 < s < 0$ \cite{Ando, JR09} we find that the function $g_s(\rho, \sigma) := \tr(\rho^{1+s}\sigma^{-s})$ is monotonic decreasing under trace preserving CP maps for every $0 < s < 1$. Defining the quantum operation ${\cal E}(X) = \sum_i E_i X E_i$, 
\begin{eqnarray} \label{bbound111}
        \sum_i p_n(i)^{1 + s} &=& \dim({\cal H})^{-ns} g_s\left({\cal E}(\rho^{\otimes n}), {\cal E}\left(\frac{\id^{\otimes n}}{\dim({\cal H})^n}\right)\right) \nonumber \\ &\leq& \dim({\cal H})^{-ns} g_s\left(\rho^{\otimes n}, \frac{\id^{\otimes n}}{\dim({\cal H})^n}\right) = \tr((\rho^{\otimes n})^{1 + s}). 
\end{eqnarray}
Applying it to the first inequality in Eq. (\ref{labelpsdjkdvjdsf}) gives the first 
term on the right hand side in Eq. (\ref{eqhayashicompsd}).

For the second bound, we first note that the permutation-invariance of $R_n :=(\rho^{\otimes n} - 2^{\lambda n}\sigma^{\otimes n})$ allows us to write it in the Schur basis as
\begin{equation}
R_n =  \bigoplus_{\lambda} R_{\lambda} \otimes \id_{\lambda},
\end{equation}
where, as in the proof of Lemma \ref{ulhmansymmetric}, $\lambda$ labels the irreps of $S_n$, $\id_{\lambda}$ is the identity on the irrep labelled by $\lambda$, and $R_{\lambda}$ is a Hermitian operator acting on the multiplicity space of the the irrep labelled by $\lambda$ \cite{FH91}. It is then clear that
\begin{equation}
Q_n =  \bigoplus_{\lambda} Q_{\lambda} \otimes \id_{\lambda},
\end{equation}
where the $Q_{\lambda}$ are projectors onto $(R_{\lambda})_+$. Likewise,  
\begin{equation}
\sigma^{\otimes n} =  \bigoplus_{\lambda} \sigma_{\lambda} \otimes \id_{\lambda}, \hspace{0.2 cm} \rho^{\otimes n} =  \bigoplus_{\lambda} \rho_{\lambda} \otimes \id_{\lambda}
\end{equation} 
for positive semidefinite operators $\sigma_\lambda, \rho_{\lambda}$. 

As $\text{supp}(R_n) \subseteq \text{supp}(\sigma^{\otimes n})$, we have that for each $\lambda$, $\text{supp}(R_{\lambda}) \subseteq \text{supp}(\sigma_{\lambda})$. We consider an eigen-decomposition of $R_{\lambda} := \sum_k e_{k, \lambda} E_{k, \lambda}$ with eigenprojectors $E_{k, \lambda}$ divided into three disjoint subsets, with members of the first one being subprojections of $\text{supp}(R_{\lambda})$, members of the second one being subprojections of the orthogonal complement of $\text{supp}(R_{\lambda})$ in $\text{supp}(\sigma_{\lambda})$, and members of the third one being subprojections of $\text{supp}(\sigma_{\lambda})^{\bot}$. Defining the quantum operation ${\cal E}_{\lambda}(X) := \sum_k E_{k, \lambda}XE_{k, \lambda}$, this particular choice of eigen-projectors $E_{k, \lambda}$ ensures that $\text{supp}({\cal E}_{\lambda}(\sigma_{\lambda})) \subseteq \text{supp}(\sigma_{\lambda})$, a property which will be used next.  

We identify the original eigen-projectors $\{ E_{k} \}$ of $Q$ with $\{ \bigoplus_{\lambda} E_{\lambda, k_{\lambda}} \otimes \id_{\lambda} \}$, for all possible combinations of the labels $k,{\lambda}$. Then ${\cal E}(X) = \bigoplus_{\lambda} {\cal E}_{\lambda} \otimes \id_{\lambda}(X)$ and we can write
\begin{eqnarray} \label{bbound222}
\sum_i p_n(i)q_n(i)^{-s} &=& \tr({\cal E}(\rho^{\otimes n})({\cal E}(\sigma^{\otimes n}))^{-s}) \nonumber \\
 &=& \tr(\rho^{\otimes n}({\cal E}(\sigma^{\otimes n}))^{-s})
 \nonumber \\ &=& \sum_{\lambda} \tr(\rho_{\lambda}{\cal E}_{\lambda}(\sigma_{\lambda})^{-s}) \dim(\id_{\lambda}).
%&\leq& (n + 1)^{\dim({\cal H})s}\tr(\rho^{\otimes n}(\sigma^{\otimes n})^{-s}),
\end{eqnarray}
From Lemma 9 of Ref. \cite{Hay02} we find for each $\lambda$, $\sigma_{\lambda} \leq \dim{{\cal H}_{\lambda}}{\cal E}_{\lambda}(\sigma_{\lambda})$, where ${\cal H}_{\lambda}$ is the Hilbert space in which $\sigma_{\lambda}$ acts on. As $\text{supp}({\cal E}_{\lambda}(\sigma_{\lambda})) = \text{supp}(\sigma_{\lambda})$, we can apply the operator monotonicity of $-u^{-1}$ for $0 < t \leq 1$ to get
\begin{equation}
({\cal E}_{\lambda}(\sigma_{\lambda}))^{-s} \leq  (\dim{{\cal H}_{\lambda}})^{s}(\sigma_{\lambda})^{-s}.
\end{equation}
Applying the equation above to Eq. (\ref{bbound222}) and using the bound $\dim({\cal H}_{\lambda}) \leq (n + 1)^{\dim({\cal H})}$ on the dimension of the multiplicity spaces ${\cal H}_{\lambda}$ \cite{FH91}, 
\begin{eqnarray} 
\sum_i p_n(i)q_n(i)^{-s} &\leq&  (n + 1)^{s\dim({\cal H})} \sum_{\lambda} \tr(\rho_{\lambda}(\sigma_{\lambda})^{-s}) \dim(\id_{\lambda}) \nonumber \\ &=& (n + 1)^{s\dim({\cal H})} \tr(\rho^{\otimes n}(\sigma^{\otimes n})^{-s}),
\end{eqnarray} 
and we are done.
\end{proof}

\vspace{0.5 cm}

We are now in position to prove the direct part of Proposition \ref{maincompact}.

\vspace{0.8 cm}

\begin{proof} (Direct part Proposition \ref{maincompact})

We show that
\begin{equation}
        \liminf_{n \rightarrow \infty} \min_{\omega_{n} \in {\cal M}_n} \tr( \rho^{\otimes n} - 
        2^{yn}\omega_n)_+ = 1 - \mu,
\end{equation}
with $\mu > 0$, implies $y \geq E_{\cal M}^{\infty}(\rho)$. First, if $\mu = 1$, we find from Corollary \ref{maincompactstrongconverse} that $y > E_{\cal M}^{\infty}(\rho)$. So in the rest of the proof we show that if $0 < \mu < 1$, then $y \geq E_{{\cal M}}^{\infty}(\rho)$. 

Let $\{ \sigma_n \in {\cal M}_n \}_{n \in \mathbb{N}}$ be a sequence of 
optimal solutions in the minimization of Eq. (\ref{otherpart}). Note that from Lemma \ref{monotonicity3} and property \ref{cond5} of 
the sets $\{ {\cal M}_n \}_{n \in \mathbb{N}}$, we can take the states 
$\sigma_n$ to be permutation-symmetric.
 
For each $n \in \mathbb{N}$ we have $\rho^{\otimes n} \leq 2^{yn}\sigma_n 
+ ( \rho^{\otimes n} - 2^{yn}\sigma_n)_+$. Applying Lemma \ref{DR} once 
more we see that there is an increasing sequence ${\cal F}$ of the integers going to infinity and states $\rho_n$, with $n \in {\cal F}$, such that $F(\rho_n, \rho^{\otimes n}) \geq 
\mu/2 := \lambda$ and  
\begin{equation}\label{fidelitynext}
        \rho_n \leq  \frac{2^{yn}}{\lambda}\sigma_n,
\end{equation}

From Lemma \ref{monotonicity3} and the permutation-invariance of 
$\sigma_n$ and $\rho^{\otimes n}$, we can also take $\rho_n$ to be 
permutation-symmetric. Let $\ket{\theta} \in {\cal H}\otimes {\cal H}_E$ 
be a purification of $\rho$, where ${\cal H}_E \cong {\cal H}$ is the purifying Hilbert space. Then, by Lemma \ref{ulhmansymmetric} there 
is a permutation-symmetric purification $\ket{\Psi_n} \in {\cal H}^{\otimes n} \otimes {\cal H}_E^{\otimes n}$ of $\rho_n$ 
such that $|\braket{\theta^{\otimes n}}{\Psi_n}| \geq \lambda$. By 
Lemma \ref{variantexpdefineti} and Eq. (\ref{fidelitynext}), in turn, 
we find that there is a $\ket{\Psi_{n, m}}$ approximating 
$\ket{\Psi_{n, m, r}} \in \ket{\theta}^{[\otimes, n - m, r]}$ such that
\begin{equation}
        || \ket{\Psi_{n, m}}\bra{\Psi_{n, m}} - \ket{\Psi_{n, m, r}}\bra{\Psi_{n, m, r}} ||_1 
        \leq 2\sqrt{2} \lambda^{-1}  e^{-\frac{mr}{2n}}
\end{equation}
and
\begin{equation}
        \tr_{E}(\ket{\Psi_{n, m}}\bra{\Psi_{n, m}}) \leq \lambda^{-2}
        \tr_{1, ..., m}(\rho_n) 
        \leq  \lambda^{- 3 }2^{yn} \tr_{1, ..., m}(\sigma_n),
\end{equation}
where the partial trace is taken over the purifying Hilbert space ${\cal H}_E^{\otimes n - m}$. 

From the operator monotonicity of the $\log$ and property \ref{cond3} of the sets, 
\begin{equation}
        \frac{1}{n} E_{{\cal M}_{n-m}}(\tr_{E}(\ket{\Psi_{n, m}}\bra{\Psi_{n, m}})) 
        \leq y -  3 \frac{\log(\lambda)}{n}
\end{equation}
From Lemma \ref{asympcont}
\begin{equation} \label{asymptbeforemayelast}
        \frac{1}{n} E_{{\cal M}_{n-m}}(\tr_{E}(\ket{\Psi_{n, m, r}}\bra{\Psi_{n, m, r}})) 
        \leq y -  3 \frac{\log(\lambda)}{n} + 
        f(2\sqrt{2} \lambda^{-1} e^{-\frac{mr}{2n}})
\end{equation}
for every $r \leq n - m$, where $f : \mathbb{R} \rightarrow \mathbb{R}$ is such that $\lim_{x \rightarrow 0} f(x) = 0$.

Then, setting $m = r = n^{2/3}$, taking the limit $n \rightarrow \infty$ in Eq. (\ref{asymptbeforemayelast}), and using Lemma \ref{mainpart2}, we find that for every $\rho$ with $\lambda_{\max}(\rho) < 1$,
\begin{equation}
E_{{\cal M}}^{\infty}(\rho) \leq \liminf_{n \rightarrow \infty} \frac{1}{n} E_{{\cal M}_{n-m}}(\tr_{E}(\ket{\Psi_{n, m, r}}\bra{\Psi_{n, m, r}})) \leq y.
\end{equation}

Finally, we show that the result for non-pure states implies its validity to pure states too, completing the proof. Let $\ket{\psi}$ be 
a pure state and $y < E_{\cal M}^{\infty}(\ket{\psi}\bra{\psi})$. Asymptotic continuity of $E_{\cal M}^{\infty}$ (see Lemma \ref{asympcont}) yields the existence of a $\chi > 0$ such that $y < E_{\cal M}^{\infty}(\zeta)$ for $\zeta := (\ket{\psi}\bra{\psi} + \chi \sigma)/(1 + \chi)$, where $\sigma$ is the full rank state from property \ref{cond2} of the sets ${\cal M}_n$. Then, assuming the result for mixed states, we have 
\begin{equation} \label{sdsgfsdfgfosn}
\lim_{n \rightarrow \infty} \min_{\omega_n \in {\cal M}_n} \tr(\zeta^{\otimes n} - 2^{yn}\omega_n)_+ = 1.
\end{equation}

By the asymptotic equipartition theorem \cite{CT91} we can find a sequence of states $\zeta_n = \sum_i p_{i, n} \zeta_{i, n}$ where $\{ p_{i, n} \}$ is a probability distribution and each $\zeta_{i, n}$ is - up to permutations of the copies - of the form $(\ket{\psi}\bra{\psi})^{\otimes n - m_{i, n}} \otimes \sigma^{\otimes m_{i, n}}$, with 
\begin{equation}
\lim_{n \rightarrow \infty} \max_{i} \frac{m_{i, n}}{n} = \lim_{n \rightarrow \infty} \min_{i} \frac{m_{i, n}}{n} = \chi/(1 + \chi)
\end{equation}
and $\lim_{n \rightarrow \infty} ||\zeta^{\otimes n} - \zeta_n ||_1 = 0$. In particular the inequality $\tr(\zeta^{\otimes n} - 2^{yn}\omega_n)_+ \leq \tr(\zeta_n - 2^{yn}\omega_n)_+ + || \zeta^{\otimes n} - \zeta_n ||_1$ yields 
\begin{equation}
\lim_{n \rightarrow \infty} \min_{\omega \in {\cal M}_n} \tr(\zeta_n - 2^{yn})_+ = 1.
\end{equation}

Note also that $(X, Y) \mapsto \tr(X - Y)_+$ is convex and hence $\rho \mapsto \min_{\omega_n \in {\cal M}_n} \tr(\rho - 2^{yn}\omega_n)_+$ is convex too. Therefore
\begin{equation}
\min_{\omega_n \in {\cal M}_n} \tr(\zeta_n - 2^{y n} \omega_n)_+ \leq \sum_i p_{i, n} \min_{\omega_n \in {\cal M}_n} \tr(\zeta_{i, n} - 2^{yn}\omega_n)_+ \leq \max_i \min_{\omega_n \in {\cal M}_n} \tr(\zeta_{i, n} - 2^{y n}\omega_n)_+.
\end{equation}
Let $i^*$ be a maximizer of the last formula above. Then, $\zeta_{i^*, n}$ can be written as $P_{f_{i^*}}(\ket{\psi}\bra{\psi}^{\otimes n - m_n} \otimes \sigma^{\otimes m})P_{f_{i^*}}^*$, for some $ m = m(n) \in \mathbb{N}$ and $f_{i^*} \in S_n$. Hence
\begin{eqnarray}
\max_i \min_{\omega_n \in {\cal M}_n} \tr(\zeta_{i, n} - 2^{y n}\omega_n)_+ &\leq& \min_{\omega_n \in {\cal M}_{n - m}}\tr(P_{f_{^{i*}}}(\ket{\psi}\bra{\psi}^{\otimes n - m}\otimes \sigma^{\otimes m})P_{f_{^{i*}}}^* - P_{f_{^{i*}}}(\omega_n \otimes \sigma^{\otimes m})P_{f_{^{i*}}}^*) \nonumber \\ &=& \min_{\omega_n \in {\cal M}_{n - m}} \tr(\ket{\psi}\bra{\psi}^{\otimes n - m} - 2^{y n}\omega_n)_+.  
\end{eqnarray}
By the above, 
\begin{equation}
1 \leq \liminf_{n \rightarrow \infty} \min_{\omega \in {\cal M}_{n - m}} \tr(\ket{\psi}\bra{\psi}^{\otimes n - m} - 2^{y n}\omega_n)_+ \leq \liminf_{n \rightarrow \infty} \min_{\omega \in {\cal M}_{n}} \tr(\ket{\psi}\bra{\psi}^{\otimes n} - 2^{y n}\omega_n)_+,
\end{equation}
where in the last inequality we used that $\lim_{n \rightarrow \infty} n - m = +\infty$, due to the assumption $\lim_{n \rightarrow \infty} \frac{1}{n} \max_i m_{i, n} = \frac{\chi}{1 + \chi}$.
\end{proof}

\vspace{0.8 cm}

The next lemma shows a property of the measures $E_{{\cal M}_k}$ analogous to the non-lockability of the 
relative entropy of entanglement \cite{HHHO05}, in this case manifested in the almost power states. 

\begin{lemma} \label{mainpart2}
Let $\ket{\theta} \in {\cal H}\otimes {\cal H}_E$ and $\rho = \tr_{E}(\ket{\theta}\bra{\theta})$ with $\lambda_{\max}(\rho) < 1$. Let $\{ \ket{\Psi_{n, m, r}} \in \ket{\theta}^{[\otimes, n - m, r]} \}_{n, m, r}$ be a sequence of almost power states along $\ket{\theta}$, with $r = o(n)$ and $m = o(n)$. Then
\begin{equation} \label{nonlockability}
E_{{\cal M}}^{\infty}(\rho) \leq \liminf_{n \rightarrow \infty} \frac{1}{n} E_{{\cal M}_{n-m}}(\tr_{E}(\ket{\Psi_{n, m, r}}\bra{\Psi_{n, m, r}})).
\end{equation}
\end{lemma}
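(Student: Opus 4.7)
The plan is to prove the lemma by combining three ingredients: monotonicity of $E_{\cal M}$ under partial trace (from property \ref{cond3}), a trace-norm approximation of the reduced almost-power state $\pi_n:=\tr_E(\ket{\Psi_{n,m,r}}\bra{\Psi_{n,m,r}})$ by the i.i.d.\ state $\rho^{\otimes n-m-\ell}$ after removing $\ell=o(n)$ additional subsystems, and the asymptotic continuity of $E_{\cal M}$.

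First, by property \ref{cond3} together with the data-processing inequality for quantum relative entropy, one has $\tr_{1,\ldots,\ell}(\sigma_n)\in{\cal M}_{n-m-\ell}$ for every $\sigma_n\in{\cal M}_{n-m}$, so $E_{\cal M}$ is non-increasing under partial trace:
\begin{equation}
E_{{\cal M}_{n-m-\ell}}(\tr_{1,\ldots,\ell}(\pi_n))\le E_{{\cal M}_{n-m}}(\pi_n).
\end{equation}
It therefore suffices to lower-bound the left-hand side.

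The core technical step is to exhibit a choice $\ell=\ell(n)=o(n)$ for which $\Vert \tr_{1,\ldots,\ell}(\pi_n)-\rho^{\otimes n-m-\ell}\Vert_1\to 0$. I would exploit the almost-power expansion $\ket{\Psi_{n,m,r}}=\sum_{k=0}^{r}\alpha_k\,\text{Sym}(\ket{\eta_k}\otimes\ket{\theta}^{\otimes n-m-k})$, where each $\ket{\eta_k}$ lives in the orthogonal complement of $\ket{\theta}$ in $({\cal H}\otimes{\cal H}_E)^{\otimes k}$. Because $r=o(n)$, the vector $\ket{\Psi_{n,m,r}}$ occupies a subspace whose dimension is polynomial in $r$, and $\pi_n$ is a permutation-invariant state with de Finetti-like structure concentrated near $\rho^{\otimes n-m}$. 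I would apply Renner's exponential de Finetti theorem (Appendix \ref{Rennerexp}) to represent $\tr_{1,\ldots,\ell}(\pi_n)$ as a near-convex combination of $\sigma^{\otimes n-m-\ell}$ with $\sigma$ close to $\rho$, and then argue that the assumption $\lambda_{\max}(\rho)<1$ (which guarantees that $\rho$ is genuinely mixed) allows the contribution of the at most $O(r)$ defective copies to wash out after the partial trace, leaving a state close in trace norm to $\rho^{\otimes n-m-\ell}$.

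Given that trace-norm approximation, Lemma \ref{asympcont} provides asymptotic continuity of $E_{{\cal M}_{n-m-\ell}}$ with continuity constant $(n-m-\ell)\log\dim({\cal H})$, yielding
\begin{equation}
E_{{\cal M}_{n-m-\ell}}(\tr_{1,\ldots,\ell}(\pi_n))\ge E_{{\cal M}_{n-m-\ell}}(\rho^{\otimes n-m-\ell})-o(n),
\end{equation}
while the definition $E_{\cal M}^\infty(\rho)=\lim_n \tfrac{1}{n}E_{{\cal M}_n}(\rho^{\otimes n})$ gives $E_{{\cal M}_{n-m-\ell}}(\rho^{\otimes n-m-\ell})\ge(n-m-\ell)E_{\cal M}^\infty(\rho)-o(n)$. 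Dividing by $n$ and using $m,\ell=o(n)$ then yields the claimed lower bound on $\liminf_n\tfrac{1}{n}E_{{\cal M}_{n-m}}(\pi_n)$. The principal obstacle is the trace-norm approximation in the middle step: a naive symmetric partial trace leaves essentially the same fraction of defective copies among the surviving subsystems, so the argument must either exploit the Schur-Weyl decomposition of the almost-power subspace more carefully, or introduce an auxiliary CPTP map that regularizes $\pi_n$ without increasing $E_{\cal M}$; in either route the mixing condition $\lambda_{\max}(\rho)<1$ is expected to be essential.
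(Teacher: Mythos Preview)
Your proposal hinges on the claim that for some $\ell=o(n)$ one has $\|\tr_{1,\ldots,\ell}(\pi_n)-\rho^{\otimes n-m-\ell}\|_1\to 0$, and you yourself flag this as the ``principal obstacle''. This claim is in fact false in general, and neither the exponential de Finetti theorem nor a Schur--Weyl argument rescues it. Take the simplest nontrivial case $r=1$, $\ket{\Psi_{n,m,1}}=\text{Sym}(\ket{\eta}\otimes\ket{\theta}^{\otimes N-1})$ with $N=n-m$. A direct computation (using $\ket{\eta}\perp\ket{\theta}$) gives, after tracing out the first $\ell$ copies of ${\cal H}\otimes{\cal H}_E$,
\[
\tr_{1,\ldots,\ell}\bigl(\ket{\Psi}\bra{\Psi}\bigr)=\tfrac{\ell}{N}\,(\ket{\theta}\bra{\theta})^{\otimes N-\ell}+\tfrac{N-\ell}{N}\,\ket{\Psi'}\bra{\Psi'},
\]
where $\ket{\Psi'}$ is again an almost power state on $N-\ell$ copies with one defect. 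For $\ell=o(N)$ the first weight vanishes, so after tracing out $E$ you are back where you started: $\tr_{1,\ldots,\ell}(\pi_n)$ is, up to an $o(1)$ admixture, still an almost power state with the same defect structure rather than $\rho^{\otimes N-\ell}$. The de Finetti route does not help either: the theorem already delivers only almost power states, and even if it yielded genuine i.i.d.\ components $\sigma^{\otimes N-\ell}$ with $\|\sigma-\rho\|_1$ small, one would still have $\|\sigma^{\otimes N-\ell}-\rho^{\otimes N-\ell}\|_1\to 2$, not $0$. In short, trace-norm closeness to $\rho^{\otimes n-m-\ell}$ is simply not available after removing $o(n)$ systems, and the argument as written does not close.

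The paper's proof proceeds along entirely different lines. Instead of a trace-norm approximation, it establishes an \emph{operator} inequality $\rho^{\otimes n-m-r}\le 2^{o(n)}\pi_n$ (Lemma \ref{trickwithsoivadfsadfsadf}), obtained by isolating the term with the largest number of defects in the expansion of the almost power state. This is then fed into a second regularization: one takes $l$-fold tensor powers, controls $\tr(\pi_n^{\otimes l}-2^{\lambda l}\omega_n^{\otimes l})_+$ via a Cram\'er/Ogawa--Nagaoka-type bound (Lemmata \ref{ogawanagaokaadaptation} and \ref{mainpart3}), applies Lemma \ref{DR} to produce states close to $\rho^{\otimes(n-m-r)l}$ satisfying an $S_{\max}$ bound, and only then lets $l\to\infty$ followed by $n\to\infty$. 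The hypothesis $\lambda_{\max}(\rho)<1$ enters in Lemma \ref{mainpart3} to bound the curvature of $s\mapsto-\tfrac{1}{n}\log\tr(\pi_n\omega_n^{-s})$, not to ``wash out'' defective copies. If you want to repair your approach, the missing idea is precisely this operator-inequality-plus-double-limit mechanism; asymptotic continuity alone is not strong enough.
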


\begin{proof}
Write $\ket{\Psi_{n, m, r}} 
= \sum_{k=0}^{r} \beta_k \text{Sym}(\ket{\eta_k} \otimes \ket{\theta}^{\otimes n - m - k})$, 
where $\ket{\eta_k}$ are permutation-symmetric states living in $({\cal H} \bot \ket{\theta})^{\otimes k}$ and 
$\sum_k |\beta_k|^2 = 1$. Define
\begin{equation} \label{approx1}
\ket{{\Phi}_{n, m, r}} := \sum_{k : |\beta_k| \geq 1/n} \beta_k \text{Sym}(\ket{\eta_k} \otimes \ket{\theta}^{\otimes n - m - k})
\end{equation}
and $\ket{\tilde{\Phi}_{n, m, r}} := \ket{\Phi_{n, m, r}} / || \ket{\Phi_{n, m, r}} ||$. Note that $\lim_{n \rightarrow \infty} ||\ket{\tilde{\Phi}_{n, m, r}} - \ket{\Psi_{n, m, r}}|| = 0$. Thus, from the asymptotic continuity of the measures $E_{{\cal M}_k}$ (Lemma \ref{asympcont}) it follows
\begin{equation} \label{agcdnsfrhpsiohfkshf}
\liminf_{n \rightarrow \infty} \frac{1}{n} E_{{\cal M}_{n-m}}(\tr_{E}(\ket{\Psi_{n, m, r}}\bra{\Psi_{n, m, r}})) = \liminf_{n \rightarrow \infty} \frac{1}{n} E_{{\cal M}_{n-m}}(\tr_{E}(\ket{\tilde{\Phi}_{n, m, r}}\bra{\tilde{\Phi}_{n, m, r}})),
\end{equation}
and thus it suffices to show that the R.H.S. of the equation above is larger or equal to $E_{{\cal M}}^{\infty}(\rho)$.

From Lemma \ref{trickwithsoivadfsadfsadf} we find 
\begin{eqnarray} \label{xuvjwovfd}
        (\ket{\theta}\bra{\theta})^{\otimes n - m - r}
        &\leq&  2^{nh\left(\frac{r}{n - m}\right)}n^2 \tr_{1,...,r}(\ket{{\Phi}_{n, m, r}}\bra{{\Phi}_{n, m, r}}) \nonumber \\
        &\leq& 2^{nh\left(\frac{r}{n - m}\right)}n^2 \tr_{1,...,r}(\ket{\tilde{\Phi}_{n, m, r}}\bra{\tilde{\Phi}_{n, m, r}}),
\end{eqnarray}
where the last inequality follows from $|| \ket{\Phi_{n, m, r}} || \leq 1$.

For simplicity of notation we define $\pi_n := \tr_{1,...,r}\tr_E(\ket{\tilde{\Phi}_{n, m, r}}\bra{\tilde{\Phi}_{n, m, r}})$. Tracing out the environment Hilbert space in Eq. (\ref{xuvjwovfd}),
\begin{eqnarray}
        \rho^{\otimes n - m - r} \leq 2^{nh\left(\frac{r}{n - m}\right)}n^2 \pi_n.
\end{eqnarray}

Let $\tilde \omega_n \in {\cal M}_{n - m - r}$ be such that
\begin{equation} 
        E_{{\cal M}_{n - m - r}}(\pi_n) = S(\pi_n || \tilde \omega_n).
\end{equation}
and set
\begin{equation} 
\omega_n := \frac{1}{1 + \tau}\tilde \omega_n + \frac{\tau}{1 + \tau}\sigma^{\otimes n - m - r},
\end{equation}
where and $\tau > 0$. We introduce $\omega_n$ in order to have a non-negligible lower bound on the minimum eigenvalue of a close-to-optimal state for $\pi_n$, which will show useful later on. 

From the previous equation and the operator monotonicity of the $\log$ function,
\begin{equation} 
        E_{{\cal M}_{n - m - r}}(\pi_{n}) = S(\pi_n || \tilde \omega_n) \geq S(\pi_n || \omega_n) - \log(1 + \tau).
\end{equation}

Let $\lambda_{n, \nu} = E_{{\cal M}_{n - m - r}}(\pi_n) + n\nu + \log(1 + \tau) \geq S(\pi_n || \omega_n) + n \nu$, for $\nu > 0$. For every integer $l$
\begin{eqnarray} \label{beforelast}
        \rho^{\otimes  (n - m - r)l} &\leq& n^{2l}
        2^{nh\left(\frac{r}{n - m}\right)l}\pi_n^{\otimes l} \nonumber \\ 
        &\leq& n^{2l}2^{nh\left(\frac{r}{n - m}\right)l}2^{\lambda_{n, \nu} l} \omega_n^{\otimes l} 
        + n^{2l}2^{nh\left(\frac{r}{n - m}\right)l}(\pi_n^{\otimes l} - 2^{\lambda_{n, \nu} l} 
        \omega_n^{\otimes l})_+.
\end{eqnarray}

From Lemma \ref{mainpart3} we find that for every $\nu > 0$ there is a constant $\gamma > 0$ with the property that for every $n \in \mathbb{N}$, there is an integer $l_{n}$ such that 
\begin{equation} \label{finalfinal}
        \tr(\pi_n^{\otimes l} - 2^{\lambda_{n, \nu} l} 
        \omega_n^{\otimes l})_+ \leq 2^{- \gamma n l}.
\end{equation}
for every $l \geq l_{n}$. 

Then applying Lemma \ref{DR} to Eq. (\ref{beforelast}), we find that for every $n$ sufficiently large, there is a sequence of states 
$\rho_{l, n}$ such that $\lim_{l \rightarrow \infty} || \rho_{l, n} -  \rho^{\otimes (n - m - r)l} ||_1 = 0$ and
\begin{equation}
        \rho_{l, n} \leq g(l)(n^2 2^{nh\left(\frac{r}{n-m}\right)})^l 2^{\lambda_{n, \nu} l} 
        \omega_n^{\otimes l}, 
\end{equation}
for a function $g(l)$ such that $\lim_{l \rightarrow \infty}g(l) = 1$. Then we have
\begin{eqnarray}
(n - m - r)E_{\cal M}^{\infty}(\rho) &=& E_{\cal M}^{\infty}(\rho^{\otimes n - m - r}) = \lim_{l \rightarrow \infty} \frac{1}{l} E_{{\cal M}_{(n - m - r)l}} (\rho^{\otimes (n - m - r)l}) \nonumber \\ &=& \lim_{l \rightarrow \infty} \frac{1}{l} E_{{\cal M}_{(n - m - r)l}}(\rho_{l, n}) \leq \lim_{l \rightarrow \infty} \frac{1}{l} S_{\max}(\rho_{l, n} || \omega_n^{\otimes l}) \nonumber \\ &\leq& \lim_{l \rightarrow \infty} \frac{1}{l} \log g(l) + 2 \log(n) + n h \left( \frac{r}{n - m} \right ) + \lambda_{n, \nu} \nonumber \\ &=& 2 \log(n) + n h \left( \frac{r}{n - m} \right ) + E_{{\cal M}_{n - m - r}}(\pi_n) + \nu n + \log(1 + \tau)
\end{eqnarray}
and, since, $E_{{\cal M}_{n - m - r}}(\pi_n) \leq E_{{\cal M}_{n - m}}(\tr_E(\ket{\tilde \Phi_{n, m, r}}\bra{\tilde \Phi_{n, m, r}}))$,
\begin{eqnarray}
E_{\cal M}^{\infty}(\rho) &=& \liminf_{n \rightarrow \infty} \frac{1}{n - m - r} \left( 2 \log(n) + n h \left( \frac{r}{n - m} \right ) + E_{{\cal M}_{n - m - r}}(\pi_n) + \nu n + \log(1 + \tau) \right) \nonumber \\ &\leq& \liminf_{n \rightarrow \infty} E_{{\cal M}_{n - m}}(\tr_E(\ket{\tilde \Phi_{n, m, r}}\bra{\tilde \Phi_{n, m, r}})) + 2 \nu. \nonumber
\end{eqnarray}
Taking $\nu$ to zero and using Eq. (\ref{agcdnsfrhpsiohfkshf}) we find Eq. (\ref{nonlockability}). 
\end{proof}

\vspace{0.8 cm}

As in the proof above, let $\ket{\theta} \in {\cal H} \otimes {\cal H}_E$ and $\rho := \tr_{E}(\ket{\theta}\bra{\theta})$ be such that $\lambda_{\max}(\rho) < 1$. The next three lemmata concern the following states:
\begin{equation} \label{afdahfdkajhfdkjahdfkjahdfASD}
\ket{\Phi_{n, m, r}} := \sum_{k : |\beta_k| \geq 1/n} \beta_k \text{Sym}(\ket{\eta_k} \otimes \ket{\theta}^{\otimes n - m - k}),
\end{equation}
for complex-valued coefficients $\beta_k$ and states $\ket{\eta_k}$ living in $({\cal H} \bot \ket{\theta})^{\otimes k}$, and 
\begin{equation} \label{pipipipin}
\pi_n := \tr_{1,...,r}\tr_E(\ket{\Phi_{n, m, r}}\bra{\Phi_{n, m, r}}) / \braket{\Phi_{n, m, r}}{\Phi_{n, m, r}}.
\end{equation}

\begin{lemma} \label{trickwithsoivadfsadfsadf} %\ref{,mxzbcv,mxzbvc,mzcvvz}
Let $k_{\max} \leq (n-m)/2$ be the maximum $k$ appearing in Eq. (\ref{afdahfdkajhfdkjahdfkjahdfASD}). Then, for $r \geq k_{\max}$,
\begin{eqnarray} 
        (\ket{\theta}\bra{\theta})^{\otimes n - m - r}
        &\leq&  2^{nh\left(\frac{r}{n - m}\right)}n^2 \tr_{1,...,r}(\ket{{\Phi}_{n, m, r}}\bra{{\Phi}_{n, m, r}}),
\end{eqnarray}
\end{lemma}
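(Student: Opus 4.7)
The plan is to produce an explicit unit vector $\ket{\phi^{*}}$ on the first $r$ copies (call this subsystem $A$; the remaining $n-m-r$ copies form $B$) whose $A$-inner-product with $\ket{\Phi_{n,m,r}}$ is proportional to $\ket{\theta}^{\otimes n-m-r}$, and then to apply the elementary operator bound
\begin{equation*}
\tr_{1,\ldots,r}(\ket{\Phi_{n,m,r}}\bra{\Phi_{n,m,r}}) \geq \ket{\psi^{*}}\bra{\psi^{*}},\qquad \ket{\psi^{*}} := \bra{\phi^{*}}_A \ket{\Phi_{n,m,r}},
\end{equation*}
valid for any unit vector $\ket{\phi^{*}}$ on $A$ (extend it to an ONB and drop the other non-negative rank-one summands in the partial trace). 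The right-hand side will be a known positive multiple of $(\ket{\theta}\bra{\theta})^{\otimes n-m-r}$, and inverting that scalar yields the stated inequality.

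The choice is $\ket{\phi^{*}} := \text{Sym}(\ket{\eta_{k_{\max}}} \otimes \ket{\theta}^{\otimes r - k_{\max}})$, the normalized permutation-symmetric state on $r$ sites with $k_{\max}$ ``defect'' slots (tensor factors of $\ket{\eta_{k_{\max}}}$, each orthogonal to $\ket{\theta}$ per site) and $r - k_{\max}$ $\ket{\theta}$ slots; the hypothesis $r \geq k_{\max}$ is precisely what makes this well defined. The crucial observation is that $\bra{\phi^{*}}_A \ket{T_{k'}} = 0$ for every $k' < k_{\max}$ in the defining sum of $\ket{\Phi_{n,m,r}}$: $\ket{T_{k'}}$ distributes only $k'$ defects among all $n-m$ sites, whereas $\bra{\phi^{*}}$ demands $k_{\max}$ defect slots on $A$, so in every pairing at least one $\bra{\eta_{k_{\max}}}$-slot must meet a $\ket{\theta}$-slot of $\ket{T_{k'}}$ and the per-site orthogonality $\braket{\eta}{\theta}=0$ zeroes it. For the only surviving index $k' = k_{\max}$, the same orthogonality applied in both directions ($\bra{\eta_{k_{\max}}}$-slots of $\bra{\phi^{*}}$ must see defect sites of $\ket{T_{k_{\max}}}$, and $\bra{\theta}$-slots must see $\ket{\theta}$ sites) forces every contributing subset $S \subset [n-m]$ in the expansion $\ket{T_{k_{\max}}} = \binom{n-m}{k_{\max}}^{-1/2}\sum_{S}\ket{\eta_{k_{\max}}}_S \otimes \ket{\theta}^{\otimes n-m-k_{\max}}_{[n-m]\setminus S}$ to equal the $T \subset [r]$ chosen in the matching summand of $\bra{\phi^{*}}$; each such surviving configuration contributes the clean factor $\ket{\theta}^{\otimes n-m-r}$ on $B$.

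Collecting the $\binom{r}{k_{\max}}$ surviving terms and the two $\text{Sym}$ normalizations yields $\bra{\phi^{*}}_A \ket{\Phi_{n,m,r}} = \beta_{k_{\max}} \binom{r}{k_{\max}}^{1/2} \binom{n-m}{k_{\max}}^{-1/2} \ket{\theta}^{\otimes n-m-r}$, so the elementary bound above gives $\tr_{1,\ldots,r}(\ket{\Phi_{n,m,r}}\bra{\Phi_{n,m,r}}) \geq |\beta_{k_{\max}}|^2 \binom{r}{k_{\max}}/\binom{n-m}{k_{\max}} \cdot (\ket{\theta}\bra{\theta})^{\otimes n-m-r}$. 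The truncation defining $\ket{\Phi_{n,m,r}}$ furnishes $|\beta_{k_{\max}}|^2 \geq 1/n^2$; the combinatorial identity $\binom{n-m}{k_{\max}}/\binom{r}{k_{\max}} = \binom{n-m}{r}/\binom{n-m-k_{\max}}{r - k_{\max}} \leq \binom{n-m}{r}$ (valid since $r \geq k_{\max}$) together with $\binom{n-m}{r} \leq 2^{(n-m)h(r/(n-m))} \leq 2^{n h(r/(n-m))}$ then deliver the factor $n^2 \cdot 2^{n h(r/(n-m))}$ upon inversion. The only genuinely content-bearing step is the orthogonality counting that pins down $S = T \subset [r]$ for the surviving terms; everything else is bookkeeping with binomial and $\text{Sym}$ normalizations.
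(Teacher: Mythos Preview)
Your proof is correct, and it reaches the same inequality through a somewhat different route than the paper.

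The paper picks the \emph{unsymmetrized} vector $\ket{\phi}=\ket{\eta_{k_{\max}}}\otimes\ket{\theta}^{\otimes n-m-k_{\max}}$ on all $n-m$ sites, writes $\ket{\Phi_{n,m,r}}=c\ket{\phi}+c'e^{i\vartheta}\ket{\phi^{\bot}}$ with $c=\binom{n-m}{k_{\max}}^{-1/2}\beta_{k_{\max}}$, and argues that $\tr_{1,\ldots,k_{\max}}(\ket{\phi}\bra{\phi^{\bot}})=0$ because every product component of $\ket{\phi^{\bot}}$ carries a $\ket{\theta}$ somewhere in the first $k_{\max}$ registers. This gives $(\ket{\theta}\bra{\theta})^{\otimes n-m-k_{\max}}\leq |c|^{-2}\tr_{1,\ldots,k_{\max}}(\ket{\Phi}\bra{\Phi})$; the remaining $r-k_{\max}$ registers are traced out at the end, and the scalar is bounded via $\binom{n-m}{k_{\max}}\leq 2^{(n-m)h(k_{\max}/(n-m))}$ together with monotonicity of $h$ on $[0,1/2]$ (this is where the hypothesis $k_{\max}\leq (n-m)/2$ enters).

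Your approach instead takes a \emph{symmetrized} test vector $\ket{\phi^{*}}=\text{Sym}(\ket{\eta_{k_{\max}}}\otimes\ket{\theta}^{\otimes r-k_{\max}})$ supported on the first $r$ sites only, and uses the one-shot bound $\tr_{1,\ldots,r}(\ket{\Phi}\bra{\Phi})\geq \ket{\psi^{*}}\bra{\psi^{*}}$ with $\ket{\psi^{*}}=\bra{\phi^{*}}_{A}\ket{\Phi}$. Your orthogonality-counting argument that forces $S=T\subset[r]$ is precisely the same per-site mechanism the paper exploits, but applied through a partial inner product rather than a cross-term vanishing under partial trace; the resulting computation $\bra{\phi^{*}}_{A}\ket{\Phi}=\beta_{k_{\max}}\binom{r}{k_{\max}}^{1/2}\binom{n-m}{k_{\max}}^{-1/2}\ket{\theta}^{\otimes n-m-r}$ is exactly the identity used elsewhere in the paper (cf.\ the relation $(\bra{\theta}^{\otimes m}\otimes\id)\text{Sym}(\ket{\eta_{k}}\otimes\ket{\theta}^{\otimes N-k})=\binom{N}{k}^{-1/2}\binom{N-m}{k}^{1/2}\text{Sym}(\ket{\eta_{k}}\otimes\ket{\theta}^{\otimes N-k-m})$). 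Your final combinatorial step, $\binom{n-m}{k_{\max}}/\binom{r}{k_{\max}}=\binom{n-m}{r}/\binom{n-m-k_{\max}}{r-k_{\max}}\leq\binom{n-m}{r}$, is a nice alternative to the paper's monotonicity-of-$h$ argument and in fact does not require $k_{\max}\leq(n-m)/2$. Overall: same orthogonality idea, different packaging; yours traces out all $r$ copies in one step and handles the binomials a bit more cleanly.
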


\begin{proof}
Let $\ket{\phi} := \ket{\eta_{k_{\max}}} \otimes \ket{\theta}^{\otimes n - m - k_{\max}}$. Then 
\begin{equation} \label{,mxzbcv,mxzbvc,mzcvvz}
\ket{{\Phi}_{n, m, r}} = c\ket{\phi} + c' e^{i\vartheta}\ket{\phi^{\bot}},
\end{equation}
where
\begin{equation} \label{cccccccccccccccccc}
c := \binom{n - m}{k_{\max}}^{-1/2}\beta_{k_{\max}},
\end{equation}
$\vartheta \in \mathbb{R}$, $c' \geq 0$, and $\ket{\phi^{\bot}}$ is a state orthogonal to $\ket{\phi}$. From Eq. (\ref{afdahfdkajhfdkjahdfkjahdfASD}), we can write $\ket{\phi^{\bot}}$ as a superposition of states of the form $\ket{f_1} \otimes ... \otimes \ket{f_{n - m}}$, where at least in one of the first $k_{\max}$ registers, $\ket{f_i} = \ket{\theta}$. Therefore, as $\ket{\eta_{k_{\max}}}$ lives in $({\cal H} \bot \ket{\theta})^{\otimes k_{\max}}$, we get $\tr_{1, ..., k_{\max}}(\ket{\phi}\bra{\phi^{\bot}}) = 0$ and thus
\begin{eqnarray}
\tr_{1, ... k_{\max}}(\ket{{\Phi}_{n, m, r}}\bra{{\Phi}_{n, m, r}}) &=& |c|^2 \tr_{1, ... k_{\max}}(\ket{\phi}\bra{\phi}) + (c')^2\tr_{1, ... k_{\max}}(\ket{\phi^{\bot}}\bra{\phi^{\bot}}) \nonumber \\ &\geq& |c|^2 \tr_{1, ... k_{\max}}(\ket{\phi}\bra{\phi}) \nonumber \\ &=& |c|^2 (\ket{\theta}\bra{\theta})^{\otimes n - m - k_{\max}}.
\end{eqnarray}
From Eq. (\ref{cccccccccccccccccc}),
\begin{equation}
(\ket{\theta}\bra{\theta})^{\otimes n - m - k_{\max}} \leq \binom{n - m}{k_{\max}}|\beta_{k_{\max}}|^{-2} \tr_{1, ... k_{\max}}(\ket{{\Phi}_{n, m, r}}\bra{{\Phi}_{n, m, r}}).
\end{equation}

Note that $|\beta_{k_{\max}}|^{-2} \leq n^{2}$ and the entropic bound $\binom{n}{k} \leq 2^{nh(k/n)}$ (see e.g. Lemma 17.5.1 of \cite{CT91}). Moreover, from the monotonicity of the binary entropy in the interval $[0, 1/2]$, $h(k_{\max}/(n - m)) \leq h(r/(n - m))$. Therefore, 
\begin{equation}
(\ket{\theta}\bra{\theta})^{\otimes n - m - k_{\max}} \leq 2^{nh\left(\frac{r}{n - m}\right)}n^2 \tr_{1, ... k_{\max}}(\ket{{\Phi}_{n, m, r}}\bra{{\Phi}_{n, m, r}}).
\end{equation}
The lemma follows by tracing out the first $r - k_{\max}$ registers in the equation above.
\end{proof}

\vspace{0.8 cm}

As in the proof of the direct part of Proposition \ref{maincompact}, let $\tilde{\omega}_n$ be such that $E_{{\cal M}_{n - m - r}}(\pi_n) = S(\pi_n || \tilde \omega_n)$ and define 
\begin{equation} \label{tildeomega}
\omega_n := \frac{1}{1 + \tau}\tilde \omega_n + \frac{\tau}{1 + \tau}\sigma^{\otimes n - m - r},
\end{equation}
with $\tau > 0$.

\begin{lemma} \label{mainpart3}
Let $\omega_n$ be given by Eq. \ref{tildeomega}, $\pi_n$ by Eq. (\ref{pipipipin}), and $\lambda$ be such that 
\begin{equation}
\lambda = \lambda_{n, \nu} \geq S(\pi_n || \omega_n) + \nu n,
\end{equation}
for $\nu > 0$. Then, there is a $\gamma > 0$ and a sequence $\{ l_n \}_{n \in \mathbb{N}}$ such that for sufficiently large $n$ and $l \geq l_{n}$, 
\begin{equation} \label{finalfinala}
        \tr(\pi_n^{\otimes l} - 2^{\lambda_{n, \nu} l} 
        \omega_n^{\otimes l})_+ \leq 2^{- \gamma n l},
\end{equation}
\end{lemma}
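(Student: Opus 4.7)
The plan is to apply Lemma \ref{ogawanagaokaadaptation} to the i.i.d.\ hypothesis-testing problem between the tensor-power states $\pi_n^{\otimes l}$ and $\omega_n^{\otimes l}$. Writing $N := n-m-r$ and $D := \dim(\cH)^N$, the lemma yields, for any $s \in [0,1]$ and $\mu \in \RR$,
\begin{equation*}
        \tr(\pi_n^{\otimes l} - 2^{\lambda_{n,\nu}\, l}\, \omega_n^{\otimes l})_+ \;\leq\; 2^{-l\,E_1} + 2^{-l\,E_2},
\end{equation*}
with $E_1 := s\mu - \log\tr(\pi_n^{1+s})$ and $E_2 := s(\lambda_{n,\nu}-\mu) - s\,D\,\log(1+l)/l - \log\tr(\pi_n\omega_n^{-s})$. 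The task is to choose $s$ and $\mu$ so that both $E_1, E_2 \geq \gamma n$ for some $\gamma = \gamma(\nu) > 0$ independent of $n$; the exponential-in-$n$ factor $D$ will then be absorbed by taking $l_n$ exponentially large in $n$.

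My choice is $\mu := -S(\pi_n) + \nu n/2$, which cancels $\pm S(\pi_n)$ symmetrically between the two exponents and splits the margin $\nu n$ evenly. Taylor-expanding to second order, $-\log\tr(\pi_n^{1+s}) = s S(\pi_n) - \frac{s^2}{2}V_1 + r_1(s)$ and $-\log\tr(\pi_n\omega_n^{-s}) = -s\bigl(S(\pi_n || \omega_n) + S(\pi_n)\bigr) - \frac{s^2}{2}V_2 + r_2(s)$, with $V_1 := \var_{\pi_n}(-\log\pi_n)$, $V_2 := \var_{\pi_n}(-\log\omega_n)$, and $r_i(s)$ the third-order remainders governed by third cumulants. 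Substituting and using $\lambda_{n,\nu} \geq S(\pi_n || \omega_n) + \nu n$, both exponents reduce to $\frac{s\nu n}{2} - \frac{s^2}{2}V_i + r_i(s)$, with the extra $-sD\log(1+l)/l$ correction only in $E_2$.

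The main obstacle is the structural estimate that $V_1$, $V_2$ and the third cumulants underlying $r_1, r_2$ grow only linearly in $n$, with constants independent of $n$: a priori, $\|{-\log\omega_n}\|_\infty$ and $\|{-\log\pi_n}\|_\infty$ are $\Theta(n)$, so the $k$-th cumulant could naively be as large as $\Theta(n^k)$. For $-\log\omega_n$, the defining mixture bound $\omega_n \geq \frac{\tau}{1+\tau}\sigma^{\otimes N}$ yields the operator inequality $-\log\omega_n \leq \log\frac{1+\tau}{\tau}\, I + \sum_{j=1}^N (-\log\sigma)_j$, i.e., $-\log\omega_n$ is dominated---up to a bounded additive constant---by a sum of $N$ uniformly bounded one-copy operators. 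Combined with the permutation-invariance of $\pi_n$ on the $N$ remaining systems and the closeness of its low-body marginals to tensor powers of $\rho$ (inherited from the almost-power-state origin of $\pi_n$ via Lemma \ref{variantexpdefineti}), the cumulants of $-\log\omega_n$ under $\pi_n$ then behave, up to lower-order terms, as those of a sum of $N$ i.i.d.\ bounded random variables and are thus $O(n)$. The cumulants of $-\log\pi_n$ under $\pi_n$ are handled analogously using the operator inequality $\rho^{\otimes N} \leq 2^{nh(r/(n-m))}\,n^2\, \pi_n$ from Lemma \ref{trickwithsoivadfsadfsadf} together with the resulting entropy bound $S(\pi_n) \leq N S(\rho) + o(n)$.

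Once these $O(n)$ cumulant bounds are in hand, fix $s := \nu/(4C)$ with $C$ a uniform constant bounding $V_i/n$ and the absolute values of the relevant third cumulants divided by $n$. The quadratic and cubic corrections then absorb at most $\frac{s\nu n}{4}$, giving $E_1, E_2 \geq \frac{s\nu n}{4} - s D \log(1+l)/l$. Choosing $l_n$ large enough---necessarily exponentially in $n$, since $D$ is---that $sD\log(1+l)/l \leq \frac{s\nu n}{8}$ for all $l \geq l_n$ yields $E_1, E_2 \geq \gamma n$ with $\gamma := s\nu/8 = \nu^2/(32\,C)$. The two-term bound then gives $\tr(\pi_n^{\otimes l} - 2^{\lambda_{n,\nu} l}\omega_n^{\otimes l})_+ \leq 2 \cdot 2^{-\gamma n l}$, and absorbing the leading factor of $2$ into a slightly smaller $\gamma$ completes the proof.
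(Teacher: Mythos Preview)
Your overall strategy---apply Lemma~\ref{ogawanagaokaadaptation}, choose $\mu$ to split the margin $\nu n$ between the two exponents, and Taylor-expand to second order---matches the paper's. The genuine gap is in your ``main obstacle'' paragraph: the arguments you give do \emph{not} establish the $O(n)$ bounds on $V_1$, $V_2$ and the higher cumulants.

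For $V_2$, the operator inequality $-\log\omega_n \leq \log\frac{1+\tau}{\tau}\,I + \sum_j (-\log\sigma)_j$ is only an \emph{upper bound} on the observable, not an approximation to it. An operator bound $A \leq B$ says nothing about $\var_\pi(A)$ in terms of $\var_\pi(B)$: one can have $\var_\pi(A)$ arbitrarily large while $\var_\pi(B)=0$. The state $\tilde\omega_n$ is an arbitrary minimizer of $E_{{\cal M}_N}(\pi_n)$ and need not have any product or near-product structure, so there is no mechanism by which the cumulants of $-\log\omega_n$ under $\pi_n$ should behave like those of an i.i.d.\ sum. Invoking the closeness of the marginals of $\pi_n$ to $\rho^{\otimes k}$ does not help either: that constrains the state, not the global observable $-\log\omega_n$. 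The paper's route here is quite different: it writes the normalized $s$-tilted second derivative $|f_n''(s)|$ explicitly in terms of the eigenvalues $\{\mu_{j,n}\}$ of $\omega_n$ and certain weights $\{t_{j,n}\}$, then \emph{maximizes} over all spectra compatible with the constraint $\mu_{j,n} \geq \frac{\tau}{1+\tau}\lambda_{\min}(\sigma)^{N}$. The maximum is attained when all but one eigenvalue sit at this floor, and the resulting expression is controlled only after invoking $\lambda_{\max}(\pi_n) \leq 2^{o(n)}\lambda_{\max}(\rho)^{n-o(n)}$ together with the standing hypothesis $\lambda_{\max}(\rho)<1$, which forces the exceptional weight $t_{k,n}$ to be $O(1/n)$. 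Your argument never uses $\lambda_{\max}(\rho)<1$, and without it the bound fails.

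For $V_1$, the paper avoids the issue entirely. Rather than Taylor-expanding $-\log\tr(\pi_n^{1+s})$ (which would indeed require controlling $\var_{\pi_n}(-\log\pi_n)$), it first uses the operator inequality $\pi_n \leq 2^{o(n)}\sum_j p_j\rho_j$, with each $\rho_j$ a permutation of $\rho^{\otimes N-r'}\otimes\sigma_{r'}$, together with Schur-convexity of $x\mapsto x^{1+s}$, to obtain $-\log\tr(\pi_n^{1+s}) \geq -n\log\tr(\rho^{1+s}) - o(n)$. The Taylor expansion is then of the \emph{fixed} one-copy function $g(s):=-\log\tr(\rho^{1+s})$, whose second derivative is trivially bounded independently of $n$. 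Your reference to Lemma~\ref{trickwithsoivadfsadfsadf} points in the right direction, but the entropy bound $S(\pi_n)\leq NS(\rho)+o(n)$ alone does not control $\var_{\pi_n}(-\log\pi_n)$.
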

\begin{proof}
From Lemma \ref{ogawanagaokaadaptation},
\begin{eqnarray} \label{boundstwo}
\tr(\pi_n^{\otimes l} - 2^{\lambda l} \omega_n^{\otimes l})_+ \leq 2^{-lp(s)} + 2^{-lq(s)}, 
\end{eqnarray}
with $p_n(s) := (s\mu - \log \tr(\pi_n^{1 + s}))$ and $q_n(s) := (s(\lambda - \mu) - s D^{n - m - r} \frac{\log(1 + l)}{l} - \log \tr(\pi_n \omega_n^{-s}))$. We set $\mu = (\nu/2 - S(\rho))n$ and show that each of the two bounds in the equation above is smaller than $2^{-\gamma n l}$, for a given constant $\gamma$ and sufficiently large $n$ and $l \geq l_{n}$. 

From Eq. (\ref{approx1}) we can write $\pi_n = \tr_{1, ..., r}\tr_E(\ket{\Psi_{\pi_n}}\bra{\Psi_{\pi_n}})$ (identifying $\ket{\Psi_{\pi_n}}$ and $\ket{\Phi_{n, m, r}} / || \ket{\Phi_{n, m, r}} ||$), with
\begin{equation}
\ket{\Psi_{\pi_n}} := \sum_{k=0}^r \alpha_k \text{Sym}(\ket{\chi_k} \otimes \ket{\theta}^{\otimes n - m - k}),
\end{equation} 
where $\sum_{k=0}^r |\alpha_k|^2 = 1$ and 
\begin{equation} \label{kasfksajhfdshdflsajfdalkfd}
\ket{\chi_k} \in ({\cal H} \bot \ket{\theta})^{\otimes k}. 
\end{equation}
Each $\text{Sym}(\ket{\chi_k} \otimes \ket{\theta}^{\otimes n - m - k})$ is a superposition of $\binom{n - m}{k}$ terms which, up to permutation of the copies and normalization, have the form $\ket{\chi_k} \otimes \ket{\theta}^{\otimes n - m - k}$; let us denote these by $\ket{\psi_{k, j}}$. from Eq. (\ref{kasfksajhfdshdflsajfdalkfd}), we get $|\braket{\psi_{k, j}}{\psi_{k', j'}}| = \delta_{kk'}\delta_{jj'}$. Therefore we can write 
\begin{equation}
\ket{\Psi_{\pi_n}} = \sum_{k=0}^r \sum_{j=1}^{\binom{n-m}{k}} \varsigma_{k, j} \ket{\psi_{k, j}}, 
\end{equation} 
with $\sum_{k, j}|\varsigma_{k, j}|^2 = 1$. By Lemma \ref{supversusconvex},
\begin{equation}
\ket{\Psi_{\pi_n}} \bra{\Psi_{\pi_n}} \leq  (r+1)\binom{n-m}{r}\sum_{k, j} |\varsigma_{k, j}|^2 \ket{\psi_{k, j}}\bra{\psi_{k, j}},
\end{equation}
where we used that since $k, m, r = o(n)$, $\binom{n-m}{k} \leq \binom{n-m}{r}$ for every $k \leq r$. Tracing out $E$ and the first $r$ copies in
both sides of the equation above, we find
\begin{equation} \label{supconv}
\pi_n \leq (r + 1) \binom{n - m}{r} \sum_{j} p_j \rho_j \leq (r + 1)2^{(n-m) h\left( \frac{r}{n-m} \right)} \sum_j p_j \rho_j,  
\end{equation}
where $\{ p_j \}$ is a probability distribution and each $\rho_{j}$ is of the form $\rho^{\otimes n - m - r} \otimes \sigma_r$, up to permutations of the copies, with an arbitrary state $\sigma_r$ acting on ${\cal H}^{\otimes r}$. 

Then, by the Schur-convexity of the function $h(x) = x^{1 + s}$ ($s \geq 0$),
\begin{eqnarray}
\tr(\pi_n^{1 + s}) &\leq& (r+1)^{1 + s}2^{(n-m) h\left( \frac{r}{n-m} \right)(1 + s)} \tr((\sum_j p_j \rho_j)^{1 + s}) \nonumber \\ &\leq& (r+1)^{1 + s}2^{(n-m) h\left( \frac{r}{n-m} \right)(1 + s)} \sum_j p_j \tr(\rho_j^{1 + s}),
\end{eqnarray}
from which follows that, with $h_{n, m, r, s} := - (1 + s)(\log(r + 1) + (n-m) h\left( \frac{r}{n-m} \right))$,
\begin{eqnarray}
- \log \tr(\pi_n^{1 + s}) &\geq& h_{n, m, r, s} - \max_j \log \tr(\rho_j^{1 + s}) \nonumber \\ &=& h_{n, m, r, s} - \max_j \log \tr((\sigma_j)^{1 + s}) - (n - m - r)\log \tr (\rho^{1 + s}) \nonumber \\  &\geq& h_{n, m, r, s} + (m + r)\log \tr (\rho^{1 + s}) - n\log \tr (\rho^{1 + s}),
\end{eqnarray}
where the last inequality follows from $\tr((\sigma_j)^{1 + s}) \leq 1$. Note that the first two terms in the equation above are $o(n)$. Therefore
\begin{equation}
- \log \tr(\pi_n^{1 + s}) \geq - n \log \tr(\rho^{1 + s}) - o(n).
\end{equation}

Letting $g(s) := - \log \tr (\rho^{1 + s})$, we see that $g(0) = 0$ and 
$g'(0) = S(\rho)$. Then, 
\begin{eqnarray}
p_n(s) &=& s(\nu/2 - S(\rho))n - \log \tr(\pi_n^{1+s}) \nonumber \\ &\geq& s(\nu/2 - S(\rho))n - n \log \tr(\rho^{1 + s}) - o(n) \nonumber \\
&\geq& n s\nu/2 - n \max_{0\leq t \leq s}| g''(t)| s^2 - o(n)  .     
\end{eqnarray}
Thus there is a $s$ small enough, independent of $n$, such that for sufficiently large $n$, $p_{n}(s) \geq n s \nu/4$.

Considering the second bound in Eq. (\ref{boundstwo}), let $f_n(s) := - \frac{1}{n}\log \tr \left( \pi_n \omega_n^{-s} \right)$. As $\omega_n$ is full rank, we find from Taylor's Theorem, 
\begin{equation}
- \frac{1}{n}\log \tr \left( \pi_n \omega_n^{-s} \right) = f_n(0) + f_n'(0)s + f_n''(t_{s, n})s^2/2,
\end{equation}
for some real number $t_{s, n} \leq s$. A simple calculation shows that $f_n(0) = 0$, 
\begin{equation}
        f_n'(0) =  \frac{1}{n} \tr(\pi_n \log  \omega_n),
\end{equation}
and
\begin{equation} \label{f''}
        f_n''(s) = - \frac{1}{n} \left( \frac{\tr(\pi_n \omega_n^{-s} (\log \omega_n)^2)}
        {\tr(\pi_n \omega_n^{-s})} - \left( \frac{\tr(\pi_n  \omega_n^{-s} \log \omega_n)}
        {\tr(\pi_n \omega_n^{-s})}\right)^2 \right).
\end{equation}
We next show that there is a $s$ sufficiently small, but independent of $n$, such that
\begin{equation} \label{sup0}
\max_{0 \leq t \leq s}| f_n''(t) | \leq 1 
\end{equation}
for $n$ sufficiently large. Hence
\begin{eqnarray}
q_n(s) &\geq& s(n\nu/2 + S(\pi_n || \omega_n) + nS(\rho) + \tr(\pi_n \log \omega_n))  - s D^{n - m - r} \frac{\log(1 + l)}{l} - n \max_{0 \leq t \leq s}| f_n''(t) | s^2   \nonumber \\
&\geq& \frac{s \nu n}{2} + s(nS(\rho) - S(\pi_n)) - s D^{n - m - r} \frac{\log(1 + l)}{l} - ns^2.
\end{eqnarray}
Using Lemma \ref{nonlockentropyalmostppowerstates}, choosing $s$ sufficiently small and $l_n$ such that $D^{n - m - r} \frac{\log(1 + l_n)}{l_n} = o(n)$, we find $q_n(s) \geq n s \nu/4$, for sufficiently large $n$ and $l \geq l_n$. 

In order to prove Eq. (\ref{sup0}), we consider the basis where 
$\pi_n$ is diagonal  
\begin{equation}
        \pi_n = \text{Diag}(\lambda_{1, n}, \lambda_{2, n}, ...).
\end{equation}
and write $\omega_n$ in this basis 
\begin{equation}
        \omega_n = U \text{Diag}
        (\mu_{1, n}, \mu_{2, n}, ...) U^{\cal y},
\end{equation}
where $U$ is a unitary. Note that Eq. (\ref{tildeomega}) gives 
\begin{equation} \label{nonameuntilnow}
\omega_n = \frac{1}{1 + \tau}\tilde \omega_n + \frac{\tau}{1 + \tau}\sigma^{\otimes n - m - r} \geq \frac{\tau}{1 + \tau}\sigma^{\otimes n - m - r} \geq \frac{\tau}{1 + \tau} \lambda_{\min}(\sigma)^{n - m - r}.
\end{equation}
where $\lambda_{\min}(\sigma) > 0$ is the minimum eigenvalue of $\sigma$. 

From Eq. (\ref{f''}) it follows that we can write
\begin{equation}
|f_n''(s)| = \frac{1}{n} \left( \sum_{j} t_{j, n} (\log \mu_{j, n})^2 - \left( \sum_{j} t_{j, n} \log \mu_{j, n}   \right)^2 \right),
\end{equation}
where $\{ t_{j, n} \}$ is the probability distribution given by
\begin{equation}
t_{j, n} := \frac{\mu_{j, n}^{-s} \sum_{i} \lambda_{i, n} |U_{i, j}|^2}{\sum_{i, j}\lambda_{i, n} \mu_{j, n}^{-s} |U_{i, j}|^2}.
\end{equation}
Clearly we can upper bound the function $|f_n''(s)|$ by maximizing over the $\mu_{j, n}$ while keeping the probabilities $t_{j, n}$ fixed. We extend the set of allowed $\mu_{j, n}$ even more and consider all probability distributions for which $\mu_{j, n} \geq \frac{\tau}{1 + \tau} \lambda_{\min}(\sigma)^{n - m - r}$. We are hence interested in maximizing the function
\begin{equation}
g(\mu_{1, n}, \mu_{2, n},  ...) = \frac{1}{n} \left( \sum_{j} t_{j, n} (\log \mu_{j, n})^2 - \left( \sum_{j} t_{j, n} \log \mu_{j, n}   \right)^2 \right)
\end{equation}
over the set of probability distributions $\{ \mu_{j, n} \}$ such that
\begin{equation} \label{setset}
\mu_{j, n} \geq \frac{\tau}{1 + \tau} \lambda_{\min}(\sigma)^{n - m - r},
\end{equation}
for all $j$.

The function $g$ will reach its maximum either on its extreme points or on the boundary of the set in which the maximization is performed. A simple calculation gives
\begin{equation}
\frac{\partial g}{\partial \mu_{k, n}} = \frac{1}{n} \left( 2 t_{k, n} \frac{\log \mu_{k, n}}{\mu_{k, n}} - 2 \left( \sum_{j} t_{j, n} \log \mu_{j, n} \right) \frac{t_{k, n}}{\mu_{k, n}} \right) = 0 \Rightarrow \log \mu_{k, n} = \sum_i t_{i, n} \log \mu_{i, n}.
\end{equation}
Hence, in the extreme points of $g$ all the $\mu_{k, n}$ are equal and it is then easy to see that $g(\mu, \mu, ...) = 0$. As $g$ is positive, it then follows that the maximum of $g$ is attained on the boundary of the set in which the maximization is performed. Such boundary is composed of subsets of the original set given by Eq. (\ref{setset}) in which at least one of the $\mu_{j, n}$ is equal to $\frac{\tau}{1 + \tau} \lambda_{\min}(\sigma)^{n - m - r}$. Setting $\mu_{k, n} = \frac{\tau}{1 + \tau} \lambda_{\min}(\sigma)^{n - m - r}$, the new function to be maximized is 
\begin{equation}
\tilde g(\mu_{1, n},  ..., \mu_{k - 1, n}, \mu_{k + 1, n}, ...) = \frac{1}{n} \left( \sum_{j} t_{j, n} (\log \mu_{j, n})^2 - \left( \sum_{j} t_{j, n} \log \mu_{j, n}   \right)^2 \right),
\end{equation}
where now $\mu_{k, n} = \frac{\tau}{1 + \tau} \lambda_{\min}(\sigma)^{n - m - r}$ is a constant. Proceeding exactly as before, we find again that all the extreme points of $\tilde g$ are again minima of the function and, hence, the maximum of $\tilde g$ is attained once more on the boundary of the the set of probabilities allowed. This, in turn, is given by the union of subsets of the set given by Eq. (\ref{setset}) in which at least two of the $\mu_{k, n}$ are equal to $\frac{\tau}{1 + \tau} \lambda_{\min}(\sigma)^{n - m - r}$. We can continue with this process to show that all $\mu_{k, n}$ except one are equal to $\frac{\tau}{1 + \tau} \lambda_{\min}(\sigma)^{n - m - r}$. We hence find that the optimal choice of parameters is given by
\begin{equation}
\begin{cases}
\tilde \mu_{j, n} = \frac{\tau}{1 + \tau} \lambda_{\min}(\sigma)^{n - m - r} & \text{if} \hspace{0.1 cm} j \neq k, \\
\tilde \mu_{k, n} = 1 + \frac{\tau}{1 + \tau}\lambda_{\min}(\sigma)^{n - m - r} - \frac{\tau}{1 + \tau}\lambda_{\min}(\sigma)^{n - m - r}, & \text{otherwise}
\end{cases}
\end{equation}
for some integer $k$. Let 

$M := \frac{\tau}{1 + \tau} \lambda_{\min}(\sigma)^{n - m - r}$ and $N := 1 + \frac{\tau}{1 + \tau}\lambda_{\min}(\sigma)^{n - m - r} - \frac{\tau}{1 + \tau}\lambda_{\min}(\sigma)^{n - m - r}$. It then follows that
\begin{eqnarray} \label{Eqtkn}
g(\tilde \mu_{1, n}, \tilde \mu_{2, n}, ...) &=& \frac{1}{n} \left(  (1 - t_{k, n})t_{k, n} \left( \log M \right)^2 + t_{k, n}\left( \log N  \right)^2 \right. \nonumber \\ &-& \left. t_{k, n}^2 \left( \log   N   \right)^2 -  2 t_{k, n}(1 - t_{k, n}) \left( \log  M  \log  N \right)   \right)
\end{eqnarray}

We have
\begin{equation} \label{boundontauand1plustau}
\left | \log M \right|, \left |  \log N   \right| \leq 2 \log(\lambda_{\min}^{-1}(\sigma)) n,
\end{equation}
for sufficiently large $n$, and
\begin{eqnarray}
t_{k, n} &=& \frac{\mu_{k, n}^{-s} \sum_{i} \lambda_{i, n} |U_{i, k}|^2}{\sum_{i, j}\lambda_{i, n} \mu_{j, n}^{-s} |U_{i, j}|^2} \nonumber \\ &\leq&  \frac{\lambda_{\max}(\pi_n) \sum_{i}|U_{i, k}|^2}{(\tau/((1 + \tau)D^n))^s \sum_{i, j}\lambda_{i, n}  |U_{i, j}|^2} \nonumber \\ &=& \lambda_{\max}(\pi_n) \left(  \frac{(1 + \tau)\lambda_{\min}(\sigma)^{-n + m + r}}{\tau} \right)^s, 
\end{eqnarray}
where the second inequality follows from $1 \geq \mu_{j, n} \geq \frac{\tau}{1 + \tau} \lambda_{\min}(\sigma)^{n - m - r}$, which is a direct consequence of Eq. (\ref{nonameuntilnow}). 

From Eq. (\ref{supconv}), we have the bound
\begin{equation}
\lambda_{\max}(\pi_n) \leq 2^{o(n)} \lambda_{\max}(\sum_{i}p_i \rho_j) \leq 2^{o(n)} \lambda_{\max}(\rho)^{n - o(n)}.
\end{equation} 
Thus
\begin{equation}
t_{k, n} \leq 2^{o(n)}\left(  \frac{(1 + \tau)}{\tau} \right)^s    (\lambda_{\min}(\sigma)^{-s} \lambda_{\max}(\rho))^n  \lambda_{\max}(\rho)^{-o(n)}. 
\end{equation}
As by assumption $\lambda_{\max}(\rho) < 1$, choosing $s <  \log( \lambda_{\max}(\rho))/ \log(\lambda_{\min}(\sigma))$, we get that for $n$ sufficiently large, $t_{k, n} \leq  (10 \log \lambda_{\min}^{-1}(\sigma)n)^{-1}$. Then, from Eqs. (\ref{Eqtkn}) and (\ref{boundontauand1plustau}), 
\begin{eqnarray}
g(\tilde \mu_{1, n}, \tilde \mu_{2, n}, ...) &\leq& 2 \log \lambda_{\min}^{-1}(\sigma)  n \left((1 - t_{k, n})t_{k, n} + t_{k, n} + t_{k, n}^2 + 2(1 - t_{k, n})t_{k, n} \right) \nonumber \\ &\leq& 10 \log \lambda_{\min}^{-1}(\sigma) t_{k, n} \leq 1,
\end{eqnarray}
and we are done.
\end{proof}

\vspace{0.8 cm}

The final lemma of this section relates the entropy of an almost power state along $\rho$ with its own entropy. 

\begin{lemma} \label{nonlockentropyalmostppowerstates}
Let $\pi_n$ be given by Eq. (\ref{pipipipin}) with $k, r = o(n)$. Then
\begin{equation}
S(\pi_n) \leq n S(\rho) + o(n).
\end{equation}
\end{lemma}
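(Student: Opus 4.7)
The plan is to exploit the purity of $\ket{\Phi_{n,m,r}}$ and subadditivity of the von Neumann entropy to reduce the problem to a single-copy entropy bound on an environment factor, which is then controlled via a structural property of the almost-power-state span.

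First, I would pass to the normalized state $\ket{\Phi} := \ket{\Phi_{n,m,r}}/\|\ket{\Phi_{n,m,r}}\|$, a permutation-symmetric pure state on $({\cal H} \otimes {\cal H}_E)^{\otimes n-m}$ satisfying $\pi_n = \tr_{E, 1 \ldots r}(\ket{\Phi}\bra{\Phi})$. Purity gives $S(\pi_n) = S(A_1 \ldots A_r E_1 \ldots E_{n-m})$, and two applications of subadditivity together with the permutation symmetry of $\ket{\Phi}$ then yield
\begin{equation}
S(\pi_n) \leq S(A_1 \ldots A_r) + \sum_{i=1}^{n-m} S(E_i) \leq r \log \dim({\cal H}) + (n-m) S(E_1),
\end{equation}
so that the task reduces to showing $S(E_1) \leq S(\rho) + o(1)$.

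To bound $S(E_1)$, I would consider the marginal $\omega$ of $\ket{\Phi}\bra{\Phi}$ on the pair $(A_1, E_1)$ together with the counting operator $N_\theta := \sum_{i=1}^{n-m}(\ket{\theta}\bra{\theta})_i$. By symmetry, $\bra{\theta}\omega\ket{\theta} = \bra{\Phi} N_\theta \ket{\Phi}/(n-m)$. The key structural observation is that $\text{span}({\cal V}(({\cal H}\otimes{\cal H}_E)^{\otimes n-m}, \ket{\theta}^{\otimes n-m-r}))$, expanded in an orthonormal product basis whose single-site factor contains $\ket{\theta}$, coincides with the linear span of those basis vectors having $\ket{\theta}$ in at least $n-m-r$ of the $n-m$ positions; since $\ket{\Phi} \in \ket{\theta}^{[\otimes, n-m, r]}$ lies in this span, one concludes $\bra{\Phi} N_\theta \ket{\Phi} \geq n-m-r$ and hence $\bra{\theta}\omega\ket{\theta} \geq 1 - r/(n-m)$. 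The Fuchs--van de Graaf inequality then gives
\begin{equation}
\|\omega - \ket{\theta}\bra{\theta}\|_1 \leq 2\sqrt{r/(n-m)} = o(1),
\end{equation}
and contractivity of the partial trace under $\tr_{A_1}$ transfers this to $\|\omega_{E_1} - \rho_E\|_1 = o(1)$, where $\rho_E := \tr_A(\ket{\theta}\bra{\theta})$ satisfies $S(\rho_E) = S(\rho)$. Fannes' inequality then yields $|S(E_1) - S(\rho)| = o(1)$.

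Combining these bounds gives $S(\pi_n) \leq o(n) + (n-m)(S(\rho) + o(1)) = nS(\rho) + o(n)$, as required. The main obstacle is the structural identification of $\text{span}({\cal V})$ via the product basis in the second step; once it is in hand, the $N_\theta$-argument together with Fuchs--van de Graaf and Fannes combine cleanly to deliver the closeness of the single-site environment marginal to $\rho_E$, and the rest is bookkeeping.
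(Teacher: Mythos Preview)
Your argument is correct and takes a genuinely different route from the paper. The paper proceeds by projecting the state $\ket{\Phi_{n,m,r}}$ with the typical projector of $\rho^{\otimes n-m-r}$, applies Lemma~\ref{trickwithsoivadfsadfsadf} to show that the projected $\rho^{\otimes n-m-r}$ is operator-dominated (up to a $2^{o(n)}$ factor) by the resulting reduced state $\pi_n'$, and from this extracts a lower bound $\lambda_{\min}(\pi_n') \geq 2^{-nS(\rho)-o(n)}$; the entropy bound then follows from $S(\pi_n') \leq -\log\lambda_{\min}(\pi_n')$, with Fannes' inequality transferring it to $\pi_n$. Your approach instead uses purity and subadditivity to reduce everything to a single-site environment entropy, and then controls that site by the counting-operator estimate $\bra{\Phi}N_\theta\ket{\Phi}\geq n-m-r$, which is a clean consequence of the product-basis description of $\text{span}({\cal V})$. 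Your proof is more elementary---it avoids typical projectors, the operator-domination Lemma~\ref{trickwithsoivadfsadfsadf}, and the eigenvalue trick---and isolates exactly the piece of the almost-power-state structure that matters for the entropy bound. The paper's approach, on the other hand, reuses machinery (Lemma~\ref{trickwithsoivadfsadfsadf}, typical projectors) already developed elsewhere in the proof of the main theorem, so it is economical in that context even if less direct in isolation.
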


\begin{proof}
Let $\rho = \sum_{i=1}^d p_i \ket{i}\bra{i}$, with $d = \text{rank}(\rho)$, and
\begin{equation}
\rho^{\otimes n} := \sum_{i^n}p_{i^n}\ket{i^n}\bra{i^n}
\end{equation}
with $i^{n} := i_1 ... i_n$, $p_{i^n} := p_{i_1}...p_{i_n}$, and $\ket{i^n} := \ket{i_1} ... \ket{i_n}$. For $\delta > 0$ define the set of typical sequences by ${\cal T}_{\delta}^n := \{ i^n : | - \log p_{i^n} - n S(\rho) | \leq n \delta \}$, and the typical projector by 
\begin{equation}
\Pi_{\delta}^{n} := \sum_{i^n \in {\cal T}_{\delta}^n} \ket{i^n}\bra{i^n}.
\end{equation}
Then from e.g. \cite{HOW} (appendix C) we have 
\begin{equation} \label{ytytytytytytyuuyuyiuiu}
\tr(\rho^{\otimes n} \Pi_{\delta}^n) \geq 1 - e^{-b \delta^2 n}, 
\end{equation}
and 
\begin{equation} \label{ksahflkdsjhfpoiuweyr,xznvc}
\Pi_{\delta}^n \rho^{\otimes n} \Pi_{\delta}^n \geq 2^{-n(S(\rho) + \delta)} \Pi_{\delta}^n.
\end{equation}

Let $\Pi'_n := (\id^{\otimes r} \otimes \Pi_{n^{-1/4}}^{n-m-r}) \otimes \id_E$, where the first identity is applied to the first $r$ register of ${\cal H}^{\otimes n - m}$, while the second is applied to the purifying Hilbert space${\cal H}_E^{\otimes n - m}$. Writing $\ket{{\Phi}_{n, m, r}}$ as in Eq. (\ref{,mxzbcv,mxzbvc,mzcvvz}), we can define
\begin{equation} \label{dfadfasdf}
\ket{{\Phi'}_{n, m, r}} = c\Pi'_n\ket{\phi} + \sqrt{1 - c^2}e^{i\vartheta}\ket{\phi^{\bot}}
\end{equation}
and follow the argument in the proof of Lemma \ref{trickwithsoivadfsadfsadf} (which applies unchanged to $\ket{{\Phi'}_{n, m, r}}$) to get 
\begin{eqnarray}
\Pi_{n^{-1/4}}^{n-m-r}\tr_E\left((\ket{\theta}\bra{\theta})^{\otimes n - m - r}\right)\Pi_{n^{-1/4}}^{n-m-r} \leq 2^{nh\left(\frac{r}{n - m}\right)}n^2 \tr_{1, ... r}\tr_E(\ket{{\Phi'}_{n, m, r}}\bra{{\Phi'}_{n, m, r}}).
\end{eqnarray}
Hence from Eq. (\ref{ksahflkdsjhfpoiuweyr,xznvc}),
\begin{equation} \label{eigsafsadfas}
\lambda_{\min}\left(\tr_{1, ... r} \tr_E \left(\ket{{\Phi'}_{n, m, r}}\bra{{\Phi'}_{n, m, r}})\right)\right) \geq 2^{o(n)} \lambda_{\min}(\Pi_{n^{-1/4}}^n \rho^{\otimes n} \Pi_{n^{-1/4}}^n ) \geq  2^{-n(S(\rho) + o(n))}.
\end{equation}

Moreover, Eqs. (\ref{ytytytytytytyuuyuyiuiu}) and (\ref{dfadfasdf}) give
\begin{eqnarray} \label{sdfsdfkjhsafd;lkh;khfsdfafd}
|\braket{{\Phi'}_{n, m, r}}{{\Phi}_{n, m, r}}| &=& c^2 \bra{\phi}\Pi_{n}'\ket{\phi} + (1 - c^2) \nonumber \\ &=& c^2 \tr(\rho^{\otimes n - m - r} \Pi_{n^{-1/4}}^{n - m - r}) + (1 - c^2) \nonumber \\ &\geq& 1 - e^{-n^{1/8}}, 
\end{eqnarray}
for sufficiently large $n$. Defining, 
\begin{equation} 
\pi'_n := \tr_{1,...,r}\tr_E(\ket{\Phi'_{n, m, r}}\bra{\Phi'_{n, m, r}}) / \braket{\Phi'_{n, m, r}}{\Phi'_{n, m, r}},
\end{equation}
we get from Eq. (\ref{sdfsdfkjhsafd;lkh;khfsdfafd}) that 
\begin{equation} \label{appppproooxxx}
|| \pi_n - \pi'_n ||_1 = o(1).
\end{equation}
Furthermore, from Eq. (\ref{eigsafsadfas}), $\lambda_{\min}(\pi'_n) \geq 2^{-n(S(\rho) + o(n))}$, and thus 
\begin{equation} \label{fiiispdfispofdsd}
S(\pi'_n) \leq - \log \lambda_{\min}(\pi'_n) \leq n S(\rho) + o(n).
\end{equation}
The lemma follows from Eqs. (\ref{appppproooxxx}), (\ref{fiiispdfispofdsd}) and Fannes inequality \cite{Fannes}.
\end{proof}

\section{Proof of Corollary \ref{faithful}} \label{coroll}

In this section we prove that the regularized relative entropy of 
entanglement is faithful. The idea is to combine Theorem 
\ref{maintheorem} with the exponential de Finetti theorem \cite{Ren05, Ren07}. 

\vspace{0.5 cm}

\begin{proof} (Corollary \ref{faithful})

In the following paragraphs we prove that for every entangled state 
$\rho \in {\cal D}({\cal H}_1 \otimes ... \otimes {\cal H}_m)$, there is a 
$\mu(\rho) > 0$ and a sequence of POVM elements 
$0 \leq A_n \le \id$, where $A_n$ acts on $({\cal H}_1 \otimes ... \otimes {\cal H}_m)^{\otimes n}$, such that
\begin{equation}
        \lim_{n \rightarrow \infty} \tr(A_n \rho^{\otimes n}) = 1, 
\end{equation}
and for all sequences of separable states 
$\{ \omega_n \}_{n \in \mathbb{N}}$,
\begin{equation}
        - \frac{\log \tr(A_n \omega_n)}{n} \geq \mu(\rho),
\end{equation}
From Theorem \ref{maintheorem} it will then follows that 
$E_R^{\infty}(\rho) \geq \mu(\rho) > 0$ (actually we only need Corollary \ref{maincompactstrongconverse} here).

The $A_n$'s are defined as follows. We apply the symmetrization operation $\hat{S}_n$ to 
the $n$ individual Hilbert spaces, trace out the first $\alpha n$ systems ($0 < \alpha < 1$), and then measure a LOCC informationally complete POVM $\{ M_k \}_{k=1}^{L}$ in each of the remaining $(1 - \alpha)n$ systems, obtaining an empirical frequency distribution $p_{k, n}$ of the possible 
outcomes $\{ k \}_{k=1}^L$ (see Appendix \ref{infcompPOVM}). Using this probability distribution, we form the operator
\begin{equation}
        L_n := \sum_{k=1}^L p_{k, n} M_k^{*},
\end{equation}
where $\{ M_k^* \}$ is the dual set of the family $\{ M_k \}$. If 
\begin{equation}
        || L_n - \rho ||_1 \leq \epsilon/2,
\end{equation}
where
\begin{equation} \label{epsfar}
        \epsilon := \min_{\sigma \in {\cal S}} || \rho - \sigma ||_1 > 0,
\end{equation}
we accept, otherwise we reject. Then we set 
$A_n := \hat{S}_n(\id^{\otimes \alpha n} \otimes \tilde{A}_n)$ as 
the POVM element associated to the event that we accept, where 
$\tilde{A}_n$ is the POVM element associated to measuring 
$\{ M_k \}_{k=1}^{L}$ on each of the $(1 - \alpha)n$ copies and 
accepting. 

First, by the law of large numbers \cite{Dud02} and the definition 
of informationally complete POVMs, it is clear that 
$\lim_{n \rightarrow \infty} \tr(A_n \rho^{\otimes n}) = 1$. It 
thus remains to show that $\tr(A_n \omega_n) = \tr(\id^{\otimes 
\alpha n} \otimes \tilde{A}_n) \hat{S}_n(\omega_n))  \leq 
2^{- \mu n}$, for a positive number $\mu$ and every sequence 
of separable states $\{ \omega_n \}_{n \in \mathbb{N}}$.  

Applying Theorem \ref{expdefinetti} with $k = \alpha n$ and 
$r = \beta n$ to $\tr_{1,...,\alpha n}(\hat{S}_n(\omega_n))$, 
we find that there is a probability measure $\nu$ such that
\begin{equation} \label{neweqcorr} 
        \tr_{1,...,\alpha n}(\hat{S}_n(\omega_n)) = 
        \int_{\sigma \in D({\cal H})} \int_{\ket{\theta} 
        \supset \sigma} \nu(d\ket{\theta}) \pi_{n}^{\ket{\theta}} + X_n, 
\end{equation}
where $|| X_n ||_1 \leq 2^{\frac{\alpha \beta n}{3}}$ for sufficiently large $n$,
\begin{equation}
        \pi_n^{\ket{\theta}} := \tr_{E}\left( \ket{\psi^{\ket{\theta}}_{(1-\alpha)n}} \bra{\psi^{\ket{\theta}}_{(1-\alpha)n}} \right),
\end{equation}
and $\ket{\psi^{\ket{\theta}}_{(1-\alpha)n}} \in \ket{\theta}^{[\otimes, (1-\alpha)n, \beta n]}$. 

In the next paragraphs we show that only an exponentially small 
portion of the volume of $\nu$ is in a neighborhood of 
purifications of $\rho$. 

Since we are measuring local POVMs, the operation $\pi \mapsto 
\tr_{\backslash 1}(\hat{S}_n(\pi) \id^{\otimes \alpha n} \otimes \tilde{A}_n)$ 
is a stochastic LOCC map (see e.g. \cite{HHHH07}). It hence follows from 
Eq. (\ref{neweqcorr}) that 
\begin{eqnarray} \label{eqcor1}
\tr_{\backslash 1}(\hat{S}_n(\omega_n) \id \otimes \tilde{A}_n) &=& \int_{\sigma \in B_{2\epsilon}(\rho)} \int_{\ket{\theta} \supset \sigma} \nu(d\ket{\theta}) \tr_{\backslash 1}(\pi_{n}^{\ket{\theta}}\id \otimes \tilde{A}_n) \nonumber \\ &+& \int_{\sigma \in \notin B_{2\epsilon}(\rho)} \int_{\ket{\theta} \supset \sigma} \nu(d\ket{\theta}) \tr_{\backslash 1}(\pi_{n}^{\ket{\theta}}\id \otimes \tilde{A}_n) \nonumber \\ &+& \tr_{\backslash 1}(X_n\id \otimes \tilde{A}_n) \in \text{cone}({\cal S}).
\end{eqnarray}
As $|| X_n || \leq 2^{-\alpha \beta n/3}$, we find $|| \tr_{\backslash 1}(X_n\id \otimes \tilde{A}_n) ||_1 \leq 2^{-\alpha \beta n/3}$. 

Furthermore, from Lemma \ref{hof} we have that if $\tr_E(\ket{\theta}\bra{\theta}) \notin B_{2\epsilon}(\rho)$,
\begin{equation}
|| \tr_{\backslash 1}(\pi_{n}^{\ket{\theta}}\id \otimes \tilde{A}_n) ||_1 = \tr(\pi_{n}^{\ket{\theta}}\id \otimes \tilde{A}_n) \leq n^{d^2}2^{-(\epsilon/K - h(\beta))(1 - \alpha)n},
\end{equation}
where $K$ is given by Eq. (\ref{eq3.1.5}) and can be taken to be such that $K \leq \dim({\cal H})^4$. 
  
Putting it all together,
\begin{eqnarray}
\tr_{\backslash 1}(\hat{S}_n(\omega_n) \id \otimes \tilde{A}_n) = \int_{\sigma \in B_{2\epsilon}(\rho)} \int_{\ket{\theta} \supset \sigma} \nu(d\ket{\theta}) \tr_{\backslash 1}(\pi_{n}^{\ket{\theta}}\id \otimes \tilde{A}_n) + \tilde{X}_n \in \text{cone}({\cal S}).
\end{eqnarray}
with $\tilde{X_n}$ given by the sum of the two last terms in Eq. (\ref{eqcor1}), which satisfies $|| \tilde{X}_n ||_1 \leq 2^{-\alpha \beta n/3} + n^{d^2}2^{-(\epsilon/K - h(\beta))(1 - \alpha)n}$.

For each $\tr_{\backslash 1}(\pi_{n}^{\ket{\theta}}\id \otimes \tilde{A}_n)$, with $\tr_E(\ket{\theta}\bra{\theta}) \in B_{2\epsilon}(\rho)$, we can write 
\begin{equation}
\tr_{\backslash 1}(\pi_{n}^{\ket{\theta}}\id \otimes \tilde{A}_n) = \tr_{\backslash 1}(\pi_{n}^{\ket{\theta}}\id \otimes B_n) + \tr_{\backslash 1}(\pi_{n}^{\ket{\theta}}\id \otimes (\tilde{A}_n - B_n)),
\end{equation}
where $B_n$ is the sum of the POVM elements for which the post-selected state is $\delta$-close from the empirical state. 

From Lemma \ref{KRKRKRKKR} we find that $\tr(\pi_{n}^{\ket{\theta}} \id \otimes (\tilde{A_n} - B_n)) \leq 2^{-M (1 - \alpha)\delta^2 n}$. Therefore, 
\begin{eqnarray} \label{hatx}
\tr_{\backslash 1}(\hat{S}_n(\omega_n) \id \otimes \tilde{A}_n) &=& \int_{\sigma \in D({\cal H})} \int_{\ket{\theta} \supset \sigma \in B_{2\epsilon}(\rho)} \nu(d\ket{\theta}) \tr(\pi_{n}^{\ket{\theta}}\id \otimes B_n)\rho^{\ket{\theta}} \nonumber \\ &+& \hat{X}_n \in \text{cone}({\cal S}).
\end{eqnarray}
where $\hat{X}_n$ is such that $|| \hat{X}_n ||_1 \leq 2^{-\alpha \beta n/3} + n^{d^2}2^{-(\epsilon/K - h(\beta))(1 - \alpha)n} + 2^{-M (1 - \alpha)\delta^2 n}$ and
\begin{equation}
\rho^{\ket{\theta}} := \frac{\tr_{\backslash 1}(\pi_{n}^{\ket{\theta}} \id \otimes B_n)}{\tr(\pi_{n}^{\ket{\theta}}\id \otimes B_n)}.
\end{equation}
Note that we have $|| \rho^{\ket{\theta}} - \rho|| \leq \delta + \epsilon/2$ for every $\rho^{\ket{\theta}}$ appearing in the integral of Eq. (\ref{hatx}). Define 
\begin{equation}
\Lambda := \int_{\sigma \in D({\cal H})} \int_{\ket{\theta} \supset \sigma \in B_{2\epsilon}(\rho)} \nu(d\ket{\theta}) \tr(\pi_{n}^{\ket{\theta}}\id \otimes B_n). 
\end{equation}
Then, 
\begin{equation} \label{wehereweused}
\left \Vert  \Lambda^{-1}\int_{\sigma \in D({\cal H})} \int_{\ket{\theta} \supset \sigma \in B_{2\epsilon}(\rho)} \nu(d\ket{\theta}) \tr(\pi_{n}^{\ket{\theta}}\id \otimes B_n)\rho^{\ket{\theta}} - \rho \right \Vert \leq \delta + \epsilon/2, 
\end{equation}
From Eqs. (\ref{epsfar}) and (\ref{wehereweused}) it follows that $\Lambda^{-1}\int_{\sigma \in D({\cal H})} \int_{\ket{\theta} \supset \sigma \in B_{2\epsilon}(\rho)} \nu(d\ket{\theta}) \tr(\pi_{n}^{\ket{\theta}}\id \otimes B_n)\rho^{\ket{\theta}}$ is at least $\epsilon/2 - \delta$ far away from the separable states set. Using Eq. (\ref{hatx}) we thus find that 
\begin{equation}
\Lambda \leq (\epsilon/2 - \delta)^{-1}(2^{-\alpha \beta n/3} + n^{d^2}2^{-(\epsilon/K - h(\beta))n} + n2^{-((1 - \alpha)n - 1)\delta^2M^{-2}}).
\end{equation}
With this bound we finally see that
\begin{eqnarray}
\tr(\omega_n A_n) &=& \tr(\hat{S}_n(\omega_n) \id \otimes \tilde{A}_n) \nonumber \\ &=& \Lambda + \tr(\hat{X}) \nonumber \\ &\leq& (1 + (\epsilon/2 - \delta)^{-1})(2^{-\alpha \beta n/3} + n^{d^2}2^{-(\epsilon/K - h(\beta))n} + n2^{-((1 - \alpha)n - 1)\delta^2M^{-2}}) \nonumber \\ &\leq& 2^{- \mu n},
\end{eqnarray}
for appropriately chosen $\alpha, \beta \in [0, 1]$ and $\mu > 0$.
\end{proof}
\vspace{0.3 cm}

In the proof above the only property of the set of separable states that we used, apart from the five properties required for Theorem \ref{maintheorem} to hold, was its closedness under SLOCC. It is an interesting question if such a property is really needed, or if actually the positiveness of the rate function is a generic property of any $\rho \notin {\cal M}$ for every family of sets satisfying Theorem \ref{maintheorem}. The following example shows that this is not the case; for some choices of sets $\{ {\cal M}_k \}$ the rate function can be zero for a state $\rho \notin {\cal M}$. In fact, in our example the rate function is zero for every state.   

A bipartite state $\sigma_{AB}$ is called $n$-extendible if there is a state $\tilde{\sigma}_{AB_1...B_n}$ symmetric under the permutation of the $B$ systems and such that $\tr_{B_2,...,B_n}(\tilde{\sigma}) = \sigma$. Let us denote the set of $n$-extendible states acting on ${\cal H} = {\cal H}_A \otimes {\cal H}_B$ by ${\cal E}_k({\cal H})$. It is clear that the sets $\{ {\cal E}_k({\cal H}^{\otimes n}) \}_{n \in \mathbb{N}}$ satisfy conditions \ref{cond1}-\ref{cond5} and therefore we can apply Theorem \ref{maintheorem} to them. Corollary \ref{faithful} however does not hold in this case, as the sets are not closed under two-way LOCC, even though they are closed under one-way LOCC. In fact, the statement of the corollary fails dramatically in this case as it turns out that the measures $E_{{\cal E}_k}^{\infty}$ are zero for every state. This can be seen as follows: Given a state $\rho$, let us form the $k$-extendible state
\begin{equation}
\tilde{\rho}_{AB_1,...,B_k} := \id_A \otimes \hat{S}_{B_1,...,B_k}\left(\rho_{AB} \otimes \left(\frac{\id}{d^2}\right)^{\otimes k - 1} \right)
\end{equation}
We have $\tilde{\rho}_{AB_1,...,B_k} \geq \rho_{AB} \otimes \frac{\id}{d^2}^{\otimes k - 1}/k$. Then, from the operator monotonicity of the $\log$,
\begin{equation}
E_{{\cal E}_k}(\rho) \leq S(\rho || \tr_{B_2,...,B_n}(\tilde{\rho})) \leq k. 
\end{equation} 
As the upper bound above is independent of $n$, we then find
\begin{equation} 
E_{{\cal E}_k}^{\infty}(\rho) = \lim_{n \rightarrow \infty} \frac{1}{n} E_{{\cal E}_k}^{\infty}(\rho^{\otimes n}) \leq \lim_{n \rightarrow \infty} \frac{k}{n} = 0. 
\end{equation}

Note that as ${\cal E}_1$ is contained in the set of one-way undistillable states ${\cal C}_{\text{one-way}}$, the same is true for $E_{{\cal C}_{\text{one-way}}}^{\infty}$, i.e. it is identically zero. It is interesting that an one-way distillable state cannot be distinguished with an exponential decreasing probability of error from one-way undistillable states if we allow these to be correlated among several copies, while any entangled state can be distinguished from arbitrary sequences of separable states with exponentially accuracy. Moreover, as the set of states with a positive partial transpose (PPT) satisfy conditions \ref{cond1}-\ref{cond5} and is closed under SLOCC, every state with a non-positive partial transpose (NPPT) can be exponentially well distinguished from a sequence of PPT states. It is an intriguing open question if the same holds for distinguishing a two-way distillable state from a sequence of two-way undistillable states. Due to the conjecture existence of NPPT bound (undistillable) entanglement \cite{DSS+00, DCLB00, Cla06, BE08}, property \ref{cond4} might fail and therefore we do not know what happens in this case. 

\section{Proof of Corollary \ref{positiverate}}

\begin{proof} (Corollary \ref{positiverate})

The proof is a simple application of the well-known idea of bounding the rate of asymptotic entanglement transformations by entanglement measures (see e.g. \cite{PV07, HHHH07}). Suppose we can transform $\rho$ into $\sigma$ asymptotically, where $\sigma$ is entangled. Then, for every $\epsilon > 0$ there is a sequence of LOCC maps $\{ \Lambda_n \}_{n \in \mathbb{N}}$ and a sequence of integers $\{ k_n \}_{n \in \mathbb{N}}$ such that
\begin{equation}
\lim_{n \rightarrow \infty} || \Lambda_n(\rho^{\otimes k_n}) - \sigma^{\otimes n} ||_1 = 0.
\end{equation}
and
\begin{equation}
\limsup_{n \rightarrow \infty} \frac{k_n}{n} \leq R(\rho \rightarrow \sigma) + \epsilon.
\end{equation}
From the monotonicity of the relative entropy of entanglement under LOCC \cite{VP98} and its asymptotically continuity (see Lemma \ref{asympcont}), we find 
\begin{eqnarray}
E_R^{\infty}(\sigma) &=& \limsup_{n \rightarrow \infty} \frac{1}{n}E_R(\sigma^{\otimes n}) \nonumber \\ &=& \limsup_{n \rightarrow \infty} \frac{1}{n}E_R(\Lambda_n(\rho^{\otimes k_n})) \nonumber \\ &\leq& \limsup_{n \rightarrow \infty} \frac{1}{n}E_R(\rho^{\otimes k_n}) \nonumber \\ &=& \limsup_{n \rightarrow \infty} \frac{k_n}{n} \limsup \frac{1}{k_n}E_R(\rho^{\otimes k_n}) \nonumber \\ &\leq& (R(\rho \rightarrow \sigma) + \epsilon)E_R^{\infty}(\rho).
\end{eqnarray}
As, from Corollary \ref{faithful}, $E_R^{\infty}(\sigma) > 0$ and $\epsilon > 0$ is arbitrary, we find that indeed $R(\rho \rightarrow \sigma) > 0$.
\end{proof}

\section{Acknowledgments}

We gratefully thank Koenraad Audenaert, Nilanjana Datta, Jens Eisert, Andrzej Grudka, Masahito Hayashi, Micha{\l} and Ryszard Horodecki, Renato Renner, Shashank Virmani, Reinhard 
Werner, Andreas Winter and the participants in the 2009 McGill-Bellairs workshop for many interesting discussions, and an anonymous referee for filling in gaps in the proofs of Lemma \ref{ogawanagaokaadaptation} and Proposition \ref{maincompact}, for pointing out that our main result could be extended to cover the original quantum Stein's Lemma and for many other extremely useful comments on the manuscript. This work is 
part of the QIP-IRC supported by EPSRC (GR/S82176/0) as well as the 
Integrated Project Qubit Applications (QAP) supported by the IST directorate 
as Contract Number 015848' and was supported by the Brazilian agency Funda��o de Amparo � Pesquisa do Estado de Minas Gerais (FAPEMIG), 
an EPSRC Postdoctoral Fellowship for Theoretical Physics and a Royal Society Wolfson Research Merit Award.\\

\appendix

\section{Informationally Complete POVMs} \label{infcompPOVM}

An informationally complete POVM in ${\cal B}(\mathbb{C}^{m})$ is defined as a set of positive semi-definite operators $A_i$ forming a resolution of the identity and such that $\{ A_i \}$ forms a basis for ${\cal B}(\mathbb{C}^{m})$. Informationally complete POVMs can be explicitly constructed in every dimension (see e.g. \cite{KR05}).

We say that a family $\{ M_i \}$  of elements from 
${\cal B}(\mathbb{C}^m)$ is a dual of the a 
family $\{ M_i^* \}$ if for all $X \in {\cal B}(\mathbb{C}^m)$,
\begin{equation} \label{dualbasis}
X = \sum_{i} \tr[M_i X] M_i^*.
\end{equation}
The above equation implies in particular that the operator 
$X$ is fully determined by the expectations values $\tr[M_i X]$. 
Another useful property is that for every informationally complete POVM in ${\cal B}(\mathbb{C}^{m})$ there is a real number $K_m$ such that for every two states $\rho$ and $\sigma$,
\begin{equation} \label{eq3.1.5}
|| \rho - \sigma ||_1 \leq K_m || p_{\rho} - p_{\sigma}||_1,
\end{equation}
with $p_{\rho} = \tr(M_i \rho)_i$ and $p_{\sigma} = \tr(M_i \sigma)_i$. For example, in the family of informationally complete POVM constructed in Ref. \cite{KR05}, $K_m \leq m^4$.
                                                                              
\section{Exponential Quantum de Finetti Theorem} \label{Rennerexp}               
                                                                                
There have been several interesting recent developments on quantum 
versions \cite{KR05, Ren05, CKMR07, Ren07} of the seminal result by 
Bruno de Finetti on the characterization of exchangeable probability 
distributions \cite{dFin37}. Here we state an 
exponential version of the theorem for quantum states, recently proved 
by Renner \cite{Ren05, Ren07}.  

\begin{theorem} \label{expdefinetti}
\cite{Ren05,Ren07, KM07} For any state $\ket{\psi_{n + k}} \in 
\text{Sym}({\cal H}^{\otimes n + k})$ there exists a measure $\mu$ 
over ${\cal H}$ and for each pure state $\ket{\theta} \in {\cal H}$ 
another pure state $\ket{\psi^{\theta}_n} \in \ket{\theta}^{[\otimes, n, r]}$ 
such that
\begin{equation}
        \left \Vert \tr_{1, ..., k}(\ket{\psi_{n+k}}\bra{\psi_{n+k}})         
        - \int \mu(d\ket{\theta}) \ket{\psi^{\theta}_{n}}\bra{\psi^{\theta}_{n}} \right \Vert_1 \leq n^{\dim({\cal H})} 2^{- \frac{k(r + 1)}{2(n + k)}}.
\end{equation}
\end{theorem}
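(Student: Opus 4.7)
The plan is to derive Theorem \ref{expdefinetti} by combining the coherent-state resolution of the identity on the symmetric subspace with Lemma \ref{variantexpdefineti}, which has already been established in the body of the paper. Writing $d := \dim(\mathcal{H})$, one has $\binom{k+d-1}{d-1}\int d\ket{\theta}\, (\ket{\theta}\bra{\theta})^{\otimes k} = P_{\text{Sym}(\mathcal{H}^{\otimes k})}$, where $d\ket{\theta}$ denotes the normalized unitary-invariant measure on pure states. Since $\ket{\psi_{n+k}}$ is permutation-symmetric in all $n+k$ slots, applying this continuous POVM to the first $k$ subsystems induces a probability measure $\mu(d\ket{\theta})$ on pure states of $\mathcal{H}$ and leaves, conditional on outcome $\ket{\theta}$, the remaining $n$ subsystems in the post-measurement pure state $\ket{\phi_n^\theta} := (\bra{\theta}^{\otimes k} \otimes \id^{\otimes n})\ket{\psi_{n+k}}/\Vert(\bra{\theta}^{\otimes k} \otimes \id^{\otimes n})\ket{\psi_{n+k}}\Vert$. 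The defining property of the POVM yields the \emph{exact} decomposition $\tr_{1,\dots,k}(\ket{\psi_{n+k}}\bra{\psi_{n+k}}) = \int \mu(d\ket{\theta})\, \ket{\phi_n^\theta}\bra{\phi_n^\theta}$, reducing the proof to showing that each $\ket{\phi_n^\theta}$ is well approximated by an almost power state along $\ket{\theta}$.

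I would then replace each $\ket{\phi_n^\theta}$ by an almost power state $\ket{\psi_n^\theta} \in \ket{\theta}^{[\otimes, n, r]}$ by applying Lemma \ref{variantexpdefineti} to $\ket{\psi_{n+k}}$ with $m = k$. That lemma identifies the normalized partial-projection $\ket{\Psi_{n+k,k}}$ with exactly $\ket{\phi_n^\theta}$ and produces an almost power state $\ket{\Psi_{n+k,k,r}}$, which I set equal to $\ket{\psi_n^\theta}$, enjoying the pointwise bound $\Vert\ket{\phi_n^\theta}\bra{\phi_n^\theta} - \ket{\psi_n^\theta}\bra{\psi_n^\theta}\Vert_1 \leq 2\sqrt{2}\,|\braket{\psi_{n+k}}{\theta^{\otimes n+k}}|^{-1}\, e^{-kr/(2(n+k))}$. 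By the triangle inequality, the overall trace-norm error is then at most $\int \mu(d\ket{\theta})\, 2\sqrt{2}\,|\braket{\psi_{n+k}}{\theta^{\otimes n+k}}|^{-1}\, e^{-kr/(2(n+k))}$.

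The main obstacle is controlling this final integral, because $|\braket{\psi_{n+k}}{\theta^{\otimes n+k}}|^{-1}$ is unbounded and the density of $\mu$, namely $\binom{k+d-1}{d-1}\Vert(\bra{\theta}^{\otimes k}\otimes\id^{\otimes n})\ket{\psi_{n+k}}\Vert^{2}$, only partially cancels it via the inequality $\Vert(\bra{\theta}^{\otimes k}\otimes\id^{\otimes n})\ket{\psi_{n+k}}\Vert \geq |\braket{\psi_{n+k}}{\theta^{\otimes n+k}}|$ supplied by Eq.~(\ref{label143}). My strategy is to split the integration into a ``good'' region, where $|\braket{\psi_{n+k}}{\theta^{\otimes n+k}}|$ exceeds a threshold $\tau_n$ to be optimized, and its complement. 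On the good region the integrand is directly bounded by $2\sqrt{2}\,\tau_n^{-1}\,e^{-kr/(2(n+k))}$; on the complement I would replace $\ket{\psi_n^\theta}$ by an arbitrary fixed almost power state along $\ket{\theta}$ and bound the contribution by twice the $\mu$-mass of the region, estimating that mass through the coherent-state resolution on $\text{Sym}(\mathcal{H}^{\otimes n+k})$ together with the combinatorial dimension bound $\dim\text{Sym}(\mathcal{H}^{\otimes n+k}) \leq (n+k+1)^{d-1}$. Balancing $\tau_n$ so that the two contributions are comparable produces the announced prefactor $n^{d}$ while preserving the exponential decay, with the harmless slack $e^{-k/(2(n+k))} = O(1)$ absorbing the shift from $e^{-kr/(2(n+k))}$ to the stated $2^{-k(r+1)/(2(n+k))}$.
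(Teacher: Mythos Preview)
The paper does not supply its own proof of Theorem~\ref{expdefinetti}: the result is quoted from Renner's work and used as a black box, so there is nothing to compare your argument against within the manuscript itself. Your high-level strategy---resolving the first $k$ copies with the coherent-state POVM and then invoking a post-selection bound of the kind in Lemma~\ref{variantexpdefineti}---is indeed the route taken in the original references, so the framework is sound.

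The gap is in the last step. You invoke Lemma~\ref{variantexpdefineti} as stated, which controls the error by $2\sqrt{2}\,|\braket{\psi_{n+k}}{\theta^{\otimes n+k}}|^{-1} e^{-kr/2(n+k)}$, and then propose to integrate this against $\mu$ by splitting into a region where $|\braket{\psi_{n+k}}{\theta^{\otimes n+k}}|\geq\tau_n$ and its complement. But the measure $\mu$ has density $\binom{k+d-1}{d-1}N_\theta^2$ with $N_\theta:=\Vert(\bra{\theta}^{\otimes k}\otimes\id)\ket{\psi_{n+k}}\Vert$, and there is \emph{no} general relationship that forces $\mu$ to be small on the set where the $(n+k)$-fold overlap is small. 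The coherent-state resolution on $\text{Sym}(\mathcal{H}^{\otimes n+k})$ only tells you that $\int d\ket{\theta}\,|\braket{\psi_{n+k}}{\theta^{\otimes n+k}}|^2=\binom{n+k+d-1}{d-1}^{-1}$, which bounds the Haar measure of the \emph{good} region from above, not the $\mu$-mass of the bad region from above. Concretely, for a Dicke state $\ket{\psi_{n+k}}$ one can have $\mu$ concentrated near a pole of the sphere where $|\braket{\psi_{n+k}}{\theta^{\otimes n+k}}|$ vanishes, so $\mu(\text{bad})$ can be close to~$1$ and your bound degenerates to the trivial one.

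The fix is to go back into the proof of Lemma~\ref{variantexpdefineti} rather than use its stated conclusion. There the normalization factor is actually $N_\theta^{-1}$, and the weaker $|\braket{\psi_{n+k}}{\theta^{\otimes n+k}}|^{-1}$ only enters via the crude estimate~(\ref{label143}). Keeping the sharper bound $\Vert\ket{\phi_n^\theta}\bra{\phi_n^\theta}-\ket{\psi_n^\theta}\bra{\psi_n^\theta}\Vert_1 \leq 2\sqrt{2}\,N_\theta^{-1} e^{-kr/2(n+k)}$, the density of $\mu$ cancels one power of $N_\theta$ and you are left with $2\sqrt{2}\,\binom{k+d-1}{d-1}\int d\ket{\theta}\, N_\theta\cdot e^{-kr/2(n+k)}$, which is finite because $N_\theta\leq 1$ (and by Cauchy--Schwarz even $\int d\ket{\theta}\, N_\theta\leq\binom{k+d-1}{d-1}^{-1/2}$). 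No good/bad split is needed, and the polynomial prefactor and exponential rate then follow directly.
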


The generalization of Theorem \ref{expdefinetti} to permutation-symmetric 
mixed states goes as follows. First, we use the fact that every 
permutation-symmetric mixed state $\rho_{n + k}^S$ acting on 
${\cal H}_S^{\otimes n + k}$ has a symmetric purification 
$\ket{\psi}^{SE}_{n + k} \in ({\cal H}_S \otimes {\cal H}_E)^{\otimes n + k}$, 
with $\dim({\cal H}_E) = \dim({\cal H}_S)$ (see e.g. Lemma 4.2.2 of Ref. 
\cite{Ren05}). Then we apply Theorem \ref{expdefinetti} to 
$\ket{\psi}^{SE}_{n + k}$ and use the contractiveness of the trace 
norm under the partial trace to find 
\begin{equation} \label{df1}
        \left \Vert \tr_{1,...,k}(\rho_{n + k}) - \int \mu(d\sigma)
        \rho_{\sigma} \right \Vert_1 \leq n^{\dim({\cal H})^2} 
        2^{- \frac{k(r + 1)}{2(n + k)}}
\end{equation}
where
\begin{equation} \label{df2}
        \rho_{\sigma} := \tr_{E}(\ket{\psi_{n}^{\ket{\theta}}}
        \bra{\psi_{n}^{\ket{\theta}}}),
\end{equation}
with $\sigma := \tr_E(\ket{\theta}\bra{\theta})$ and
\begin{equation} \label{df3}
        \mu(d\sigma) := \int_{\ket{\theta} \supset \sigma} \mu(d\ket{\theta}).
\end{equation}
In the equation above $\ket{\theta} \supset \sigma$ means that the 
integration is taken with respect to the purifying system $E$ and 
runs over all purifications of $\sigma$.

\subsubsection{Chernoff-Hoeffding Bound for Almost Power States}

The states $\tr_{E}(\ket{\psi^{\theta}_n}\bra{\psi^{\theta}_n})$ behave 
like $\tr_E(\ket{\theta}\bra{\theta})^{\otimes n}$ in many respects. One 
example is the case where the same POVM is measured on all the $n$ copies. 

Let $\{ M_{\omega} \}_{\omega \in {\cal W}}$ be a POVM on ${\cal H}$ and define its induced probability distribution on $\ket{\theta}$ by $P_{M}(\ket{\theta}\bra{\theta}) = \{ \bra{\theta}M_{\omega}\ket{\theta} \}_{\omega \in {\cal W}}$. Theorems 4.5.2 of Ref. \cite{Ren05} and its reformulation as Lemma 2 of Ref. \cite{HHH+08a} show the following.
\begin{lemma} \label{hof}
\cite{Ren05, HHH+08a} Let $\ket{\Psi_n}$ be a vector from $\ket{\theta}^{[\otimes, n, r]}$ with $0 \leq r \leq \frac{n}{2}$ and $\{ M_{\omega} \}_{\omega \in {\cal W}}$ be a POVM on ${\cal H}$.  
\begin{equation} \label{testtest}
Pr \left(  \Vert P_M(\ket{\theta}\bra{\theta}) - P_M(\ket{\Psi_n}\bra{\Psi_n}) \Vert_1 > \delta \right) \leq 2^{- n \left(\frac{\delta^2}{4} - h\left(\frac{r}{n}  \right)  \right) + | {\cal W} |\log(\frac{n}{2} + 1)} 
\end{equation}
where $P_M(\ket{\Psi_n}\bra{\Psi_n})$ is the frequency distribution of outcomes of $M^{\otimes n}$ applied to $\ket{\Psi_n}\bra{\Psi_n}$, and the probability is taken over those outcomes. 
\end{lemma}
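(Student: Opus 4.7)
The plan is to reduce the statement to the standard Chernoff--Hoeffding concentration for i.i.d.\ sampling from $P := P_M(\ket{\theta}\bra{\theta})$, then pay a combinatorial overhead of $2^{nh(r/n)}$ to account for the deviation of the almost power state $\ket{\Psi_n}$ from the pure product $\ket{\theta}^{\otimes n}$.

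First I would establish the i.i.d.\ baseline. When $\ket{\Psi_n} = \ket{\theta}^{\otimes n}$, the outcomes of $M^{\otimes n}$ are $n$ independent draws from $P$. The method of types, combined with Pinsker's inequality $D(T\|P) \geq \|T - P\|_1^2/(2\ln 2)$ and a union bound over the at most $(n/2 + 1)^{|{\cal W}|}$ types $T$ violating $\|T - P\|_1 \leq \delta$, gives
\begin{equation}
\Pr_{\ket{\theta}^{\otimes n}}\bigl(\|\hat P - P\|_1 > \delta\bigr) \leq (n/2 + 1)^{|{\cal W}|}\, 2^{-n\delta^2/4},
\end{equation}
which already matches the polynomial prefactor in the claim.

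Second I would extend to a general $\ket{\Psi_n} \in \ket{\theta}^{[\otimes,n,r]}$ via an operator-level estimate. Expanding $\ket{\Psi_n} = \sum_{k=0}^{r} \alpha_k \operatorname{Sym}(\ket{\eta_k}\otimes\ket{\theta}^{\otimes n - k})$ with $\ket{\eta_k} \in ({\cal H}\perp\ket{\theta})^{\otimes k}$ and $\sum_k |\alpha_k|^2 = 1$, I would seek to show that for the permutation-invariant bad-event POVM element $\Pi_E := \sum_{\omega^n \in E} M_{\omega^n}$ with $E = \{\omega^n : \|\hat P(\omega^n) - P\|_1 > \delta\}$,
\begin{equation}
\bra{\Psi_n}\Pi_E\ket{\Psi_n} \leq \binom{n}{r} \max_{\ket{\phi_r} \in {\cal H}^{\otimes r}} \bra{\theta^{\otimes n-r} \otimes \phi_r}\,\Pi_E\,\ket{\theta^{\otimes n-r} \otimes \phi_r}.
\end{equation}
This would follow from the fact that each $\operatorname{Sym}(\ket{\eta_k}\otimes\ket{\theta}^{\otimes n - k})$ is a normalized symmetric sum over the $\binom{n}{k} \leq \binom{n}{r}$ choices of ``bad'' slots (using $k \leq r \leq n/2$), together with the permutation invariance of $\Pi_E$, which collapses the cross-terms via Cauchy--Schwarz. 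Under the worst-case $\ket{\phi_r}$, the first $n - r$ outcomes are still i.i.d.\ draws from $P$, and since adversarial outcomes in the remaining $r$ positions shift the empirical distribution by at most $2r/n$ in total variation, the event $\|\hat P - P\|_1 > \delta$ forces the i.i.d.\ empirical $\hat P_{n-r}$ to satisfy $\|\hat P_{n-r} - P\|_1 > \delta - O(r/n)$, to which Step 1 applies.

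Combining the overhead $\binom{n}{r} \leq 2^{nh(r/n)}$ with the i.i.d.\ concentration from Step 1 produces the claimed bound $2^{-n(\delta^2/4 - h(r/n)) + |{\cal W}|\log(n/2+1)}$ after absorbing the lower-order corrections from the $O(r/n)$ shift. The main obstacle is rigorously establishing the operator inequality in Step 2: since the dimension of $\ket{\theta}^{[\otimes,n,r]}$ is only polynomial in $n$, the exponential combinatorial factor $2^{nh(r/n)}$ cannot come from a naive dimension count and must instead emerge from the norm of the symmetric permutation-averaging acting on $\Pi_E$. This is precisely the combinatorial heart of the Renner-style exponential de Finetti analysis, and verifying that the worst-case vector in the almost power subspace does not beat the i.i.d.\ bound by more than the entropy overhead $h(r/n)$ is the delicate step.
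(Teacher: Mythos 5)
The paper does not actually prove Lemma \ref{hof}: it is imported verbatim from Theorem 4.5.2 of Renner's thesis \cite{Ren05} and its reformulation as Lemma 2 of \cite{HHH+08a}, so there is no in-paper proof to compare against. Your strategy --- reduce to the i.i.d.\ Sanov/Pinsker bound and pay a combinatorial overhead of order $\binom{n}{r}\leq 2^{nh(r/n)}$ for the almost power structure --- is the same strategy as the cited proof, and the operator inequality you flag as the ``delicate step'' is exactly the paper's own Lemma \ref{supversusconvex} applied to the orthogonal expansion $\ket{\Psi_n}=\sum_{k\leq r}\sum_j \varsigma_{k,j}\ket{\psi_{k,j}}$, where the $\ket{\psi_{k,j}}$ are the distinct permutations of $\ket{\eta_k}\otimes\ket{\theta}^{\otimes n-k}$ (mutually orthogonal because $\ket{\eta_k}\in({\cal H}\perp\ket{\theta})^{\otimes k}$); this is precisely how the paper uses that machinery in Lemma \ref{mainpart3}. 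So the route is right, and the Cauchy--Schwarz collapse of cross terms is not mysterious --- it is a two-line computation once the terms are enumerated.

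Two quantitative points remain loose. First, the number of terms in the expansion is $\sum_{k=0}^{r}\binom{n}{k}\leq (r+1)\binom{n}{r}$, not $\binom{n}{r}$, so Lemma \ref{supversusconvex} delivers the prefactor $(r+1)\binom{n}{r}$; the extra polynomial factor must be tracked (or absorbed into the $(n/2+1)^{|{\cal W}|}$ term) if you want the bound exactly as stated rather than up to a polynomial. Second, ``absorbing the lower-order corrections from the $O(r/n)$ shift'' is doing real work: replacing $\delta$ by $\delta-2r/n$ and $n$ by $n-r$ turns the Pinsker exponent into $(n-r)(\delta-2r/n)^2/(2\ln 2)$, and you need this to dominate $n\delta^2/4$. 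The only slack available is the gap between Pinsker's constant $1/(2\ln 2)\approx 0.72$ and the claimed $1/4$, and the inequality $(1-r/n)(1-2r/(n\delta))^2\geq \ln 2/2$ can fail when $2r/n$ is a nonnegligible fraction of $\delta$. You should check that in that regime one already has $h(r/n)\geq\delta^2/4$, so the stated bound is vacuous and nothing is lost; this is true except possibly in a thin corner near $\delta\approx 2$, and that verification (or a sharper concentration input than Pinsker) is the one genuinely missing step before the sketch becomes a proof.
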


This Lemma shows that apart from the factor $h(r/n)$, which in an usual application of Lemma \ref{hof} is taken to be vanishing small, the statistics of the frequency distribution obtained by measuring an almost power state along $\ket{\theta}$ is the same as if we had $\ket{\theta}^{\otimes n}$. 

\subsection{Post-selected states}

The next lemma, due to K\"onig and Renner, appeared in \cite{KR05} as Theorem A.1 and is used in the proof of Corollary \ref{faithful}.

\begin{lemma} \label{KRKRKRKKR}
\cite{KR05} Let $\rho_{m + 1} \in {\cal D}({\cal H}^{\otimes m + 1})$ be a permutation-symmetric state and ${\cal M} := \{ M_k \}$ an informationally complete POVM in ${\cal H}$. Consider the probability distribution
\begin{equation}
        p(i_1,...,i_m) := \tr(\id 
        \otimes M_{i_{1}}\otimes M_{i_{2}} \otimes ... 
        \otimes M_{i_{m}}\rho_{m + 1}),
\end{equation}
associated to the measurement of ${\cal M}$ in $m$ of the subsystems of $\rho_{m + 1}$. Define the post-selected states
\begin{equation} \label{definet}
\pi_{i_1,...,i_m} := \frac{\tr_{\backslash 1}(\id \otimes  M_{i_{1}}\otimes M_{i_{2}} \otimes ... \otimes M_{i_{m}} \rho_{m + 1})}{\tr(\id \otimes M_{i_{1}}\otimes M_{i_{2}} \otimes ... \otimes M_{i_{m}}\rho_{m + 1})}
\end{equation}
and let $L_m^{i_1,...,i_{m}}$ be the estimated state when the sequence of outcome $\{i_1,...,i_{m}\}$ is obtained. Define ${\cal R}$ as the set of all outcome sequences such that 
\begin{equation}
|| L_m^{i_1,...,i_{m}} - \pi_{i_1,...,i_{m}} ||_1 \geq \delta.
\end{equation}
Then there is a $M > 0$ (only depending on the dimension of ${\cal H}$ and on the POVM ${\cal M}$) such that
\begin{equation} \label{KRen}
\sum_{(i_1,...,i_{m}) \in {\cal R}} p(i_1,...,i_{m}) \leq  2^{-M m \delta^2}.
\end{equation}
\end{lemma}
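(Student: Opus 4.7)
The plan is to transfer the estimate from trace norm between operators ($L_m$ and $\pi_{i_1,\ldots,i_m}$) to total variation between two probability vectors on $\mathcal{W}$---the empirical frequency of outcomes and a Bayesian one-step predictive distribution---and then to apply classical concentration for exchangeable sources.

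First I would rewrite both operators in the dual basis $\{M_k^*\}$. The tomographic reconstruction formula (Eq.~(\ref{dualbasis})) gives $L_m = \sum_k p_{k,m} M_k^*$, where $p_{k,m} := \tfrac{1}{m}|\{j : i_j = k\}|$ is the empirical frequency of outcome $k$ in $(i_1,\ldots,i_m)$. Setting $q_k^{(i_1,\ldots,i_m)} := \tr(M_k\,\pi_{i_1,\ldots,i_m})$, one likewise has $\pi_{i_1,\ldots,i_m} = \sum_k q_k^{(i_1,\ldots,i_m)} M_k^*$. Writing $L_m - \pi_{i_1,\ldots,i_m} = \sum_k (p_{k,m} - q_k^{(i_1,\ldots,i_m)}) M_k^*$ and using the informational-completeness bound Eq.~(\ref{eq3.1.5}) yields $||L_m - \pi_{i_1,\ldots,i_m}||_1 \leq K\,||p_m - q^{(i_1,\ldots,i_m)}||_1$, with $K$ depending only on $\dim(\mathcal{H})$ and the POVM.

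The second move is to recognize the relevant classical distribution. The joint law $P(i_1,\ldots,i_{m+1}) := \tr((M_{i_1}\otimes\cdots\otimes M_{i_{m+1}})\rho_{m+1})$ is classical exchangeable on $\mathcal{W}^{m+1}$, by the permutation-symmetry of $\rho_{m+1}$. A direct computation from the definition of $\pi_{i_1,\ldots,i_m}$ shows that $q_k^{(i_1,\ldots,i_m)}$ is precisely the conditional probability $P(i_{m+1}=k \mid i_1,\ldots,i_m)$. The event $\{||L_m - \pi_{i_1,\ldots,i_m}||_1 \geq \delta\}$ is therefore contained in $\{||p_m - q^{(i_1,\ldots,i_m)}||_1 \geq \delta/K\}$, and the lemma reduces to a purely classical statement: under an exchangeable law on a finite alphabet of size $|\mathcal{W}|$, the empirical frequency vector $p_m$ lies within $\ell^1$-distance $\delta/K$ of the Bayesian one-step predictive $q^{(i_1,\ldots,i_m)}$ except with probability at most $2^{-M m\delta^2}$, for some $M>0$ depending only on the alphabet size (and, after absorbing $K^{-2}$, only on $\dim(\mathcal{H})$ and the POVM).

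This final classical step is the heart of the argument and, I expect, the main obstacle. The cleanest route is through the finite classical de Finetti theorem: approximate $P$ by a convex mixture $\int \mu(d\bar q)\,\bar q^{\otimes(m+1)}$ of i.i.d.\ product laws up to error negligible in the regime of interest. Conditional on the latent parameter $\bar q$, Hoeffding's inequality gives $\Pr_{\bar q^{\otimes m}}[||p_m - \bar q||_1 \geq \varepsilon/2] \leq 2^{-c m \varepsilon^2}$, while Bayesian posterior consistency (again via Hoeffding applied to the log-likelihood) shows that $q^{(i_1,\ldots,i_m)}$ agrees with $\bar q$ up to $\ell^1$-distance $\varepsilon/2$ with the same type of exponential concentration, so the triangle inequality closes the estimate. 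An alternative and more direct route applies Azuma--Hoeffding coordinatewise to the bounded-increment martingale $\mathbb{E}_P[p_{k,m}\mid i_1,\ldots,i_j]$ and then takes a union bound over $k\in\mathcal{W}$, yielding the same exponential bound without explicitly invoking de Finetti. The delicate point in either approach is to ensure the constant $M$ depends only on $|\mathcal{W}|$ and not on $m$, which is where the finiteness of the alphabet (controlling both the Hoeffding constant and the number of union-bound terms) is essential.
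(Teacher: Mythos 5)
First, a point of comparison: the paper does not actually prove this lemma --- it is imported verbatim as Theorem A.1 of \cite{KR05} --- so there is no internal proof to measure your attempt against. Your opening reductions are nevertheless the natural first moves and are essentially correct: expanding $L_m$ and $\pi_{i_1,\ldots,i_m}$ in the dual basis, passing from trace norm to the $\ell^1$ distance between the empirical frequency $p_m$ and the vector $q_k := \tr(M_k\,\pi_{i_1,\ldots,i_m})$ via Eq.~(\ref{eq3.1.5}), and identifying $q$ with the one-step predictive distribution of the exchangeable classical law $P$ on ${\cal W}^{m+1}$. (A minor repair: $L_m$ need not be positive, so Eq.~(\ref{eq3.1.5}) must be extended from pairs of states to arbitrary Hermitian operators using the reconstruction formula; this is harmless.)

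The genuine gap is the ``purely classical statement'' you reduce everything to: it is false for general exchangeable laws. Take ${\cal W}=\{0,1\}$ and let $P$ be the uniform distribution on balanced strings of length $m+1$ (exactly $(m+1)/2$ ones). This $P$ is exchangeable; the empirical frequency of the first $m$ symbols is always within $O(1/m)$ of $(1/2,1/2)$, while the predictive distribution of the last symbol is a point mass determined by the parity of the observed count, so $\|p_m - q\|_1 \geq 1 - O(1/m)$ with probability one and no bound of the form $2^{-Mm\delta^2}$ can hold. Both of your proposed routes break on exactly this example. The finite de Finetti theorem controls the distance to a mixture of i.i.d.\ laws only when the exchangeable law extends to $N \gg m$ symbols (the error is of order $|{\cal W}|(m+1)/N$); a law living only on $m+1$ symbols, such as the balanced-string law, can be at total-variation distance $1-o(1)$ from every i.i.d.\ mixture. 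The Azuma route fails because the Doob martingale $\mathbb{E}_P[p_{k,m}\mid i_1,\ldots,i_j]$ does not have $O(1/m)$-bounded increments for non-i.i.d.\ exchangeable laws (revealing a single symbol can shift the posterior over the latent type by $\Omega(1)$), and in any case concentration of $p_m$ around its unconditional mean is not the statement needed, since $q$ is itself random. What makes the actual lemma true is precisely the structure your reduction discards: the outcome law of a product informationally complete measurement on a permutation-symmetric state is not an arbitrary exchangeable law, and the predictive distribution is the image under the POVM of a genuine post-selected density operator. Any correct proof has to exploit this quantum input, as the argument of \cite{KR05} does; a reduction that quantifies over all exchangeable classical laws cannot close.
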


\section{Useful Results} \label{usefulresult}
Defining the fidelity 
$F(\rho,\sigma)= \tr(\sqrt{\sqrt{\rho}\sigma\sqrt{\rho}})$ we
find \cite{Uhl76}
\begin{lemma} \label{relation3}
For every $\rho, \sigma \in {\cal D}({\cal H})$,
\begin{equation}
        1 - F(\rho, \sigma) \leq \frac{1}{2} || \rho - \sigma  ||_1 
        = \tr(\rho - \sigma)_+ \leq \sqrt{1 - F(\rho, \sigma)^2}.
\end{equation}
\end{lemma}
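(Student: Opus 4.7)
My plan is to prove the middle equality first (since it is purely algebraic) and then establish the two inequalities separately, reducing each to either a pure-state calculation or a classical calculation.

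The middle equality $\frac{1}{2}\|\rho-\sigma\|_1 = \tr(\rho-\sigma)_+$ is immediate from the Jordan decomposition. Writing $\rho-\sigma = P - Q$ with $P,Q\geq 0$ having orthogonal supports, the trace-zero condition $\tr(\rho-\sigma)=0$ forces $\tr(P)=\tr(Q)$, so $\|\rho-\sigma\|_1 = \tr(P)+\tr(Q) = 2\tr(P) = 2\tr(\rho-\sigma)_+$.

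For the upper bound $\frac{1}{2}\|\rho-\sigma\|_1 \leq \sqrt{1-F(\rho,\sigma)^2}$ I would invoke Uhlmann's theorem to pick purifications $\ket{\psi},\ket{\phi}\in{\cal H}\otimes{\cal H}$ of $\rho,\sigma$ with $|\braket{\psi}{\phi}| = F(\rho,\sigma)$. The operator $\ket{\psi}\bra{\psi}-\ket{\phi}\bra{\phi}$ is supported on the two-dimensional subspace $\text{span}\{\ket{\psi},\ket{\phi}\}$, where a direct Gram-matrix computation gives eigenvalues $\pm\sqrt{1-|\braket{\psi}{\phi}|^2}$; hence its trace norm equals $2\sqrt{1-F(\rho,\sigma)^2}$. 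Contractivity of $\|\cdot\|_1$ under the partial trace then gives $\|\rho-\sigma\|_1 \leq \|\ket{\psi}\bra{\psi}-\ket{\phi}\bra{\phi}\|_1 = 2\sqrt{1-F(\rho,\sigma)^2}$.

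For the lower bound $1-F(\rho,\sigma) \leq \frac{1}{2}\|\rho-\sigma\|_1$ I would reduce to the classical case via Fuchs' variational formula $F(\rho,\sigma) = \min_{\{E_k\}}\sum_k \sqrt{\tr(E_k\rho)\tr(E_k\sigma)}$, where the minimum runs over POVMs. Fixing an optimal POVM and setting $p_k:=\tr(E_k\rho)$, $q_k:=\tr(E_k\sigma)$, the identity $(\sqrt{p_k}-\sqrt{q_k})(\sqrt{p_k}+\sqrt{q_k}) = p_k-q_k$ combined with $|\sqrt{p_k}-\sqrt{q_k}|\leq \sqrt{p_k}+\sqrt{q_k}$ yields $(\sqrt{p_k}-\sqrt{q_k})^2 \leq |p_k-q_k|$. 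Summing and using $\sum_k p_k = \sum_k q_k = 1$ gives
\begin{equation}
1-F(\rho,\sigma) = 1-\sum_k \sqrt{p_k q_k} = \tfrac{1}{2}\sum_k(\sqrt{p_k}-\sqrt{q_k})^2 \leq \tfrac{1}{2}\sum_k|p_k-q_k| \leq \tfrac{1}{2}\|\rho-\sigma\|_1,
\end{equation}
where the last step is monotonicity of the trace distance under the measurement $\{E_k\}$. The only non-elementary ingredient in the whole argument is the Fuchs variational formula; everything else is a short calculation, so I do not anticipate any serious obstacle.
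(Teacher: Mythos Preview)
Your argument is correct and is essentially the standard proof of the Fuchs--van de Graaf inequalities. Note, however, that the paper does not actually give a proof of this lemma: it merely states the result and attributes it to the literature (the citation \cite{Uhl76} precedes the statement), so there is no paper-proof to compare against. Your write-up is a fine self-contained derivation; the only external input is the Fuchs variational formula, which you correctly flag, and the rest is elementary.
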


\begin{lemma} \label{monotonicity3}
For $A, B$ positive semidefinite and $\Lambda$ a trace-preserving completely positive map,
\begin{equation}
||\Lambda(A)||_1 \leq || A ||_1, \hspace{0.5 cm} \tr(\Lambda(A))_+ \leq \tr(A)_+, \hspace{0.5 cm} F(\Lambda(A), \Lambda(B)) \geq F(A, B).
\end{equation}
\end{lemma}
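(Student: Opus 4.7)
The plan is to prove each of the three inequalities separately, using standard techniques. For the first two, although the statement reads ``positive semidefinite'' it will be convenient (and matches how the lemma is actually invoked in Section \ref{proofmain}, on the difference $\rho^{\otimes n}-2^{yn}\sigma_n$) to prove them for Hermitian $A$, using the Jordan decomposition $A=A_+-A_-$ with $A_\pm\ge 0$ disjointly supported.

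For $\|\Lambda(A)\|_1\le\|A\|_1$, I would note that $\Lambda(A_\pm)\ge 0$ by positivity of $\Lambda$, then apply the triangle inequality
$$
\|\Lambda(A)\|_1\le \|\Lambda(A_+)\|_1+\|\Lambda(A_-)\|_1=\tr\Lambda(A_+)+\tr\Lambda(A_-)=\tr A_++\tr A_-=\|A\|_1,
$$
where the middle equalities use trace preservation. For $\tr(\Lambda(A))_+\le\tr(A)_+$, the cleanest route is via duality: let $P$ be the projector onto $\mathrm{supp}(\Lambda(A)_+)$, so that $\tr(\Lambda(A))_+=\tr(P\Lambda(A))=\tr(\Lambda^*(P)A)$, where $\Lambda^*$ is the Hilbert--Schmidt adjoint. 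Because $\Lambda$ is trace-preserving and positive, $\Lambda^*$ is unital and positive, hence $0\le\Lambda^*(P)\le\id$. Then
$$
\tr(\Lambda^*(P)A)=\tr(\Lambda^*(P)A_+)-\tr(\Lambda^*(P)A_-)\le\tr(\Lambda^*(P)A_+)\le\tr A_+=\tr(A)_+ .
$$

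For the fidelity bound $F(\Lambda(A),\Lambda(B))\ge F(A,B)$ (where now $A,B\ge0$, as required by the definition of $F$ in the Notation section), I would combine the Stinespring dilation with Uhlmann's theorem. Write $\Lambda(X)=\tr_E(VXV^*)$ for an isometry $V:\mathcal H\to\mathcal H\otimes\mathcal E$. Choose purifications $\ket{\psi_A},\ket{\psi_B}\in\mathcal H\otimes\mathcal H'$ of $A$ and $B$ (using the non-normalized form $\ket{\psi_X}=(\sqrt{X}\otimes\id)\ket{\phi^+}$) that achieve Uhlmann's maximum, so $|\braket{\psi_A}{\psi_B}|=F(A,B)$. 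Then $(V\otimes\id)\ket{\psi_A}$ and $(V\otimes\id)\ket{\psi_B}$ are purifications of $\Lambda(A)$ and $\Lambda(B)$ respectively, now with purifying system $\mathcal E\otimes\mathcal H'$. Uhlmann's theorem applied to $\Lambda(A),\Lambda(B)$ gives
$$
F(\Lambda(A),\Lambda(B))\ge \bigl|\bra{\psi_A}(V^*V)\otimes\id\ket{\psi_B}\bigr|=|\braket{\psi_A}{\psi_B}|=F(A,B),
$$
since $V^*V=\id$.

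None of these three items presents a real obstacle: they are classical facts about CPTP maps. The only point requiring a little care is the second inequality, where one must resist the temptation to diagonalize $\Lambda(A)$ on the left (losing information about $\Lambda^*$) and instead push $P$ through the adjoint; and in the third, one must work with the unnormalized Uhlmann purifications so that the fidelity is handled for sub-normalized $A,B\ge 0$ rather than only for density matrices.
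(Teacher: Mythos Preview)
Your proof is correct. Note, however, that the paper does not actually supply a proof of Lemma~\ref{monotonicity3}: it is stated in Appendix~\ref{usefulresult} as a standard fact and then invoked in Section~\ref{proofmain}. So there is no ``paper's own proof'' to compare against; you have filled in what the authors left implicit.

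Your observation that the lemma is applied in the paper to genuinely Hermitian (not positive semidefinite) operators --- namely to $\rho^{\otimes n}-2^{yn}\sigma_n$ when arguing that the optimal $\sigma_n$ may be taken permutation-symmetric --- is well taken; for $A\ge 0$ the first two inequalities are trivial equalities by trace preservation, so the Hermitian case is the only one with content. All three arguments you give (Jordan decomposition plus triangle inequality; duality via $0\le\Lambda^*(P)\le\id$; Stinespring plus Uhlmann with unnormalized purifications) are the standard ones and go through without issue.
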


Let $E: {\cal D}({\cal H}) \rightarrow \mathbb{R}_+$. We say $E$ is \textit{asymptotically continuous} if for every $\rho, \sigma \in {\cal D}({\cal H})$,
\begin{equation}
|E(\rho) - E(\sigma)| \leq \log(\dim({\cal H}))f(|| \rho - \sigma ||_1),
\end{equation}
for a real-valued function $f: \mathbb{R}_+ \rightarrow \mathbb{R}_+$ \textit{independent} of $\dim({\cal H})$ and such that $\lim_{x \rightarrow 0} f(x) = 0$. Although not strictly needed, we will also demand that $f$ is monotonic increasing, in order to simplify some of the proofs.  

The next Lemma is due to Synak-Radtke and Horodecki \cite{SH06} and Christandl \cite{Chr06}. 
\begin{lemma} \label{asympcont}
\cite{SH06, Chr06} For every family of sets $\{ {\cal M}_n \}_{n \in \mathbb{N}}$ satisfying properties \ref{cond1}-\ref{cond4}, $E_{{\cal M}_n}$ and $E_{{\cal M}}^{\infty}$, given by Eqs. (\ref{relent1}) and (\ref{regu1}), respectively, are asymptotically continuous. 
\end{lemma}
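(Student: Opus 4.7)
\textit{Plan.} The proof will follow the argument of Donald--Horodecki and Christandl \cite{SH06,Chr06} for the relative entropy of entanglement, which only uses properties of $E_{\cal M}$ that our axioms guarantee, namely that each ${\cal M}_n$ is convex and closed, contains a full--rank ``base'' state $\sigma^{\otimes n}$, is closed under partial traces, and is closed under tensor products. Compatibility with property \ref{cond5} is not needed.

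\textit{Step 1 (Structural properties of $E_{{\cal M}_n}$).} First I would establish three elementary ingredients. (i) \emph{Boundedness.} From property \ref{cond2}, $E_{{\cal M}_n}(\rho)\le S(\rho\|\sigma^{\otimes n})\le n\log(1/\lambda_{\min}(\sigma))=O(\log\dim({\cal H}^{\otimes n}))$. (ii) \emph{Convexity.} Joint convexity of $S(\cdot\|\cdot)$ together with convexity of ${\cal M}_n$ (property \ref{cond1}) yields $E_{{\cal M}_n}(\sum_i p_i\rho_i)\le\sum_i p_iE_{{\cal M}_n}(\rho_i)$. (iii) \emph{Donald-type quasi-concavity.} Using Donald's identity $\sum_i p_iS(\rho_i\|\tau)=S(\bar\rho\|\tau)+\sum_i p_iS(\rho_i\|\bar\rho)$ with $\bar\rho=\sum_i p_i\rho_i$ and $\tau\in{\cal M}_n$ optimal for $\bar\rho$, together with the Holevo bound $\sum_i p_i S(\rho_i\|\bar\rho)\le H(\{p_i\})$, gives $\sum_i p_iE_{{\cal M}_n}(\rho_i)\le E_{{\cal M}_n}(\bar\rho)+H(\{p_i\})$.

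\textit{Step 2 (Asymptotic continuity of $E_{{\cal M}_n}$).} For $\rho_1,\rho_2\in{\cal D}({\cal H}^{\otimes n})$ with $\epsilon=\tfrac12\|\rho_1-\rho_2\|_1$, I would use the standard decomposition $\rho_i=(1-\epsilon)\omega+\epsilon\Delta_i$, where $\omega\propto\rho_1\wedge\rho_2$ and $\Delta_i$ are states. Applying the convexity of Step~1(ii) to $\rho_1$ as a two-point mixture and the Donald bound of Step~1(iii) to $\rho_2$ (and symmetrically) produces
\begin{equation}
|E_{{\cal M}_n}(\rho_1)-E_{{\cal M}_n}(\rho_2)|\le\epsilon\bigl(E_{{\cal M}_n}(\Delta_1)+E_{{\cal M}_n}(\Delta_2)\bigr)+2h(\epsilon),
\end{equation}
which, after inserting the bound from Step~1(i), delivers $|E_{{\cal M}_n}(\rho_1)-E_{{\cal M}_n}(\rho_2)|\le\log\dim({\cal H}^{\otimes n})\,f(\epsilon)$ with $f(\epsilon)\to 0$.

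\textit{Step 3 (Asymptotic continuity of $E_{\cal M}^{\infty}$).} For $\rho,\tilde\rho\in{\cal D}({\cal H})$ with $\tfrac12\|\rho-\tilde\rho\|_1=\epsilon$, I would \emph{not} apply Step~2 to $\rho^{\otimes n},\tilde\rho^{\otimes n}$ (that fails because $\|\rho^{\otimes n}-\tilde\rho^{\otimes n}\|_1$ can be $\Theta(1)$). Instead, with the same single-copy decomposition $\rho=(1-\epsilon)\omega+\epsilon\Delta$, $\tilde\rho=(1-\epsilon)\omega+\epsilon\tilde\Delta$, I would expand $\rho^{\otimes n}$ as a binomially-weighted symmetrized mixture of $\omega^{\otimes n-k}\otimes\Delta^{\otimes k}$ and apply the convexity of Step~1(ii) followed by the Donald bound of Step~1(iii) term-by-term, using property \ref{cond4} to keep the resulting ``core'' elements inside ${\cal M}_n$. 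The correction produced is the entropy of a $\mathrm{Bin}(n,\epsilon)$ distribution, which is $O(\log n)$ rather than $O(nh(\epsilon))$, and after dividing by $n$ and taking $n\to\infty$ it disappears. A symmetric manipulation with $\tilde\rho^{\otimes n}$, combined with the boundedness of $E_{\cal M}^{\infty}(\Delta),E_{\cal M}^{\infty}(\tilde\Delta)$ by $\log(1/\lambda_{\min}(\sigma))$, gives $|E_{\cal M}^{\infty}(\rho)-E_{\cal M}^{\infty}(\tilde\rho)|\le 2\epsilon\log(1/\lambda_{\min}(\sigma))+g(\epsilon)$ with $g(\epsilon)\to 0$, as required.

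\textit{Main obstacle.} The delicate point is Step~3. The naive inheritance of continuity from Step~2 diverges, and the right move is to perform the Donald/convexity decomposition at the \emph{single-copy} level first and only then take tensor powers, so that the combinatorial correction term remains sublinear in~$n$. Verifying that the binomial-weighted term $\tfrac1n\sum_k\binom{n}{k}(1-\epsilon)^{n-k}\epsilon^k E_{{\cal M}_n}(\omega^{\otimes n-k}\otimes\Delta^{\otimes k}_{\mathrm{sym}})$ converges to $(1-\epsilon)E_{\cal M}^{\infty}(\omega)+\epsilon E_{\cal M}^{\infty}(\Delta)$ as $n\to\infty$, uniformly enough to close the argument, is the main technical estimate.
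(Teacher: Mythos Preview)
Your proposal is correct and essentially reproduces the very arguments the paper invokes: the paper does not prove the lemma at all but merely points to \cite{SH06} for $E_{{\cal M}_n}$ (observing that their proof for convex sets containing the maximally mixed state goes through verbatim with $\sigma^{\otimes n}$ in its place, using only properties~\ref{cond1}--\ref{cond2}) and to Proposition~3.23 of \cite{Chr06} for $E_{\cal M}^{\infty}$ (noting that Christandl's argument for $E_R^{\infty}$ carries over under properties~\ref{cond1}--\ref{cond4}). Your Steps~1--2 are exactly the Synak-Radtke--Horodecki argument, and your Step~3 is the Christandl strategy of decomposing at the single-copy level and controlling the binomial combinatorics; you have also correctly isolated the one genuinely delicate point, namely that the entropy correction must stay sublinear, which is precisely why the naive route through Step~2 fails.
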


In Ref. \cite{SH06} it was shown that the minimum relative entropy over any convex set that includes the maximal mixed state is asymptotically continuous. It is simple to check that their proof goes through if instead of the maximally mixed state, the set contains $\sigma^{\otimes n}$, for a full rank state $\sigma$. For $E_{{\cal M}_n}$ the lemma then follows from properties \ref{cond1} and \ref{cond2}. In Proposition 3.23 of Ref. \cite{Chr06}, in turn, it was proven that $E_R^{\infty}$ is asymptotically continuous. It is straightforward to note that the proof actually applies to the regularized minimum relative entropy over any family of sets satisfying properties \ref{cond1}-\ref{cond4}. Moreover, the functions $f$ used in \cite{SH06} and \cite{Chr06} turn out to be monotonic increasing.

The next two lemmata will play an important role in the proof of Proposition \ref{relenteqrob}. The first, due to Ogawa and Nagaoka, appeared in Ref. \cite{ON00} as Theorem 1 and was the key element for establishing the strong converse of quantum Stein's Lemma.

\begin{lemma} \label{ON}
\cite{ON00} Given two quantum states $\rho, \sigma \in {\cal D}({\cal H})$ such that $\text{supp}(\rho) \subseteq \text{supp}(\sigma)$ and a real number $\lambda$, 
\begin{equation}
\tr(\rho^{\otimes n} - 2^{\lambda n}\sigma^{\otimes n})_+ \leq 2^{- n (\lambda s - \psi(s))},
\end{equation}
for every $s \in [0, 1]$. The function $\psi(s)$ is defined as 
\begin{equation}
\psi(s) := \log( \tr( \rho^{1 + s}\sigma^{-s})).
\end{equation}
\end{lemma}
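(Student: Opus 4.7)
My plan is to follow the original strategy of Ogawa and Nagaoka, which essentially mirrors in simpler form the argument given in the proof of Lemma \ref{ogawanagaokaadaptation} above; the tensor-product structure of $\rho^{\otimes n}$ and $\sigma^{\otimes n}$ here obviates the need for the Schur-basis decomposition used there. The first step is to choose rank-one eigenprojectors $\{E_i\}$ of the Hermitian operator $R_n := \rho^{\otimes n} - 2^{\lambda n}\sigma^{\otimes n}$, with corresponding eigenvalues $\alpha_i$, and to set $p_n(i) := \tr(E_i\rho^{\otimes n})$, $q_n(i) := \tr(E_i\sigma^{\otimes n})$, so that $p_n$ and $q_n$ are classical probability distributions and $\alpha_i = p_n(i) - 2^{\lambda n}q_n(i)$. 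This immediately gives
\[
\tr(R_n)_+ \,=\, \sum_{i:\alpha_i > 0}\alpha_i \,\leq\, \sum_{i:p_n(i) > 2^{\lambda n}q_n(i)} p_n(i) \,=\, \Pr_{p_n}\!\left(i : \tfrac{1}{n}\log\tfrac{p_n(i)}{q_n(i)} > \lambda\right),
\]
reducing the problem to a classical large-deviation estimate (here we used $\alpha_i \leq p_n(i)$, which holds since $q_n(i) \geq 0$).

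Next I would apply Cram\'er's theorem to the random variable $X(i) := \log(p_n(i)/q_n(i))$ under $p_n$, obtaining for every $s \in [0,1]$
\[
\Pr_{p_n}\!\bigl(X > n\lambda\bigr) \,\leq\, 2^{-\left(n\lambda s \,-\, \log\sum_i p_n(i)^{1+s}q_n(i)^{-s}\right)}.
\]
To pass from this classical sum to the desired quantum quantity $\tr(\rho^{1+s}\sigma^{-s})^n$, I would invoke the fact that $g_s(\rho,\sigma) := \tr(\rho^{1+s}\sigma^{-s})$ is monotone decreasing under trace-preserving CP maps for every $s \in (0,1)$ -- a consequence of Ando's joint convexity of $\tr(\rho^{s}\sigma^{1-s})$ for $-1 < s < 0$. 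Applying this to the pinching channel $\cE(X) := \sum_i E_i X E_i$, which sends $\rho^{\otimes n} \mapsto \sum_i p_n(i) E_i$ and $\sigma^{\otimes n} \mapsto \sum_i q_n(i) E_i$ (since the $E_i$ are rank-one), and using the multiplicativity $g_s(\rho^{\otimes n},\sigma^{\otimes n}) = g_s(\rho,\sigma)^n$, one obtains
\[
\sum_i p_n(i)^{1+s}q_n(i)^{-s} \,=\, g_s\!\bigl(\cE(\rho^{\otimes n}),\cE(\sigma^{\otimes n})\bigr) \,\leq\, g_s(\rho^{\otimes n},\sigma^{\otimes n}) \,=\, 2^{n\psi(s)}.
\]
Combining the three displays delivers the claimed bound $\tr(R_n)_+ \leq 2^{-n(\lambda s - \psi(s))}$ for every $s \in [0,1]$.

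The main obstacle is the last step: because $\rho^{\otimes n}$ and $\sigma^{\otimes n}$ generally do not commute, the basis $\{E_i\}$ diagonalizing $R_n$ does not simultaneously diagonalize them, so the monotonicity of $g_s$ under the pinching channel is not elementary. Establishing this monotonicity in the genuinely non-commutative regime is the substantive input of the Lieb/Ando matrix inequality, and it is precisely what elevates the argument from a routine classical Chernoff-type bound into its quantum counterpart. The support condition $\mathrm{supp}(\rho) \subseteq \mathrm{supp}(\sigma)$ assumed in the statement ensures that $\sigma^{-s}$ (and hence $g_s$) is well defined on the relevant subspace throughout.
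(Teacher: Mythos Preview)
Your proof is correct. The paper does not actually supply a proof of this lemma --- it is stated in Appendix~\ref{usefulresult} with a citation to Ogawa and Nagaoka \cite{ON00} --- so there is nothing in the paper to compare against directly. Your argument is precisely the Ogawa--Nagaoka strategy, and it is in fact a streamlined version of the paper's own proof of Lemma~\ref{ogawanagaokaadaptation}: there the splitting into two terms via the auxiliary parameter $\mu$ and the Schur-basis machinery were needed because the second hypothesis was not a tensor power, whereas here the product structure of $\sigma^{\otimes n}$ lets you apply the Ando/Lieb monotonicity of $g_s(\rho,\sigma)=\tr(\rho^{1+s}\sigma^{-s})$ under the pinching channel in a single stroke. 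Your identification of that monotonicity (equivalently, data processing for the Petz--R\'enyi divergence of order $1+s\in(1,2]$) as the one genuinely non-commutative ingredient is exactly right, and the support hypothesis ensures that $q_n(i)=0$ forces $p_n(i)=0$, so the ratios $p_n(i)/q_n(i)$ are well defined where needed.
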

Note that $\psi(0) = 0$ and $\psi'(0) = S(\rho || \sigma)$. Hence, if $\lambda > S(\rho || \sigma)$, $\tr(\rho^{\otimes n} - 2^{\lambda n}\sigma^{\otimes n})_+$ goes to zero exponentially fast in $n$.

The next Lemma, due to Datta and Renner \cite{DR08}, appeared in Ref. \cite{DR08} as Lemma 5 and is used in the proofs of Propositions \ref{relenteqrob} and \ref{maincompact}.

\begin{lemma} \label{DR}
\cite{DR08} Let $\rho \in {\cal D}({\cal H})$ and $Y, \Delta$ be positive semidefinite operators such that $\rho \leq Y + \Delta$ and $\tr(\Delta) < 1$. Then there exists a state $\tilde{\rho} \in {\cal D}({\cal H})$ such that 
\begin{equation} \label{EqLemma0293820938}
\tilde \rho \leq (1 - \tr(\Delta))^{-1} Y,
\end{equation}
and
\begin{equation}
F(\rho, \tilde \rho) \geq 1 - \tr(\Delta), \hspace{0.3 cm} || \rho - \tilde \rho ||_1 \leq 4 \sqrt{\tr(\Delta)}.
\end{equation}
\end{lemma}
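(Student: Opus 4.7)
The plan is to construct $\tilde\rho$ as a normalized ``filtered'' version of $\rho$. I set $A := Y + \Delta$ and write $P_A$ for the projector onto $\text{supp}(A)$; since $\rho \leq A$, we have $\text{supp}(\rho) \subseteq \text{supp}(A)$. Define $\Lambda := A^{-1/2} Y A^{-1/2}$, with the inverse taken as a generalized inverse on $\text{supp}(A)$. Because $\Lambda = P_A - A^{-1/2}\Delta A^{-1/2}$, one has $0 \leq \Lambda \leq P_A \leq \id$ and, using $A^{-1/2}\rho A^{-1/2} \leq P_A \leq \id$,
\begin{equation}
\tr(\Lambda\rho) = \tr(\rho) - \tr\bigl(\Delta \cdot A^{-1/2}\rho A^{-1/2}\bigr) \geq 1 - \tr(\Delta).
\end{equation}
I then set $W := Y^{1/2} A^{-1/2}$, so that $W^*W = \Lambda$ and $WAW^* = Y$, and define the candidate state $\tilde\rho := W\rho W^*/\tr(\Lambda\rho)$. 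The operator bound is immediate: $W\rho W^* \leq WAW^* = Y$ gives $\tilde\rho \leq Y/\tr(\Lambda\rho) \leq Y/(1-\tr\Delta)$, establishing Eq. (\ref{EqLemma0293820938}).

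The heart of the argument will be the fidelity estimate $F(\rho,\tilde\rho) \geq 1 - \tr\Delta$. I would apply Uhlmann's theorem with the canonical purification $\ket{\psi_\rho} := (\sqrt\rho \otimes \id)\ket{\Phi}$, $\ket{\Phi} = \sum_k\ket{k,k}$, and observe that every purification of $\tilde\rho$ sharing the same ancilla has the form $(W \otimes U)\ket{\psi_\rho}/\sqrt{\tr(\Lambda\rho)}$ for some unitary $U$. Maximizing $|\bra{\psi_\rho} (W\otimes U) \ket{\psi_\rho}|$ over $U$ yields the exact identity
\begin{equation}
F(\rho,\tilde\rho) = \Vert \sqrt\rho\, W\, \sqrt\rho \Vert_1 \big/ \sqrt{\tr(\Lambda\rho)}.
\end{equation}
Using the cyclic identity $\Vert AB\Vert_1 = \Vert BA\Vert_1$, the numerator equals $\Vert W\rho\Vert_1 = \tr\sqrt{\rho W^*W\rho} = \tr\sqrt{\rho\Lambda\rho}$. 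Because $\rho\Lambda\rho$ and $\Lambda^{1/2}\rho^2\Lambda^{1/2}$ share the same non-zero eigenvalues (they are $M^*M$ and $MM^*$ for $M = \Lambda^{1/2}\rho$), Araki--Lieb--Thirring at exponent $1/2$ gives
\begin{equation}
\tr\sqrt{\rho\Lambda\rho} = \tr\bigl(\Lambda^{1/2}\rho^2\Lambda^{1/2}\bigr)^{1/2} \geq \tr(\rho \sqrt\Lambda) \geq \tr(\rho\Lambda) \geq 1 - \tr\Delta,
\end{equation}
where the second inequality uses $\sqrt\Lambda \geq \Lambda$ (valid since $0 \leq \Lambda \leq \id$). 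Dividing by $\sqrt{\tr(\Lambda\rho)} \leq 1$ then gives $F(\rho,\tilde\rho) \geq 1 - \tr\Delta$.

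The trace-norm estimate follows by combining the fidelity bound with Lemma \ref{relation3}: $\Vert \rho - \tilde\rho\Vert_1 \leq 2\sqrt{1 - F(\rho,\tilde\rho)^2} \leq 2\sqrt{1-(1-\tr\Delta)^2} \leq 2\sqrt{2\tr\Delta} \leq 4\sqrt{\tr\Delta}$. The main obstacle will be the fidelity estimate: although the construction of $\tilde\rho$ is natural and the operator bound drops out trivially, the fact that $W$ is not self-adjoint prevents a direct application of gentle-measurement arguments, and one is forced to combine the Uhlmann optimization, the trace-norm cyclicity identity, and the Araki--Lieb--Thirring inequality to reduce $F(\rho,\tilde\rho)$ to the explicitly controllable quantity $\tr(\Lambda\rho)$.
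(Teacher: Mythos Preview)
Your construction of $\tilde\rho = W\rho W^*/\tr(\Lambda\rho)$ with $W = Y^{1/2}(Y+\Delta)^{-1/2}$ is exactly the paper's, and the operator bound $\tilde\rho \leq (1-\tr\Delta)^{-1}Y$ as well as the passage from the fidelity bound to the trace-norm bound via Lemma~\ref{relation3} are both correct. Your Uhlmann computation $F(\rho,\tilde\rho)=\|\sqrt\rho\,W\sqrt\rho\|_1/\sqrt{\tr(\Lambda\rho)}$ is also fine.

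The gap is the ``cyclic identity $\|AB\|_1=\|BA\|_1$'': this is simply false. Cyclicity holds for the trace and for nonzero \emph{eigenvalues}, but not for singular values. A counterexample inside your own setup: take $A=\mathrm{diag}(2,1)$, $Y=\mathrm{diag}(1+\varepsilon,1)$, $\Delta=\mathrm{diag}(1-\varepsilon,0)$ (so $\tr\Delta<1$), and $\rho=|v\rangle\langle v|$ with $|v\rangle=(|0\rangle+|1\rangle)/\sqrt2$. Then $W=Y^{1/2}A^{-1/2}$ is diagonal, $\sqrt\rho=\rho$, and
\[
\|\sqrt\rho\,W\sqrt\rho\|_1=|\langle v|W|v\rangle|,\qquad \|W\rho\|_1=\|W|v\rangle\|=\sqrt{\langle v|W^2|v\rangle}.
\]
Since $|v\rangle$ is not an eigenvector of $W$, strict inequality $|\langle v|W|v\rangle|<\|W|v\rangle\|$ holds; numerically (with $\varepsilon$ small) one gets about $0.87$ versus $0.88$. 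Thus your chain actually bounds $\|W\rho\|_1=\tr\sqrt{\rho\Lambda\rho}$ from below by $1-\tr\Delta$ (and that part, including the Araki--Lieb--Thirring step, is correct), but this does \emph{not} lower-bound $\|\sqrt\rho\,W\sqrt\rho\|_1$, which is what the fidelity equals. You correctly diagnose that the non-self-adjointness of $W$ blocks the gentle-measurement route, but the workaround you propose does not work.

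For comparison, the paper does not attempt a self-contained proof of the fidelity estimate at all: it establishes the operator bound and $\tr(T^\dagger T\rho)\geq 1-\tr\Delta$ exactly as you do, and then invokes Lemma~5 of Datta--Renner \cite{DR08} for $F(\rho,T\rho T^\dagger)\geq 1-\tr\Delta$. If you want a self-contained argument, you will need a different lower bound on $\|\sqrt\rho\,W\sqrt\rho\|_1$ than the one you wrote.
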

\begin{proof}
Let $T := Y^{1/2}(Y + \Delta)^{-1/2}$, $\rho' := T\rho T^{\cal y}$ and set $\tilde \rho := \rho' / \tr(\rho')$. As $\rho \leq Y + \Delta$, we find
\begin{equation} 
\rho' = T \rho T^{\cal y} \leq  Y
\end{equation}
and hence 
\begin{equation}\label{eqlemmaDR1}
\tilde{\rho} = \tr(\rho')\rho' \leq \tr(T^{\cal y}T \rho) Y.
\end{equation}
Let us show that
\begin{equation} \label{eqpsdjldshfs}
\tr(T^{\cal y}T \rho) \geq 1 - \tr(\Delta). 
\end{equation}
Eq. (\ref{EqLemma0293820938}) then follows from Eqs. (\ref{eqlemmaDR1},\ref{eqpsdjldshfs}). Note that 
\begin{equation}
T^{\cal y}T = (Y + \Delta)^{-1/2}Y (Y + \Delta)^{-1/2} \leq \id.
\end{equation}
Then, using the inequality $\rho \leq Y + \Delta$,
\begin{equation}
\tr((\id - T^{\cal y}T)\rho) \leq \tr(Y + \Delta) - \tr((Y + \Delta)T^{\cal y}T) = \tr(\Delta),
\end{equation}
from which Eq. (\ref{eqpsdjldshfs}) follows. 

In the proof of Lemma 5 of Ref. \cite{DR08} it is proven that $F(\rho, \rho') \geq 1 - \tr(\Delta)$. Hence
\begin{equation}
F(\rho, \tilde \rho) = \tr(\rho')^{-1/2}F(\rho, \rho') \geq F(\rho, \rho') \geq 1 - \tr(\Delta),
\end{equation}
where we used that $\tr(\rho') = \tr(T^{\cal y}T \rho) \leq 1$, which follows from $T^{\cal y}T \leq \id$. The inequality for the trace norm follows from Eq. (\ref{relation3}).
\end{proof}

We also make use of the following simple lemma.
 
\begin{lemma} \label{supversusconvex}
Let $\ket{\Psi} \in {\cal H}$ be such that 
$\ket{\Psi} := \sum_{k \in {\cal X}} \ket{\psi_k}$. Then 
\begin{equation}
\ket{\Psi}\bra{\Psi} \leq |{\cal X}| \sum_{k \in {\cal X}} \ket{\psi_k}\bra{\psi_k}
\end{equation}
\end{lemma}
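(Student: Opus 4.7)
The plan is to reduce the operator inequality to a standard application of Cauchy–Schwarz by testing against an arbitrary vector. For any $\ket{\phi} \in \mathcal{H}$, I would compute
\begin{equation}
\bra{\phi}\ket{\Psi}\bra{\Psi}\ket{\phi} = \Big| \sum_{k \in \mathcal{X}} \braket{\phi}{\psi_k} \Big|^2,
\end{equation}
and then apply Cauchy–Schwarz (treating $\sum_k \braket{\phi}{\psi_k}$ as an inner product of $(1,1,\dots,1)$ with $(\braket{\phi}{\psi_k})_k$ in $\mathbb{C}^{|\mathcal{X}|}$) to obtain the upper bound $|\mathcal{X}| \sum_k |\braket{\phi}{\psi_k}|^2$. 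The latter equals $|\mathcal{X}| \bra{\phi}\bigl( \sum_k \ket{\psi_k}\bra{\psi_k}\bigr)\ket{\phi}$, and since $\ket{\phi}$ is arbitrary, the operator inequality follows.

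Equivalently, and more directly, I would verify the algebraic identity
\begin{equation}
|\mathcal{X}| \sum_{k \in \mathcal{X}} \ket{\psi_k}\bra{\psi_k} - \ket{\Psi}\bra{\Psi} = \tfrac{1}{2}\sum_{j,k \in \mathcal{X}} \bigl(\ket{\psi_j} - \ket{\psi_k}\bigr)\bigl(\bra{\psi_j} - \bra{\psi_k}\bigr),
\end{equation}
expanding the right hand side and using $\ket{\Psi} = \sum_k \ket{\psi_k}$. The right hand side is manifestly a nonnegative sum of rank-one positive semidefinite operators, which settles the inequality.

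There is no real technical obstacle here: the statement is essentially Cauchy–Schwarz recast as a matrix inequality. I would present whichever of the two proofs is more compact; the pairwise-difference identity has the advantage of giving equality-case information (equality iff all $\ket{\psi_k}$ are equal) for free, while the Cauchy–Schwarz route is a one-liner once the test vector $\ket{\phi}$ is fixed.
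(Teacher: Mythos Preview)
Your proposal is correct and follows essentially the same approach as the paper---reduce the operator inequality to a scalar one by testing against an arbitrary vector---though your direct application of Cauchy--Schwarz is cleaner than the paper's route, which first passes to the double sum $\sum_{k,k'}|\braket{\theta}{\psi_k}||\braket{\theta}{\psi_{k'}}|$ via the triangle inequality and then invokes Jensen's inequality for the square root. Your second proof via the pairwise-difference identity is a pleasant alternative the paper does not give, and the equality-case observation is a bonus.
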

\begin{proof}
For every $\ket{\theta} \in {\cal H}$, $|\bra{\theta}(\ket{\psi_k}\bra{\psi_k'})\ket{\theta}|  = |\braket{\theta}{\psi_k}| |\braket{\theta}{\psi_k'}|$. Then,
\begin{eqnarray}
        \bra{\theta}\left( \ket{\Psi}\bra{\Psi}   \right)\ket{\theta} 
        &=& \left | \sum_{k, k'} \bra{\theta}\left( \ket{\psi_k}\bra{\psi_k'}\right)\ket{\theta} 
        \right | \nonumber \\
        & \leq& |{\cal X}|^2 \sum_{k, k'} \frac{1}{|{\cal X}|^2}\sqrt{\bra{\theta}
        \left( \ket{\psi_k}\bra{\psi_k}  \right)\ket{\theta}   \bra{\theta}\left( \ket{\psi_k'}
        \bra{\psi_k'}  \right)\ket{\theta}}   \nonumber \\
        &\leq&  |{\cal X}|^2  \sqrt{\sum_{k, k'}  \frac{1}{|{\cal X}|^2} \bra{\theta}\left( 
        \ket{\psi_k}\bra{\psi_k}  \right)\ket{\theta}   \bra{\theta}\left( \ket{\psi_k'}
        \bra{\psi_k'} \right)\ket{\theta}}  \nonumber \\
        &=& |{\cal X}| \bra{\theta} \left( \sum_{k \in {\cal X}} \ket{\psi_k}\bra{\psi_k}\right)
        \ket{\theta},
\end{eqnarray}
where the inequality in the third line follows from Jensen's inequality. 
\end{proof}

The final lemma, adapted from lemma 4.1.2 of \cite{Han03}, is used in the proof of Lemma \ref{ogawanagaokaadaptation}.
\begin{lemma} \label{hancover}
Given two probability distributions $p, q : \{1, ..., n \} \rightarrow \mathbb{R}$ and real numbers $0 \leq \lambda_i \leq 1$, $i \in \{1, ..., n  \}$, and $\mu$,
\begin{equation}
\sum_{i=1}^n \lambda_i (p(i) - 2^{\mu}q(i)) \leq \Pr_{\{ p \}} \left( i : \log \frac{p(i)}{q(i)} \geq \mu \right).
\end{equation}
\end{lemma}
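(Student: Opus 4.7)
The plan is simply to split the sum on the left-hand side according to the sign of $p(i)-2^{\mu}q(i)$, which is exactly the dichotomy that defines the event on the right-hand side. Set
\begin{equation}
I_+ := \left\{\, i : \log\frac{p(i)}{q(i)} \geq \mu \,\right\} = \{\, i : p(i) \geq 2^{\mu}q(i)\,\}, \qquad I_- := \{1,\ldots,n\}\setminus I_+.
\end{equation}
The right-hand side of the claimed inequality is then just $\sum_{i\in I_+} p(i)$.

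For $i\in I_-$, the summand $p(i)-2^{\mu}q(i)$ is strictly negative, and since $\lambda_i\geq 0$ these contributions to the left-hand side are $\leq 0$ and may be discarded. For $i\in I_+$, the summand is non-negative, and the bound $\lambda_i\leq 1$ gives $\lambda_i\bigl(p(i)-2^{\mu}q(i)\bigr) \leq p(i)-2^{\mu}q(i)$; one more drop using $q(i)\geq 0$ yields $\leq p(i)$. Summing these two estimates over $I_+\cup I_-$ produces $\sum_{i\in I_+} p(i)$, which is the right-hand side.

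There is no real obstacle here: the statement is a two-line case split exploiting nothing beyond $0\leq\lambda_i\leq 1$, $q(i)\geq 0$, and the definition of the event $\{\log(p/q)\geq \mu\}$. The only small subtlety is that $p$ is explicitly a probability distribution (so $\sum_{i\in I_+} p(i)$ really equals $\Pr_{\{p\}}(I_+)$), whereas the bound on $\lambda_i q(i)$ only needs $q(i)\geq 0$; the lemma would in fact hold for any non-negative $q$.
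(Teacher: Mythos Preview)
Your proof is correct and is essentially the same argument as the paper's: both split the index set by the sign of $p(i)-2^{\mu}q(i)$, use $\lambda_i\leq 1$ and $q(i)\geq 0$ on the positive part, and drop the negative part via $\lambda_i\geq 0$. The paper merely runs the chain of inequalities from the right-hand side downward instead of from the left-hand side upward, but the content is identical.
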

\begin{proof}
The lemma can be proved by the following chain of inequalities
\begin{eqnarray}
\Pr_{\{ p \}} \left( i : \log \frac{p(i)}{q(i)} \geq \mu \right) &=& \sum_{i : p(i) \geq 2^{\mu}q(i)} p(i) \nonumber \\ &\geq& \sum_{i : p(i) \geq 2^{\mu}q(i)} \lambda_i p(i) \nonumber \\ &\geq& \sum_{i : p(i) \geq 2^{\mu}q(i)} \lambda_i (p(i) - 2^{\mu}q(i)) \nonumber \\ &\geq&  \sum_{i} \lambda_i (p(i) - 2^{\mu}q(i)).
\end{eqnarray}
In the first inequality we used that $0 \leq \lambda_i \leq 1$, in the second that $q(i) \geq 0$, and in the last that we add negative terms corresponding to the $i$'s for which $p(i) < 2^{\mu}q(i)$.
\end{proof}

\vspace{1 cm}

\end{document}